  \theoremstyle{plain}
  \newtheorem{theorem}{Theorem}[section]
  \newtheorem{lemma}[theorem]{Lemma}
  \newtheorem{claim}{Claim}
  \newtheorem{corollary}[theorem]{Corollary}
  \newtheorem{proposition}{Proposition}
  \newtheorem{definition}[theorem]{Definition}
\newcommand{\distr}{\mathcal{D}}
\newcommand{\rel}{{\cal R}}
\newcommand{\lang}{{\cal L}}
\newcommand{\verify}{\mathsf{Verify}}
\newcommand{\vrfr}{\verify}
\newcommand{\quant}{{\cal Q}}
\newcommand{\view}{\mathsf{View}}
\newcommand{\receiver}{\mathsf{Receiver}}
\newcommand{\accept}{\mathsf{Accept}}
\newcommand{\np}{\mathsf{NP}}
\newcommand{\protwi}{\Pi_{\sf wi}}
\newcommand{\relwi}{{\cal R}(\lang_{\sf wi})}
\newcommand{\open}{\mathsf{Open}}
\newcommand{\bfb}{{\bf b}}
\newcommand{\setup}{{\sf Setup}}
\newcommand{\sign}{{\sf Sign}}
\newcommand{\bfx}{{\bf x}}
\newcommand{\Sim}{{\sf Sim}}
\newcommand{\enc}{\mathsf{Enc}}
\newcommand{\functionspace}{f}
\newcommand{\poly}{\mathrm{poly}}
\newcommand{\hybrid}{\mathsf{H}}
\newcommand{\adversary}{\mathcal{A}}
\newcommand{\bfc}{\mathbf{c}}
\newcommand{\comm}{\mathsf{Comm}}
\newcommand{\cE}{\mathcal{E}}
\newcommand{\cH}{\mathcal{H}}
\newcommand{\prob}{\mathsf{Pr}}
\newcommand{\bfr}{\mathbf{r}}
\newcommand{\negl}{\mathsf{negl}}
\newcommand{\ketbra}[2]{\ket{#1}\!\bra{#2}}
\newcommand{\kb}[1]{\ket{#1}\!\bra{#1}}
\renewcommand{\cal}[1]{\mathcal{#1}}
\newcommand{\Tr}{\mathrm{Tr}}
\newcommand{\eps}{\varepsilon}
\newcommand{\calA}{\mathcal{A}}
\newcommand{\secparam}{\lambda}
\DeclareFontFamily{U}{skulls}{}
\DeclareFontShape{U}{skulls}{m}{n}{ <-> skull }{}
\title{Post-Quantum Zero-Knowledge with Space-Bounded Simulation\footnote{This work was done (in part) while the authors were visiting the Simons Institute for the Theory of Computing. This research was supported in part by the National Science Foundation under Grant No. NSF PHY-1748958.}}
\author{Prabhanjan Ananth\thanks{{\small Email: \texttt{prabhanjan@cs.ucsb.edu}}} \vspace{0.25em} \\ \small{UCSB} \and Alex B. Grilo\thanks{{\small Email: \texttt{alex.bredariol-grilo@lip6.fr}}} \vspace{0.25em} \\ {\small Sorbonne Universit\'e, CNRS, LIP6}}
\date{}
\begin{document}

\maketitle

\begin{abstract}
\noindent The traditional definition of quantum zero-knowledge stipulates that the knowledge gained by any quantum polynomial-time verifier in an interactive protocol can be simulated by a quantum polynomial-time algorithm. One drawback of this definition is that it allows the simulator to consume significantly more computational resources than the verifier. We argue that this drawback renders the existing notion of quantum zero-knowledge not viable for certain settings, especially when dealing with near-term quantum devices. 

In this work, we initiate a fine-grained notion of post-quantum zero-knowledge that is more compatible with near-term quantum devices. We introduce the notion of $(s,\functionspace)$ space-bounded quantum zero-knowledge. In this new notion, we require that an $s$-qubit malicious verifier can be simulated by a quantum polynomial-time algorithm that uses at most $\functionspace(s)$-qubits, for some function $\functionspace(\cdot)$, and no restriction on the amount of the classical memory consumed by either the verifier or the simulator. 
\par We explore this notion and establish both positive and negative results: 
\begin{itemize}
    \item For verifiers with logarithmic quantum space $s$ and (arbitrary) polynomial classical space, we show that $(s,\functionspace)$-space-bounded QZK, for $\functionspace(s)=2s$, can be achieved based on the existence of post-quantum one-way functions. Moreover, our protocol runs in constant rounds. 
    \item For verifiers with super-logarithmic quantum space $s$, assuming the existence of post-quantum secure one-way functions, we show that  $(s,\functionspace)$-space-bounded QZK protocols, with fully black box simulation (classical analogue of black-box simulation)  can only be achieved for languages in BQP. 
\end{itemize}
\end{abstract}

\section{Introduction}

Zero-knowledge is a foundational notion in cryptography. Invented in the 80s by Goldwasser, Micali and Rackoff~\cite{GMR}, this notion has slowly transitioned from being a purely theoretical notion to having applications in practice. Some notable applications of zero-knowledge include secure computation~\cite{GMW} and cryptocurrencies~\cite{zcash}.
\par A zero-knowledge proof (or argument) system for a language in NP is an interactive protocol between a prover, who receives as input an instance-witness pair $(x,w)$ and a verifier, who receives as input an instance $x$. The zero-knowledge property states that the verifier, after interacting with the prover, should not gain any information about the witness beyond what is leaked by the statement. That is, in the words of the inventors of zero-knowledge~\cite{GMR}, {\em ``$\ldots$ what the verifier sees in the protocol (even if he cheats) should be something which the verifier could have computed himself$\ldots$"}. Formally, we define the notion of a simulator and we require that the view of the verifier after interacting with the prover should be computationally indistinguishable from the output of the simulator. 
\par  When we consider polynomial-time verifiers we typically require the simulator to run in polynomial time. In particular, the storage and the computational power utilized by the simulator can be an arbitrary polynomial, respectively, in the storage and computational power of the verifier. One might worry that this definition is not strong enough: the verifier might gain knowledge from a protocol that would need far more computational resources in order for it to have computed by itself.\footnote{See the discussion on precise zero-knowledge in \Cref{sec:related-work}.} Nonetheless, this is a reasonable assumption to make in the present day as getting access to more computational resources is cheaper than ever before.  

\paragraph{Quantum Zero-Knowledge.} The advancement in quantum technology forces us to re-think the design of cryptographic protocols. But even before we build protocols that are secure against quantum adversaries, we need to first focus on formal definitions and appropriately define post-quantum security for existing cryptographic primitives. We focus on the notion of zero-knowledge against malicious quantum verifiers, also referred to as quantum zero-knowledge. One approach to formalize this definition is to modify the definition of classical zero-knowledge as follows: instead of considering (classical) probabilistic polynomial-time adversaries in the traditional zero-knowledge definition, we instead consider quantum polynomial-time (QPT) adversaries. And instead of modeling the auxiliary input to the verifier as a classical string, we model it as a quantum state. Almost all the works~\cite{Wat09,BS20,AL20,ABGKM20,ACP21,BKS21,CCY21,LMS21} take this approach of defining quantum zero-knowledge. While this is the most general definition one can think of, this definition does not accommodate the subtleties on the computational power of the quantum verifier that is highlighted by the example below.

\paragraph{Definitional Issues.} Let us consider the following language based on integer factorization:
$$ \lang_{\sf FACTORING} = \{ \langle N, L, U \rangle : \exists \text{ prime } p \text{ that divides } N, \text{ s.t.} L \leq p \leq U\}.$$

\noindent Consider the following protocol for $ \lang_{\sf FACTORING}$ between a prover $P$ and a verifier $V$:
\begin{enumerate}
    \item $P$ sends a prime $p$ that divides $N$  and moreover, $L \leq p \leq U$
    \item $V$ checks if: $a)$ $p$ is prime; $b)$ $p$ divides $N$; and $c)$ $L \leq p \leq U$. If all the checks pass, then $V$ accepts. Otherwise, $V$ rejects. 
\end{enumerate}

\noindent It is not hard to see that this protocol is complete and sound. Interestingly, according to the existing definition, this protocol satisfies the quantum zero-knowledge property. The simulator works as follows:
\begin{enumerate}
    \item Using the Shor's algorithm for integer factorization~\cite{Shor99}, find all the prime factors $p_1,...,p_\ell$ of $N$.
    \item Let $i$ be such that $p_i$ is prime, $p_i$ divides $N$ and $L \leq p_i \leq U$. Send $p_i$ to the verifier.
\end{enumerate}

\noindent Notice that when describing the simulator, we did not take into consideration the computational power of the verifier. In particular, let us consider a malicious verifier $V'$ which is a hybrid computer, consisting of a quantum device available today (such as the implementations of low-depth Random Circuit Sampling devices~\cite{Google}, Boson sampling~\cite{boson}, non-universal adiabatic computation~\cite{DWave}, etc.) and a classical machine. None of the currently available quantum devices have the capability to factor large prime numbers. If $V'$ participates in the protocol described above then $V'$ could be gaining knowledge, i.e., a non-trivial factor of $N$, that it could not have been able to compute by itself; despite the protocol being quantum zero-knowledge! 

This discrepancy appears because the definition of quantum zero-knowledge allows the simulator to be an arbitrary quantum polynomial-time algorithm, regardless of the computational power of the verifier. For instance, even if the malicious verifier is a classical polynomial-time algorithm, the simulator is still a quantum polynomial-time algorithm. 
\par A more realistic definition should instead consider the resources used by the verifier and require the simulator to run under (roughly) the same constraints. There are many resources we can take into consideration when formulating this definition. In this work, we focus on the resource of quantum memory. More specifically, we focus on the number of qubits utilized by the simulator in relation to the number of qubits used by the verifier.   

\paragraph{Constructions of QZK: Prior Work.} We first observe that the existing works on post-quantum zero-knowledge propose simulators whose space complexity is a large polynomial overhead in the space complexity of the verifier. One main reason behind this is the fact that all the existing black-box simulation techniques~\cite{Wat09,Unruh12,CCY21,CMSZ22,CCLY21,LMS21} first purify the verifier; that is, they convert the next message function of the verifier into a unitary\footnote{Technically, it is converted into a unitary followed by measurement, where the measurement outcome will be the message communicated to the prover.} by delaying the measurements. The purification process increases the space complexity, proportional to the number of intermediate measurements. %
\par Even if we ignore the above issue and consider verifiers where the next message functions are implemented as unitaries, the existing simulators still have large polynomial overheads in the space complexity of the verifier. In~\cite{Wat09}, the reason for this is the fact that the space complexity of the simulator depends on the communication complexity of the protocol which in turn is some function of the security parameter. In~\cite{CCY21,CCLY21,LMS21}, the simulator runs the verifier coherently multiple times and thus, the space complexity additionally depends on the number of iterations. 

\subsection{Our Contributions}
We propose a new definition of quantum zero-knowledge, where the amount of quantum memory used by the simulator is closely related to the quantum memory used by the verifier. We also investigate the feasibility of the new notion. 

\paragraph{1. New Definition: Space-Bounded QZK.} We formulate a new definition of post-quantum zero-knowledge, that we call {\em $(s,\functionspace)$-space-bounded} QZK, where $s \in \mathbb{N}$ and $\functionspace(\cdot)$ is some function. Suppose the malicious quantum polynomial-time (QPT) verifier is a quantum algorithm that uses at most $s$ qubits of quantum memory. Then, we require that the simulator runs in polynomial time and use the number of qubits is at most $\functionspace(s)$. For example, if $\functionspace$ is defined to be $\functionspace(x)=2x$ then $(s,\functionspace)$-space bounded QZK states that the simulator should take up at most twice the number of qubits as the verifier. On the other hand, we don't place any restriction on the classical memory of the simulator. The amount of classical storage of the simulator could be an arbitrary polynomial in the amount of classical storage of the verifier. Indeed, in reality, classical storage is much cheaper than quantum storage and thus, we need to be more precise about modeling the latter. 
\par We consider three different notions of $(s,\functionspace)$-space-bounded QZK:
\begin{itemize}
    \item {\em Fully Black Box}: In this notion, the simulator gets oracle access to the verifier. More precisely, each round of the verifier is modeled as a sequence of channels (one per round) and the simulator can make polynomially many queries to each of these channels. This notion of quantum zero-knowledge is a direct analogue of classical zero-knowledge. 
    
    In terms of space complexity, the space undertaken by the simulator would also take into account the space needed to store the private state of the verifier in-between the executions of every two consecutive rounds. 
    
    In this work, we mainly focus on understanding the feasibility of $(s,\functionspace)$-space bounded QZK for different $s(\cdot)$ functions. 
    \item {\em Black Box}: In this notion, the simulator gets oracle access to the purification of the verifier and its inverse. More precisely, suppose the verifier is represented as a sequence of channels and let $U_1,\ldots,U_k$ be their {\em canonical} purifications\footnote{This means that the purification is computed in a specific manner: that is, by delaying the measurements of the channel.}. Then, the simulator gets oracle access to $U_i$ and also oracle access to $U_i^{\dagger}$, for every $i \in [k]$. Although, in this model, the verifier's code is modified before giving access to the simulator, we still call it black box in order to be consistent with most of the previous works on black box quantum zero-knowledge who adopt this model. 
    
    Note that here, we model the quantum space complexity of the simulator as a function of the quantum space complexity of the original verifier (and not the one obtained after purification). A (canonically) purified verifier can take significantly more space than the underlying non-purified verifier. For example, purifying an  $s$-qubit verifier with $\ell$ measurements will result in a unitary that consumes at least $s+\ell$ qubits. Thus, $(s,\functionspace)$ space-bounded black-box QZK, for any polynomial $\functionspace(\cdot)$, is impossible to achieve. On the other hand, prior works~\cite{Wat09,ACP21,CCY21,CCLY21,LMS21} are $(s,\functionspace)$-space bounded black box QZK for super-polynomial $\functionspace(s)=2^{\omega(\log(s))}$\footnote{Notice that in these previous results, the number of qubits used by the simulator is polynomial but it might scale with the {\em time-complexity} of the verifier and that is why we can only achieve a super-polynomial upper-bound on the number of qubits.}.
    \item {\em Non Black Box}: Finally, one can consider non black box quantum zero knowledge, where the simulator is allowed to arbitrarily depend on the verifier and in particular, could make use of the code of the verifier. This definition resembles the counterpart definition of (classical) non-black box zero-knowledge. Prior works~\cite{BS20,BKS21} satisfy $(s,\functionspace)$-space bounded non black box QZK for super-polynomial $\functionspace(s)=2^{\omega(log(s))}$. We leave the investigation on the feasibility of $(s,\functionspace)$-space bounded non black box QZK, when $\functionspace(\cdot)$ is a {\em polynomial}, as an interesting open problem.
\end{itemize}

\paragraph{2. Space-Bounded QZK Against Logarithmic Space Verifiers.} We first focus on a restricted case case when the malicious QPT verifiers only have access to logarithmically many qubits. In this case, we demonstrate a feasibility result. 
\par Suppose $s=O(\log(\secparam))$.  We show that there exist, even constant-round, $(s,\functionspace)$-space-bounded quantum zero-knowledge protocols for NP, where $\functionspace(x)=2x$, based on the assumption of post-quantum one-way functions. In fact, our protocol even achieves fully black-box quantum zero-knowledge. 

\paragraph{3. Space-Bounded QZK Against Super-Logarithmic Space  Verifiers.} We then investigate the case when the malicious verifiers have access to super-logarithmically many qubits. In this case, we present a negative result. 
\par Suppose $s=\omega(\log(\secparam))$. Assuming the existence of post-quantum one-way functions, we show that there do not exist fully black-box $(s,\functionspace)$-space bounded quantum zero-knowledge protocols for languages outside BQP. 
\par Our negative result only applies to fully black-box quantum zero-knowledge and it is an interesting open problem to either prove a negative result or circumvent our negative result using non black box techniques.

\subsection{Technical Overview}
\noindent In~\Cref{sec:overview:firstresult} and~\Cref{sec:overview:secondresult}, we present an overview of the techniques employed in both the results.

\subsubsection{Zero-Knowledge Against Logarithmic Quantum Space Verifiers} 
\label{sec:overview:firstresult}
Rewinding has been the quintessential technique employed in proving black-box simulation security of cryptographic protocols against probabilistic polynomial-time classical adversaries. A rewinding-based simulator has to store copies of the intermediate states of the verifier so that it can use these copies whenever it rewinds the execution of the verifier to an earlier round. However, adopting rewinding to prove post-quantum security has not been easy; this was first identified by~\cite{vanGraaf97}. At a high level, the reason for the difficulty arises since rewinding implicitly requires the ability to store copies of the intermediate states, which the no-cloning theorem~\cite{Dieks82,WZ82}.
Thus, most of the recent works on post-quantum zero-knowledge~\cite{Wat09,Unruh13,ACP21,CCY21,CMSZ22,CCLY22,LMS21} have proposed various rewinding techniques to perform simulation without having the need to store intermediate copies of the verifier's state. 

\paragraph{State Recovery.}\ To rewind the verifier to an earlier round, the most commonly adopted strategy is to invert the operations performed by the verifier in the hope that we can completely recover the earlier states. There are two issues with it:
\begin{enumerate}
    \item Firstly, this means that the verifier has to be a unitary and thus needs to be {\em purified}; that is, all the intermediate measurements of the verifier needs to be pushed to the end of each round. As a consequence of the purification process, the simulator could take much larger quantum space than the verifier. 
    \item Secondly, each round could potentially disturb the verifier's intermediate state in an irreversible way. Thus, we might no longer be able to recover the earlier states. 
\end{enumerate}
All the recent works on black-box quantum zero-knowledge give up on the first issue. Regarding the second issue, they provide a workaround by showing that still retains some useful properties about the original intermediate state, even if the recovered state is quite different from the original intermediate state. For example, the probability that the verifier does not abort on the recovered state could be at least as much as the probability that the verifier does not abort on the original intermediate state. We will not go into the details of how state recovery is performed in each of the recent works~\cite{Wat09,CMSZ22,CCY21,LMS21,CCLY21} since it is not relevant to our approach.

\paragraph{An observation.}\ Consider an $M$-qubit quantum state $\rho$, where $M = O(\log(\secparam))$. Our observation is that a maximally mixed state $\frac{I}{2^M}$ is a good approximation of $\rho$. That is, suppose there exists a binary positive operator valued measure (POVM) $\Lambda$ such that $p = \Tr(\Lambda(\rho))$ then it holds that $p = \Tr(\Lambda(\frac{I}{2^M})) \geq \frac{p}{2^M}$. This suggests a new approach to perform rewinding: to rewind the verifier to an earlier round, say $i$, simply execute the $i^{th}$ round of the verifier on input $\frac{I}{2^M}$. The advantage of this approach is two-fold. Firstly, there is no need to purify the verifier! %
Secondly, we only need $M$ additional qubits of quantum storage to initialize the maximally mixed state. 

\paragraph{Case Study: Four Round Protocol due to Goldreich-Kahan~\cite{GK90}.} Let us try to use this idea to prove post-quantum security of a zero-knowledge protocol. We will start with the four-round protocol due to Goldreich and Kahan~\cite{GK90}. Denote the four rounds in the protocol to be $(\alpha,\beta,\gamma,\delta)$. \\

\noindent {\em Simulating Classical Verifiers}: We first recall the strategy to simulate classical adversaries. The classical simulator, running in expected polynomial time, receives as input $\alpha$, which is nothing but the commitment of verifier's challenges. Let the verifier's state at this point be ${\sf st}$. 
\begin{itemize}
    \item {\em First Step. Execution of Main Thread}: The simulator then sends $\beta$, which is nothing but the commitment of 0. There are two things that can happen. Either the verifier aborts, in which case, even the simulator aborts. Or it could happen that the verifier does not abort. In this case, the simulator moves on to the second step. 
    \item {\em Second Step. Execution of Lookahead Threads}: At this point, the simulator has the opening of $\alpha$, denoted by $\gamma$. The simulator then rewinds the verifier to ${\sf st}$, i.e., the state where it just sent $\alpha$. It uses the information from the opening $\gamma$, to compute the commitment $\beta^*$. It sends $\beta^*$ to the verifier. If the verifier sends $\gamma^*$ then the simulator completes the transcript $(\alpha,\beta^*,\gamma^*,\delta^*)$, and additionally outputs the verifier's final state. If the verifier aborts then the simulator keeps repeating the rewinding process until it wins.  
\end{itemize}
\noindent The important thing to remember is that the simulator first runs the main thread to decide whether to continue or not. If it continues, it outputs one of the lookahead threads that did not abort. \\

\noindent {\em Simulating Quantum Verifiers}: Suppose we need to simulate quantum verifiers. We extend the above approach to the quantum setting as follows. The quantum simulator receives as input $\alpha$, which is nothing but the commitment of verifier's challenges. Let the verifier's state at this point be $(\rho,{\sf st})$, where $\rho$ is a $M$-qubit quantum state and ${\sf st}$ is a classical string. 
\begin{itemize}
    \item The first step is the same as above. The simulator executes the verifier on input the state $(\rho,{\sf st})$. Let the resulting state of the verifier be $(\sigma,{\sf st}')$. 
    \item If the verifier has not aborted at this point, it moves to the second step. At this point, if the simulator has to rewind, it no longer has a copy of $\rho$ since it was computed upon in the first round. Thus, the simulator uses the maximally mixed state $\frac{I}{2^M}$ in place of $\rho$ whenever it rewinds.
\end{itemize}
As per our observation earlier, $\frac{I}{2^M}$ serves as a good approximation of $\rho$. In particular, if $p$ is the probability that the verifier on input $\rho$ outputs a valid third message then the verifier on input $\frac{I}{2^M}$ outputs a valid third message with probability at least $\frac{p}{2^M}$. It can be argued that, as long as $M$ is logarithmic, the simulator in expected polynomial time can recover a valid transcript $(\alpha,\beta^*,\gamma^*,\delta^*)$ from one of the lookahead threads just like the classical simulator. However, this is not sufficient. The quantum simulator should not also output the residual state of the verifier along with the transcript $(\alpha,\beta^*,\gamma^*,\delta^*)$. But note that the residual state of the verifier obtained in the lookahead thread is useless: it is the state obtained by replacing the intermediate state $\rho$ with a maximally mixed state. One option is to output $(\sigma,{\sf st}')$, which is the state obtained in the main thread, along with $(\alpha,\beta^*,\gamma^*,\delta^*)$. However, the state $(\sigma,{\sf st}')$ is obtained in the main thread and hence, is inconsistent with the lookahead thread transcript $(\alpha,\beta^*,\gamma^*,\delta^*)$. Thus, we need a different approach where we can simulate in such a way that the joint distribution of the transcript and the verifier's state is computationally indistinguishable from the real world. 

\paragraph{Protocol Template.} We propose a different protocol and prove their post-quantum security using our key observation. As a starting point, we consider a commitment scheme. For the current discussion, we will consider commitments that are non-interactive and satisfy perfect binding and computationally hiding property. In the technical sections, we relax the requirement to allow the commitment to be interactive and moreover, we only require statistical binding.
\par We present a simplified version of the protocol template first. Let $\lang$ be the NP language associated with the protocol. We denote the prover to be $P$ and the verifier to be $V$. The prover receives as input instance-witness pair $(x,w)$ and the verifier receives as input $x$.  
\begin{enumerate}
    \item $V$ sends two commitments of $\alpha_0$ and $\alpha_1$.  
    \item $P$ sends a bit $b$. 
    \item $V$ sends the opening of commitment of $\alpha_b$. 
    \item $P$ uses a witness-indistinguishable protocol to prove to $V$ that either $x \in \lang$ or it knows $\alpha_0 \oplus \alpha_1$.  
\end{enumerate}
\noindent The above protocol is sound since a malicious prover can only receive one of the two decommitments and thus, will not know $\alpha_0 \oplus \alpha_1$ in order to complete the WI phase. To prove that the above template satisfies post-quantum zero-knowledge, we first need to demonstrate a simulator.  
\begin{itemize}
    \item The simulator receives as input commitments of $\alpha_0$ and $\alpha_1$ from the malicious verifier. Let $
    (\rho,{\sf st})$ be the state of the verifier at this point, where $\rho$ is the $M$-qubit state and ${\sf st}$ is the classical state. 
    \item {\em First Step. Execution of Main Thread}: The simulator sends a random bit $b$. It receives an opening of the commitment of $\alpha_b$ from the verifier. Let the state of the verifier at this point to be $(\sigma,{\sf st}')$.  
    \item {\em Second Step. Execution of Lookahead Threads}: The simulator then rewinds the verifier to just after the first round. This means that the simulator restarts the verifier on the state $\left(\frac{I}{2^M},{\sf st} \right)$. The simulators sends a random bit $b'$ to the verifier and it hopes that $b' \neq b$ and the verifier does not abort. The simulator keeps repeating this process until it succeeds.    
\end{itemize}
\noindent Once the simulator succeeds, it has both $\alpha_0,\alpha_1$. It discards all the lookahead transcripts and only retains the main thread transcript. Continue the execution of the protocol with the verifier, where the state of the verifier is $(\sigma,{\sf st}')$. The simulator then convinces the verifier in the WI phase using $\alpha_0 \oplus \alpha_1$. 
\par Note that unlike Goldreich-Kahan, the simulator retains the main thread transcript along with the verifier's residual state, just after the rewinding process, whose joint distribution is computationally indistinguishable from the real world.
\par While the main approach is sound, implementing the above approach encounters some issues. We mention the issues and how we fix them. The first issue is that it could happen that the verifier doesn't abort with non-negligible probability and also, that the simulator never succeeds when the verifier does not abort. Indeed, the verifier could decide that it will not abort if and only if the bit $b$ sent by the prover is 0. In this case, the simulator will never be able to recover two valid transcripts if the verifier does not abort. We fix this by increasing the challenge space. Instead of requiring the verifier to send two commitments, it sends $2 \secparam$ commitments of messages $((\alpha_{1,0},\alpha_{1,1}),\ldots,(\alpha_{\secparam,0},\alpha_{\secparam,1}))$ and in the third message, it opens commitments of messages $(\alpha_{1,b_1},\ldots,\alpha_{\secparam,b_\secparam})$, where $b_1,\ldots,b_\secparam$ are the challenge bits sent by the prover in the second message. If the verifier does not abort with non-negligible probability then there exists an index $i$ such that the verifier opens to each value $\alpha_{i,0},\alpha_{i,1}$ with non-negligible probability. The second issue is on non-malleability. The prover could launch a malleability attack by leveraging the commitments sent by the verifier in the first message to cheat the verifier in the WI phase. A first attempt is to make the prover commit to the witness to be used in the WI phase. But even this is susceptible to the malleability attack. We use a technique explored in a recent work~\cite{AL20}, where the prover commits to some randomness ${\bf r}^*$ at the very beginning even before it sees any message from the verifier. At a later point in time, when the prover commits to the witness, it is expected to use this randomness ${\bf r}^*$. This prevents the mauling attack since if the prover computed its own commitment by mauling the commitment by the verifier, it would not have the knowledge of the randomness contained in its commitment. Once we incorporate these fixes, we have a complete description of our protocol. 

\subsubsection{Zero-Knowledge Against Super-Logarithmic Quantum Space Verifiers} 
\label{sec:overview:secondresult}

 In~\Cref{sec:overview:firstresult}, we restricted our attention to zero-knowledge against logarithmic quantum space verifiers.  We now justify the restriction on the quantum space of the verifiers by showing the following: assuming post-quantum one-way functions, only for languages in BQP, there exist protocols that are space-bounded fully black-box QZK against super-logarithmic quantum space verifiers. 

This is shown by first proving that any fully black-box QZK protocol against super-logarithmic quantum space verifiers should have a straight-line simulator. We design a contrived verifier $V'$ that is composed of channels $\Phi_1,\ldots,\Phi_k$, where $k$ is the number of rounds in the protocol and $\Phi_i$ is used to compute the $i^{th}$ round query and suppose $M=\omega(\log(\secparam))$. We argue that any simulator simulating this verifier should be of a specific form: it should first make a successful query to $\Phi_1$, then make a successful query to $\Phi_2$ and so on. For any $i$, a query made by the simulator to $\Phi_i$ is deemed successful\footnote{In the technical sections, we call such queries non-abort queries (\Cref{def:nonabort:queries}).} if $\Phi_i$ does not abort on that query. Note that a straightline simulator could make many unsuccessful queries to any $\Phi_j$ (i.e., queries leading to abort answers) in between two successful queries. 
\par Once we prove that the simulator is straightline, we then show that the language ${\cal L}$ associated with the protocol has to be in BQP. This step is relatively more standard and follows similar ideas employed in prior works~\cite{GO94}. We use the straightline simulator to come up with a malicious prover $\widetilde{P}$ who is able to convince the verifier to accept an instance in ${\cal L}$ with probability close to 1; for now, assume that the protocol satisfies perfect completeness. At the same time, when given a NO instance, due to the soundness property, $\widetilde{P}$ can convince the verifier to accept only with negligible probability. Thus, using $\widetilde{P}$ and the honest verifier, we can construct a quantum polynomial time algorithm that decides ${\cal L}$. 

\paragraph{Forcing Simulator to be Straightline.} All that is left is to show that the simulator has to be straightline with respect to the contrived verifier $V'$ that we will design below. Since the simulator has complete control over the intermediate states of the verifier in-between different rounds of the protocol, it seems challenging to prevent the simulator from making multiple successful queries to the same channel $\Phi_i$, for some $i$, where $(\Phi_1,\ldots,\Phi_{k})$ are as defined above. We leverage the principles of quantum information to our advantage. Specifically, we use {\em subspace states}~\cite{AC13}, as intermediate private states of the verifier. Subspace states are known to be unclonable and have been instrumental in constructing public-key quantum money~\cite{AC13,Zhandry21}.   
\par In more detail, $V'$ is composed of the channels $(\Phi_1,\ldots,\Phi_n)$, where $\Phi_i$ works as follows: \begin{itemize}
    \item It takes as input a state $\rho_i$ and the $i^{th}$ round message from the prover.  
    \item It checks if $\rho_i$ is indeed the state $\ketbra{S_{i-1}}{S_{i-1}}$, where $S_{i-1}$ is a random $\frac{M}{2}$-dimensional subspace of $\mathbb{Z}_2^{M}$ and $\ket{S_{i-1}}=\sum_{x \in S_{i-1}} \sqrt{2^{-|S_{i-1}|}} \ket{x}$. The description of the subspace $S_{i-1}$ is hardwired in $\Phi_i$. 
    \item If the check fails, $\Phi_{i}$ aborts. 
    \item Otherwise, $V'$ computes the next message using the honest verifier and outputs this message along with another subspace state $\ket{S_i}$, where $S_i$ is a random $\frac{M}{2}$-dimensional subspace of $\mathbb{Z}_2^{M}$. The description of $S_i$ is hardwired in $\Phi_i$ (and $\Phi_{i+1}$). 
\end{itemize}
\noindent Let us look at the different possible queries a simulator can submit to $(\Phi_1,\ldots,\Phi_k)$. \\

\noindent {\em Out-of-order queries}: If the simulator tries to query $V'$ out-of-order, for example, submit a query to $\Phi_{i+1}$ before querying $\Phi_{i}$, it would not have the valid subspace state $\ket{S_{i}}$ and thus, $\Phi_{i+1}$ will most likely abort.\\

\noindent {\em Repeated queries}: If the simulator queries the same $\Phi_i$ twice then we claim that there can only be at most one query that passes the subspace state test. If the simulator makes two successful queries to $\Phi_i$ then we argue that it must have two copies of $\ket{S_{i-1}}$, thus violating the unclonability property of subspace states. We first argue this for $i=1$: this follows from the fact that the simulator has only one copy of $\ket{S_0}$. Assuming that at most only successful query can made to all $\Phi_j$, for $j <i$, we can then invoke the unclonability of $\ket{S_{i-1}}$ to argue that there can be at most one successful query to $\Phi_i$. 
\par One has to be careful when violating the unclonability property: the simulator can make many unsuccessful queries before making the two successful queries to $\Phi_i$. The reduction that violates the hardness of subspace states needs to be able to discern whether the adversary has submitted a successful query or not. Moreover, the reduction only receives as input a single copy of $\ket{S_{i-1}}$ and in particular, does not receive as input a description of $S_{i-1}$. Thus, the reduction does not have the capability to distinguish successful versus unsuccessful queries. We overcome this issue by giving the reduction access to a oracle that tests if a given state is $\ket{S_{i-1}}$. It was shown by~\cite{AC13} that as long as the subspace has super-logarithmic dimension, unclonability still holds even if the reduction gets access to the membership oracle.   
\\

\noindent {\em Missing queries}: There might exist some $i$ such that the simulator never submits a successful query to $\Phi_i$. By using similar arguments as in the previous cases, we can then argue that it cannot submit a successful query to the last round, i.e., $\Phi_k$. From this, we can come up with a distinguisher that can distinguish the real world and the ideal world. \\

\noindent From the above cases, we can argue that a successful simulator can only submit successful queries {\em in-order} and moreover, can only submit exactly one successful query for $\Phi_i$, for every $i \in [k]$. \\

\noindent However, there are a couple of issues we need to take into consideration while designing the straightline simulator to ensure that $\tilde{P}$ is successful. Firstly, we need to modify $V'$ so that it encrypts its internal state with a symmetric-key encryption scheme, whose secret key is also hardcoded in all $\Phi_i$. Otherwise, the straightline simulator would expect the internal state in the clear and $\tilde{P}$ would be unable to provide this information to the simulator. Secondly, we need that the internal state chosen by the simulator on query $\Phi_i$ to be exactly the same as the internal state after the query to $\Phi_{i-1}$. To solve this issue, $V'$ uses a digital signature scheme to sign its classical internal state. This forces the simulator to use the internal state chosen by the simulator on query $\Phi_i$ to be the internal state after the query to $\Phi_{i-1}$, otherwise we could violate the unforgeability of the digital signature scheme.

\subsection{Related Work}\label{sec:related-work}

\paragraph{Knowledge Tightness and Precise Zero-Knowledge.} Analogous questions have been explored in the classical literature. However, the main focus has been on designing zero-knowledge simulators whose runtime is closely related to the runtime of the verifiers. 

Goldreich, Micali and Widgerson~\cite{GMW86} (also,~\cite{Goldreich2001}) explore the notion of knowledge tightness which is the ratio of the running time of the simulator and the running time of the verifier. Some existing protocols~\cite{GMW,GMR,Blu86} already achieve constant knowledge tightness. Along the lines of knowledge tightness, Pass and Micali~\cite{MP06} formalized a notion called precise zero-knowledge where the simulator's runtime is closely related to the runtime of the verifier. Many followup works~\cite{PPSTV08,CPT12,DG12} study precise zero-knowledge in different settings. 
\par Of relevance is the work by Ning and Du~\cite{DG11} who consider a stronger notion of precise zero-knowledge, where the simulator's space complexity also needs to be closely related to the space complexity of the verifier. However, they study this notion only in the context of zero-knowledge against classical verifiers. 

\paragraph{Space-bounded verifiers.} Another related model considers proof systems where the verifier, even the honest one, has bounded space. Some works~\cite{AF93,SPY92} have proposed feasibility results in this model. We remind the reader that in our proof system, we don't limit the amount of classical space a malicious QPT verifier can have.

\paragraph{Post-Quantum Zero-Knowledge.} Watrous~\cite{Wat09} presented the first construction of zero-knowledge for NP against quantum verifiers. In the past few years, we have seen remarkable progress in understanding the feasibility of QZK. Bitansky and Shmueli~\cite{BS20} presented the first construction of quantum zero-knowledge in constant rounds. They prove that their construction satisfies QZK in the non-black box simulation setting. In fact, it was shown by Chia, Chung, Liu and Yamakawa~\cite{CCLY21} that constant round black box quantum zero-knowledge is impossible. This negative result was circumvented by a recent work by Lombardi, Ma and Spooner~\cite{LMS21} who achieved black-box constant round QZK in the coherent simulation setting. Relaxations of QZK systems have been considered in some recent works~\cite{AL20,CCY21}. Post-quantum zero-knowledge under composition have been studied~\cite{JKMR06,ABGKM20,ACP21}. Finally, zero-knowledge for quantum complexity classes such as QMA have also been studied~\cite{BJSW16,BG19}.

\subsection{Future Directions} 
In this section, we discuss future directions regarding fine-grained notions of quantum zero-knowledge.

\paragraph{Non-black-box quantum space-bounded simulation with superlogarithmic qubits.} Our result in \Cref{sec:impossibility} shows that fully black-box $(\omega(\log\lambda),\cdot)$-space-bounded quantum zero-knowledge is impossible, and as we discussed in \Cref{sec:def-space-bounded-zk}, (standard) black-box $(\cdot, p)$-space bounded quantum zero-knowledge is impossible for any polynomial $p$. We leave as an open problem to explore the feasibility of non-black black-box quantum $(\omega(\log\lambda),p)$-space-bounded simulation for some polynomial $p$.

\paragraph{New definitions of ``NISQ''-safe zero-knowledge protocols.} In this work, we initiated the study of classical zero-knowledge protocols, whose security is still guaranteed against {\em weaker} quantum adversaries.\footnote{As we discussed, since the adversaries are weaker, we can morally break the zero-knowledge property that have much more power than the verifier.} We focus here on the number of qubits allowed in the simulation and we leave as an open question the proposal of other meaningful definitions such as verifiers/simulators that use a fixed non-universal gateset or even a fixed architecture.

\paragraph{Space-bounded/fully black box simulation.} Another interesting direction is to explore the new notions of simulation that we introduce in this work in the context of other cryptographic protocols. Specifically, we could explore the feasibility of fully black-box simulation and space-bounded simulation in both classical protocols and also protocols that use quantum resources such as quantum oblivious transfer and quantum secure computation protocols~\cite{HSS11,ABGKM20,GriloL0V21,BartusekCKM21a}. 
\section{Preliminaries}
\noindent We denote the security parameter by $\secparam$. We assume basic familiarity of cryptographic concepts. 
\par We denote (classical) computational indistiguishability of two distributions $\distr_0$ and $\distr_1$ by $\distr_0 \approx_{c,\varepsilon} \distr_1$. In the case when $\varepsilon$ is negligible, we drop $\varepsilon$ from this notation. We denote the process of an algorithm $A$ being executed on input a sample from a distribution $\distr$ by the notation $A(\distr)$.

\subsection{Quantum Preliminaries}
Let $\cH$ be any finite Hilbert space, and let $L(\cH):=\{\cE:\cH \rightarrow \cH \}$ be the set of all linear operators from $\cH$ to itself (or endomorphism). Quantum states over $\cH$ are the positive semidefinite operators in $L(\cH)$ that have unit trace. %

A state over $\cH=\mathbb{C}^2$ is called a qubit. For any $n \in \mathbb{N}$, we refer to the quantum states over $\cH = (\mathbb{C}^2)^{\otimes n}$ as $n$-qubit quantum states. To perform a standard basis measurement on a qubit means projecting the qubit into $\{\ket{0},\ket{1}\}$. A quantum register is a collection of qubits. A classical register is a quantum register that is only able to store qubits in the computational basis.

\paragraph{Quantum Circuits.} A unitary quantum circuit is a sequence of unitary operations (unitary gates) acting on a fixed number of qubits. Measurements in the standard basis can be performed at the end of the unitary circuit. A (generalized) quantum circuit is a unitary quantum circuit with $2$ additional operations: $(1)$ a gate that adds an ancilla qubit to the system, and $(2)$ a gate that discards (trace-out) a qubit from the system. A non-uniform quantum polynomial-time algorithm (QPT) $C$ consists of a family $\{(C_n,\rho_n)\}_{n \in \mathbb{N}}$, where $C_n$ is an $n$-input qubit generalized quantum circuit of size $p(n)$ for some polynomial $p(\cdot)$, and $\rho_n$ is a density matrix assigned to a subset of input qubits of $C_n$. Unless explicitly specified, all the algorithms considered in this work are non-uniform algorithms.
\par Later, in~\Cref{sec:comp:model}, we present a more rigorous definition of quantum circuits that carefully describes the interaction between the classical and the quantum memory. 

\paragraph{Oracle Access.} A QPT algorithm $A$ has oracle access to a circuit $\Phi$, denoted by $A^{\Phi}$, if the computation proceeds as follows. On input a state $\rho$, defined on two registers ${\bf X}$ and ${\bf Y}$, $A^{\Phi}$ first applies a circuit on $\rho$, then applies $\Phi$ on the ${\bf Y}$ register, followed by applying another circuit on the result, followed by applying $\Phi$ again on the ${\bf Y}$ register and so on. 

\subsection{Computational Indistinguishability}

\noindent The following definition is due to~\cite{Wat09}.

\begin{definition}[Computational Indistinguishability of Quantum States] Let $I$ be an infinite subset $I \subset \{0,1\}^*$, let $p : \mathbb{N} \rightarrow \mathbb{N}$ be a polynomially bounded function, and let $\rho_{x}$ and $\sigma_x$ be $p(|x|)$-qubit states. We say that $\{\rho_{x}\}_{x \in I}$ and $\{\sigma_x\}_{x\in I}$ are \textbf{quantum computationally indistinguishable collections of quantum states} if for every QPT $\cE$ that outputs a single bit, any polynomially bounded  $q:\mathbb{N}\rightarrow \mathbb{N}$, and any auxiliary collection of $q(|x|)$-qubits states $\{\nu_x \}_{x \in I}$, and for all (but finitely many) $x \in I$, we have that
$$\left|\Pr\left[\cE(\rho_x\otimes \nu_x)=1\right]-\Pr\left[\cE(\sigma_x \otimes \nu_x)=1\right]\right| \leq \epsilon(|x|) $$
for some negligible function $\epsilon:\mathbb{N}\rightarrow [0,1]$. We use the following notation 
$$\rho_x \approx_{\quant,\epsilon} \sigma_x$$
and we ignore the $\epsilon$ when it is understood that it is a negligible function.
\end{definition}

\subsection{Interactive Protocols}
\paragraph{Languages and Relations.} A language $\lang$ is a subset of $\{0,1\}^*$. A relation $\rel$ is a subset of $\{0,1\}^* \times \{0,1\}^*$. We use the following notation:
\begin{itemize}

\item Suppose $\rel$ is a relation. We define $\rel$ to be {\em efficiently decidable} if there exists an algorithm $A$ and fixed polynomial $p$ such that $(x,w) \in \rel$ if and only if $A(x,w)=1$ and the running time of $A$ is upper bounded by $p(|x|,|w|)$. 

\item Suppose $\rel$ is an efficiently decidable relation. We say that $\rel$ is a NP relation if $\lang(\rel)$ is a NP language, where $\lang(\rel)$ is defined as follows: $x \in \lang(R)$ if and only if there exists $w$ such that $(x,w) \in \rel$ and $|w| \leq p(|x|)$ for some fixed polynomial $p$. 

\end{itemize}

\paragraph{Interactive Models.} We consider (classical) interactive protocols between two parties, a prover $P$ and a verifier $V$. We only consider interactive protocols corresponding to NP relations. An interactive protocol for an NP relation $\rel$ has the following format: a probabilistic polynomial-time prover $P$ takes as input an NP instance $x$ and a witness $w$ while a probabilistic polynomial time verifier $V$ takes as input $x$. Both the parties exchange some messages with each other. At the end of the protocol, the verifier $V$ outputs either Accept or Reject.

\paragraph{Notation.} We use the following notation in the rest of the paper.
\begin{itemize}
    \item $\langle P,V \rangle $ denotes the interactive protocol between $P$ and $V$. We denote the $\langle P(y_1),V(y_2) \rangle$ to be $(z_1,z_2)$, where $z_1$ is the prover's output and $z_2$ is the verifier's output. Sometimes we omit the prover's output and write this as $z \leftarrow \langle P(y_1),V(y_2) \rangle$ to indicate the output of the verifier to be $z$.
    \item $\view_V\left(\langle P(y_1),V(y_2) \rangle\right)$ denotes the view of the $V$ in the protocol $\Pi$, where $y_1$ is the input of $P$ and $y_2$ is the input of $V$. The view includes the output of $V$ and the transcript of the protocol.
    
\end{itemize}

\subsubsection{Proof and Argument Systems}
We start by recalling the definitions of the completeness and soundness properties of a classical interactive proof system. 

\begin{definition}[Proof System]
\label{def:pfsystem}
Let $\Pi$ be an interactive protocol between a classical PPT prover $P$ and a classical PPT verifier $V$. Let $\rel(\lang)$ be the NP relation associated with $\Pi$.
\par $\Pi$ is said to be a proof system if it satisfies the completeness and the soundness properties defined below.  
\begin{itemize} 
\item {\bf Completeness}: For every $(x,w) \in \rel(\lang)$, 
$$\prob[\accept \leftarrow \langle P(x,w),V(x) \rangle] \geq 1 - \negl(|x|),$$
for some negligible function $\negl$. 
\item {\bf Soundness}: For every  prover $P^*$ (possibly computationally unbounded), every $x \notin \rel(\lang)$, every state $\rho$,
    $$\prob\left[ \accept \leftarrow \langle P^*(x,\rho),V(x) \rangle  \right] \leq \negl(|x|),$$
   for some negligible function $\negl$. 
\end{itemize}
\end{definition}

\noindent We also consider the notion of argument systems where the prover is restricted to be a QPT algorithm. 

\begin{definition}[Argument System]
Let $\Pi$ be an interactive protocol between a classical PPT prover $P$ and a classical PPT verifier $V$. Let $\rel(\lang)$ be the NP relation associated with $\Pi$.
\par $\Pi$ is said to be an argument system if it satisfies completeness (as defined in \Cref{def:pfsystem}) and computational soundness (defined below). 
\begin{itemize}
    \item {\bf Computational Soundness}: For every QPT prover $P^*$, every $x \notin \rel(\lang)$, for every $\poly(|x|)$-qubit state $\rho$,
    $$\prob\left[ \accept \leftarrow \langle P^*(x,\rho),V(x) \rangle  \right] \leq \negl(|x|),$$
   for some negligible function $\negl$. 
\end{itemize}
\end{definition}

\subsubsection{Quantum  Witness-Indistinguishable Proofs for NP} For our construction, we use a proof system that satisfies a property called quantum witness indistinguishability. We recall this notion below. 
\begin{definition}[Quantum Witness-Indistinguishability] 
\label{def:qwi}
An interactive protocol between a (classical) PPT prover $P$  and a (classical) PPT verifier $V$ for a language $L \in \np$ is said to be a \textbf{quantum witness-indistinguishable proof system} if in addition to completeness, unconditional soundness, the following holds: 
\begin{itemize}
    \item {\bf Quantum Witness-Indistinguishability}: For every QPT verifier $V^*$, for every $x \in \lang$, $w_1,w_2$ such that $(x,w_1) \in \rel(\lang)$ and $(x,w_2) \in \rel(\lang)$,  with $\poly(|x|)$-qubit advice $\rho$, the following holds: 
    $$\left\{ \view_{\vrfr^*}\left(\langle P(x,w_1),V^*(x,\rho) \right) \right\} \approx_{\quant} \left\{ \view_{\vrfr^*}\left(\langle P(x,w_2),V^*(x,\rho) \right) \right\}$$
\end{itemize}
\end{definition}

\paragraph{Instantiation.} By suitably instantiating the constant round WI argument system of Blum~\cite{Blu86} with statistically binding commitments (which in turn can be based on post-quantum one-way functions~\cite{Naor91}), we achieve a 4 round quantum WI proof system for NP. Moreover, this proof system is a public-coin proof system; that is, the verifier's messages are sampled uniformly at random. 

\subsection{Statistically Binding and Quantum-Concealing Commitments}
\label{sec:prelims:commit}

\noindent A bit commitment protocol is a two-party protocol defined between two parties, a committer ($\comm$) and a receiver ($\receiver$). 
\par The protocol consists of two phases:
\begin{itemize}
    \item Commit phase: in this phase, $\comm$ commits to a bit $b$ and $\receiver$ does not receive any input.\\ {\em In this work, we only consider commitments with two-message commit phases}. 
    \item Opening phase: in this phase, $\comm$ reveals $b$ and any relevant randomness used in the commit phase. $\receiver$ then outputs accept or reject.\\ {\em In this work, we consider only canonical opening phases. That is, they only consist of a single message. Moreover, the opening consists of the revealed bit and all the random bits used by $\comm$.} 
\end{itemize}

\noindent As a result of considering canonical opening phases, we have the following property, referred to as {\em message recovery} property: given just the randomness used by the committer, the receiver can recover the bit used in the commit phase. 

\paragraph{Properties.} We consider a two-message commitment scheme that satisfies the following two properties. 
\begin{definition}[Statistically Binding]
A two-message commitment scheme between a committer ($\comm$) and a receiver ($\receiver$), both running in probabilistic polynomial time, is said to satisfy statistical binding property if the following holds for any adversary $\adversary$: 
$$\prob \left[ \substack{(\bfc,r_1,x_1,r_2,x_2)  \leftarrow \adversary\\ \ \\ \text{ and }\\ \  \\  \comm(1^{\secparam},\bfr,x_1;r_1) = \comm(1^{\secparam},\bfr,x_2;r_2) = \bfc\\  \ \\ \text{ and }\\ \ \\ x_1 \neq x_2}: \bfr \leftarrow \receiver(1^{\secparam}) \right] \leq \negl(\secparam),$$
for some negligible function $\negl$. 
\end{definition}

\begin{definition}[Quantum-Concealing]
A commitment scheme $\comm$ is said to be quantum concealing if the following holds. Suppose $\adversary$ be a non-uniform QPT algorithm and let $\bfr$ be the message generated by $\adversary(1^{\secparam})$. We require that $\adversary$ cannot distinguish the two distributions,  $\{\comm(1^{\secparam},\bfr,x_1)\}$ and $\{\comm(1^{\secparam},\bfr,x_2)\}$, for any two inputs $x_1 \in \{0,1\},x_2 \in \{0,1\}$. 
\end{definition}

\paragraph{Instantiation.} We can instantiate statistically binding and quantum-concealing commitments from post-quantum one-way functions~\cite{Naor91}. 

\paragraph{Generalizations.} We can extend the above definition to commitments of long messages and not just bits. We consider a specific instantiation of commitments of long messages. To commit to a long message, the committer commits to each bit separately. Note that even this instantiation satisfies the message recovery property.

\subsection{Query Lower Bounds for Cloning Subset States}
\label{sec:querylb}
\noindent In the impossibility result (\Cref{sec:impossibility}), we use the following theorem stated and proved by Aaronson and Christiano~\cite{AC13} in the context of public-key quantum money schemes.  

\paragraph{Subspace states.} For a subspace $A \subseteq \mathbb{F}_2^n$, for some $n \in \mathbb{N}$, we denote $\ket{A}=\sum_{x \in A} \frac{1}{\sqrt{2^{|A|}}} \ket{x}$. 
We also denote $U_A$ to be the following unitary. 
It maps an $n+1$-qubit state $\ket{A}\ket{y}$ into $\ket{A}\ket{y \oplus 1}$, and for every $\ket{\psi}$ orthogonal to $\ket{A}$, it leaves $\ket{\psi}\ket{y}$ unchanged. Notice that $U_A$ can be efficiently implemented with the description of $A$. 

\paragraph{Space complexity of testing subspace states.} In \Cref{sec:impossibility}, we will need to implement the following projective measurement $\Pi = \{\underbrace{\kb{A}}_{\text{outcome}=0}, \underbrace{I-\kb{A}}_{\text{outcome}=1}\}$, where $A \subseteq \mathbb{F}_2^n$ is a subspace of dimension $k$, with low quantum memory complexity. 
\par We will demonstrate an efficient implementation of $\Pi$ that takes space $O(n)$ qubits. In order to implement $\Pi$, we first describe how to implement an unitary $C_A$ that maps $\ket{0^k}\ket{0^n} \mapsto \ket{0^k}\ket{A}$, given some basis $v_1,\ldots,v_k$ of the subspace $A$. We denote the first $k$ qubits to be the register ${\bf X}$ and the next $n$ qubits to be the register ${\bf Y}$. Notice that we can assume without loss of generality, we can assume that $v_i$ and $v_j$ are orthogonal for $i \ne j$. 

$C_A$ can be implemented as follows: 
\begin{enumerate}
    \item Apply $H^{\otimes k}$ to the first register ${\bf X}$,
    \item For each $i = 1,\ldots,k$:
    \begin{enumerate}[label*=\arabic*.]
        \item On the $i$-th qubit of ${\bf X}$ and $i$-th qubit of ${\bf Y}$, perform the following operation:
        $ \ket{b}\ket{x} \mapsto \ket{b}\ket{b\cdot v_i \oplus x}$
    \end{enumerate}
    \item For each $i = 1,\ldots,k$:
    \begin{enumerate}[label*=\arabic*.]
        \item With the $i$-th qubit of ${\bf X}$ and $i$-th qubit of ${\bf Y}$, perform the following operation:
        $ \ket{b}\ket{x} \mapsto \ket{b \oplus \langle x, v_i \rangle}\ket{x},$
        where $\langle y,z \rangle$ is the inner product of $y,z\in \mathbb{F}_2^n$.
    \end{enumerate}    
\end{enumerate}
After step $2$, we have the state of the form $\sum_{{\bf b} \in \{0,1\}^k} 2^{-k/2} \ket{{\bf b}}\ket{\oplus_{1 \leq i \leq k} {\bf b}_i v_i}$, and that step $3$ uncomputes the first register ${\bf X}$ (and here we use the crucial fact that the basis is orthogonal).

In order to finally perform the projective measurement $\{\underbrace{\kb{A}}_{\text{outcome}=0}, \underbrace{I-\kb{A}}_{\text{outcome}=1}\}$ on some state $\rho$, we can then simply prepend the state $\ket{0^k}$ to it, perform the unitary $C_A^{\dagger}$, trace out ${\bf X}$ and then measure the last register in the computational basis. If we get the value $0^n$ then the outcome of $\Pi$ is set to be $0$, otherwise it is set to be $1$. 
\par To see, why this is the correct implementation, notice that $\Tr_{{\bf X}}\left( C_{A} \ket{0^k} \ket{0^n} \bra{0^k} \bra{0^n} C_A^{\dagger} \right) = \Tr_{{\bf X}}(\ketbra{0^k}{0^k} \otimes \ketbra{A}{A}) = \ketbra{A}{A}$. 
\par We notice that to implement this projective measurement, we need only $k$ extra qubits. Thus, the total memory complexity needed to implement $\Pi$ is $n+k$ qubits. 

\paragraph{Testing subspace states.} Throughout this work, we will consider the test ${\sf Test}^{U_A}(\rho)$ to check if $\rho$ is a subspace state $\ket{A}$ when we have oracle access to $U_A$:
\begin{enumerate}
    \item Query the oracle $U_A$ on input $\rho \otimes \kb{0}$.
    \item Measure the last qubit, with output $b$. Let $\sigma_b$ be the reduced state on the first register after the measurement.
    \item Return $\kb{b} \otimes \sigma_b$.
\end{enumerate}

\noindent Whenever the first bit is $\kb{1}$, we say that the test passes, otherwise we say that the test fails.

We have the following lemma regarding the resulting state of the verification test.
\begin{lemma}
\label{lem:uncstates:test}
The state $\sigma_1$ corresponding to the quantum state returned when ${\sf Test}^{U_A}(\rho)$ passes is equal to $\kb{A}$.
\end{lemma}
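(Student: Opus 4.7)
The plan is to compute the output of ${\sf Test}^{U_A}(\rho)$ explicitly by decomposing $\rho$ with respect to the projector $P=\ketbra{A}{A}$ and its complement $Q=I-P$. Any state can be written as
\[
\rho \;=\; P\rho P + P\rho Q + Q\rho P + Q\rho Q.
\]
Because $U_A$ maps $\ket{A}\ket{y}\mapsto\ket{A}\ket{y\oplus1}$ and acts as the identity on any $\ket{\psi}\ket{y}$ with $\ket{\psi}$ orthogonal to $\ket{A}$, we have (as operators) $U_A(P\otimes X)U_A^\dagger = P\otimes X$ with the ancilla bit-flipped, and $U_A(Q\otimes X)U_A^\dagger = Q\otimes X$ untouched. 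The first step is therefore to push $U_A$ through the four cross terms above.

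With the ancilla initialized to $\ket{0}$ and then bit-flipped or not according to the $P$/$Q$ decomposition, I would get
\[
U_A(\rho\otimes\kb{0})U_A^\dagger \;=\; P\rho P\otimes\kb{1} \;+\; P\rho Q\otimes\ketbra{1}{0} \;+\; Q\rho P\otimes\ketbra{0}{1} \;+\; Q\rho Q\otimes\kb{0}.
\]
Next, I would measure the ancilla in the computational basis. The probability of outcome $b=1$ equals $\Tr(P\rho P)=\bra{A}\rho\ket{A}$ (the cross terms have zero diagonal on the ancilla), and conditional on this outcome the unnormalized state on the first register is exactly $P\rho P$.

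Finally, I would use that $P$ is a rank-one projector: $P\rho P = \ketbra{A}{A}\,\rho\,\ketbra{A}{A} = \bra{A}\rho\ket{A}\,\ketbra{A}{A}$. Dividing by the acceptance probability $\bra{A}\rho\ket{A}$ (which is nonzero whenever the test passes), the post-measurement reduced state becomes $\sigma_1 = \ketbra{A}{A}$, as claimed. There is no real obstacle here; the only care needed is to keep track of the off-diagonal blocks $P\rho Q$ and $Q\rho P$ to confirm that they contribute zero weight to the $\ket{1}\bra{1}$ ancilla sector and hence do not contaminate $\sigma_1$.
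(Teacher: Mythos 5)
Your proof is correct and is exactly what the paper's one-line remark (``follows directly from the definition of $U_A$'') compresses: decompose $\rho$ against $P=\ketbra{A}{A}$ and $Q=I-P$, push $U_A$ through, observe the cross blocks sit in the off-diagonal ancilla sector, and use that $P$ is rank one so $P\rho P$ normalizes to $\ketbra{A}{A}$. No substantive difference from the paper.
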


\noindent The proof of this lemma follows directly from the definition of $U_A$.

\paragraph{Unclonability of subspace states.} We state now the result from \cite{AC13} regarding the unclonability of subspace states. For that, let ${\cal S}$ be the set of all subspaces of $\mathbb{F}^n_2$ of dimension $\frac{n}{2}$.

\begin{lemma}[Theorem 25,~\cite{AC13}]
\label{lem:money}
Let $A \subseteq \mathbb{F}^n_2$
be a subspace sampled uniformly at random from $\mathcal{S}$. Then given $\ket{A}$, as well as oracle access to $U_A$, any quantum algorithm $C$ needs
$\Omega(\sqrt{\eps}2^{n/4})$ queries to prepare a $2n$-qubit state $\rho$ that projects into $\kb{A}^{\otimes 2}$ with probability (over choice of $A$, $C$ and the projection onto $\kb{A}^{\otimes 2}$) at least $\eps$, for all
$\eps = \omega(2^{-n/2})$.  
\par More formally, $C$ needs $\Omega(\sqrt{\eps}2^{n/4})$ queries to prepare a $2n$-qubit state $\rho$ such that,\\ $\sum_{A \in {\cal S}} \frac{1}{|{\cal S}|} \Tr(\ketbra{A}{A}^{\otimes 2} C^{U_A}(\ketbra{A}{A})) \geq \epsilon$.
\end{lemma}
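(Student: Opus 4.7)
The plan is to reduce the cloning task to the Grover lower bound on unstructured quantum search inside a set of density $2^{-n/2}$ in $\mathbb{F}_2^n$, which is $\Omega(2^{n/4})$. The intuition is that a single copy of $\ket{A}$ is worth at most one random sample from $A$ (its measurement outcome), whereas a pair of copies of $\ket{A}$ yields two independent samples that are linearly independent with high probability. Thus producing a second copy would let us find an element of $A$ outside the span of the first sample, which by the Grover bound is impossible in fewer than $\Omega(2^{n/4})$ queries.

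Concretely, suppose $C$ uses $T$ queries to $U_A$ and produces $\rho$ with $\Tr(\kb{A}^{\otimes 2}\rho) \geq \eps$ on average over $A \in \mathcal{S}$. Using the two-register projective verifier ${\sf Test}^{U_A}\otimes{\sf Test}^{U_A}$ of~\Cref{lem:uncstates:test} together with fixed-point amplitude amplification, I would build an algorithm $C'$ that, with constant probability, outputs a state within $o(1)$ trace distance of $\kb{A}^{\otimes 2}$, at a cost of $O(T/\sqrt{\eps})$ queries. Measuring both output registers of $C'$ in the computational basis then yields two samples $a_1, a_2$ drawn (approximately) i.i.d.\ uniformly from $A$; since $|A|=2^{n/2}$, these are linearly independent except with probability at most $2 \cdot 2^{-n/2}$. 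So, starting from a single copy of $\ket{A}$ and $O(T/\sqrt{\eps})$ oracle queries, with constant probability we have obtained an element $a_2 \in A$ outside the span of $a_1$.

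The last step is the matching lower bound: given a single copy of $\ket{A}$ together with oracle access to $U_A$, producing with constant probability any element of $A$ outside the span of the computational-basis measurement of that copy requires $\Omega(2^{n/4})$ queries. Indeed, once $a_1$ is fixed, the remaining randomness in $A$ is uniform over $(n/2)$-dimensional subspaces containing $a_1$, so finding such an $a_2$ is Grover search over $\mathbb{F}_2^n$ with marked density at most $2\cdot 2^{-n/2}$, giving the required $\Omega(2^{n/4})$ bound. Combining the two halves gives $T/\sqrt{\eps} = \Omega(2^{n/4})$, i.e., $T = \Omega(\sqrt{\eps}\,2^{n/4})$. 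The condition $\eps = \omega(2^{-n/2})$ is exactly what is needed to dominate the trivial ``lucky measurement'' strategy whose overlap with $\kb{A}^{\otimes 2}$ is already $\Theta(2^{-n/2})$.

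The main obstacle will be making the reduction from cloning to ``find a second independent element'' sound at the quantum level, in particular keeping track of the fact that the starting advice is the quantum state $\ket{A}$ rather than a classical sample $a_1$. A clean way to handle this is a hybrid that replaces $\ket{A}$ by the maximally mixed state over $A$: the effect on any quantum algorithm is bounded by a swap-type-test calculation, and this is exactly where the factor $\sqrt{\eps}$ enters the lower bound. Pushing this argument through carefully, as in~\cite{AC13}, is the real technical content; the amplification and the measurement step are routine once the conditional lower bound is in place.
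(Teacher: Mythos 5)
This lemma is not proved in the paper; it is imported verbatim as Theorem~25 of~\cite{AC13}, whose proof is the \emph{inner-product adversary method} — a variant of the quantum adversary method designed precisely to handle the quantum advice state $\ket{A}$ by tracking how the pairwise inner products between different ``possible worlds'' $A$ decay with each query. Your sketch takes a genuinely different route, and the preparatory steps (fixed-point amplitude amplification to boost the overlap from $\eps$ to $1-o(1)$ at cost $O(T/\sqrt{\eps})$ queries, then measurement to obtain two near-independent elements of $A$) are plausible and even nicely explain where the $\sqrt{\eps}$ factor and the threshold $\eps=\omega(2^{-n/2})$ come from.

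The gap is in your final step, the ``Grover lower bound,'' and it is not a routine detail. First, $U_A$ is a reflection about the \emph{state} $\ket{A}$ — it maps $\ket{A}\ket{y}\mapsto\ket{A}\ket{y\oplus 1}$ and fixes everything orthogonal to $\ket{A}$ — not a membership oracle for the \emph{set} $A$. Applied to a classical basis vector $\ket{x}\ket{y}$ with $x\in A$, $U_A$ does nothing distinctive, so the hybrid/adversary argument behind Grover's $\Omega(\sqrt{N/K})$ bound does not apply in this oracle model. Second, even granting an oracle for which Grover's bound would make sense, you would still need to argue that the single quantum advice copy $\ket{A}$ gives the algorithm no more leverage than one classical sample $a_1\in A$. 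That is exactly the content that cannot be waved away: $\ket{A}$ is a coherent superposition over all of $A$, the algorithm need not measure it in the computational basis, and standard query lower bounds do not handle problem-dependent quantum advice. Your last paragraph concedes that making this step sound ``as in~\cite{AC13} is the real technical content,'' but that is not a detail to be filled in — it is the whole theorem. As written, the argument reduces the claim to a different lower bound that is neither a known result nor any easier to prove.
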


\section{Space-bounded Quantum Zero-Knowledge: Definitions}
We present the definition of quantum zero knowledge with space bounded simulation. First, we will revisit the definition of zero-knowledge in~\Cref{sec:fully-black-box}. Towards defining space-bounded simulation, we present the computational model in~\Cref{sec:comp:model}. This model will explicitly capture the interaction between the classical memory and the quantum memory. We then define space-bounded quantum zero-knowledge in \Cref{sec:def-space-bounded-zk}.   

\newcommand{\purify}{\mathscr{P}}
\newcommand{\channel}{\mathscr{C}}
\subsection{Defining Quantum Zero-Knowledge}
\label{sec:fully-black-box}
\noindent We discuss the different types of quantum zero-knowledge below. We first start with the notion of quantum zero-knowledge, referred to as {\em black-box} quantum zero-knowledge, considered in the literature.

\paragraph{Black-Box QZK.} Consider the following two definitions. 

\begin{definition}[Canonical purification]
\label{def:can:purification}
Let $V$ be a quantum channel that is implemented with quantum gates, intermediate measurements and trace-out operations. We denote the canonical purification of $V$ as $\purify(V)$, which consists of the unitary circuit that performs all the measurements coherently and replaces all the trace-out operations with identity gates.
\end{definition}

\noindent Any circuit can be expressed as its purification followed by measurements and trace out operations.

\begin{definition}
\label{def:notation:channel}
Let $V$ be a party in a $k$-round interactive protocol. We define  $\channel(V) = (\Phi_1,...,\Phi_k)$, where $\Phi_i : \mathcal{M}_i \times \mathcal{V}_i \to \mathcal{M}_{i+1} \times \mathcal{V}_{i+1}$ to be the quantum channel corresponding to the computation performed by $V$ on the $i^{th}$ round of the protocol, where $\mathcal{V}_i$ is the Hilbert space corresponding to the memory of $V$ on round $i$ and $\mathcal{M}_i$ is the Hilbert space corresponding to message sent in round $i$. 
\par We denote $\purify(V)=(\purify(\Phi_1),\ldots,\purify(\Phi_k))$, where $\channel(V)=(\Phi_1,\ldots,\Phi_k)$ and $\purify$ is as defined in~\Cref{def:can:purification}. Similarly, we define $\purify(V)^{\dagger}=\left(\purify(\Phi_1)^{\dagger},\ldots,\purify(\Phi_k)^{\dagger}\right)$. 
\end{definition}

\noindent Most of the recent works~\cite{Wat09,ACP21,CCY21,CCLY21,LMS21,CCLY22} consider the following definition of quantum zero-knowledge.

\begin{definition}[Black-Box Quantum Zero-Knowledge]
\label{def:standard-bb}
A proof (resp., argument) system $(P,V)$ for an NP relation $\rel$ is black-box quantum zero-knowledge if there exists a QPT simulator $\Sim$ such that for every QPT verifier $V^*$, for every $(x,w) \in \rel$, for every $\poly(|x|)$-qubit state $\rho$, 
$$ \view_{V^*}\left(\langle P(x,w),V^*(x,\rho) \rangle\right) \approx_{\quant,\eps(|x|)} {\Sim^{{\purify(V^*),\purify(V^*)^{\dagger}}}}(x,\rho),$$
where $\Sim$ having oracle access to $\purify(V^*)$ and $\purify(V)^{\dagger}$ is denoted by $\Sim^{\purify(V^*),\purify(V^*)^{\dagger}}$, $\purify(\cdot)$ is as defined in~\Cref{def:can:purification,def:notation:channel} and $\eps(|x|)$ is a negligible function in $|x|$.
\end{definition}

\noindent {\em Comparison with Classical Black-Box ZK}: In the definition of black-box classical zero-knowledge, the simulator has oracle access to the verifier circuit $V$. However, in~\Cref{def:standard-bb}, the simulator does not have direct access to the verifier's circuit. Instead it has oracle access to the purification of the verifier circuit. Not only that, it also has oracle access to the inverse of the purification as well. Thus, we believe that the above definition is not the true quantum analogue of classical black-box zero-knowledge.

\paragraph{Fully Black-Box QZK.} We consider a new definition that resembles the black-box definition in the classical setting. We call this definition fully black-box QZK. Just like the classical setting, the simulator in the fully black-box QZK only gets oracle access to the verifier circuit, and not the purified version as considered in~\Cref{def:standard-bb}. 

We now present our definition of fully black-box zero-knowledge.

\begin{definition}[Fully Black-Box Quantum Zero-Knowledge] \label{def:fully-bb}
A proof (resp., argument) system $(P,V)$ for an NP relation $\rel$ is fully black-box quantum zero-knowledge if there exists a QPT polynomial-time simulator $\Sim$ such that for every QPT malicious verifier $V^*$, for every $(x,w) \in \rel$, for every state $\rho$,
$$ \view_{V^*}\left(\langle P(x,w),V^*(x,\rho) \rangle\right) \approx_{\quant,\eps(|x|)} \Sim^{\channel(V^*)}(x,\rho),$$
where $\Sim$ having oracle access to $\channel(V^*)$ is denoted by $\Sim^{\channel(V^*)}$, $\channel(\cdot)$ is as defined in~\Cref{def:notation:channel} and $\eps(|x|)$ is negligible in $|x|$. 
\end{definition}

\paragraph{Non Black-Box QZK.} We can also define the most general definition of QZK where the simulator can depend arbitrary on the code of the verifier. We call this non black box QZK. This definition was considered in a few recent works~\cite{BS20,AL20,BKS21}.

\begin{definition}[Non Black-Box Quantum Zero-Knowledge] \label{def:non-bb}
A proof (resp., argument) system $(P,V)$ for an NP relation $\rel$ is non black-box quantum zero-knowledge if for every QPT verifier $V^*$, there exists a QPT polynomial-time simulator $\Sim$, for every $(x,w) \in \rel$, for every $\poly(|x|)$-qubit state $\rho$,
$$ \view_{V^*}\left(\langle P(x,w),V^*(x,\rho) \rangle\right) \approx_{\quant,\eps(|x|)} \Sim(x,\rho),$$
where $\eps(|x|)$ is negligible in $|x|$. 
\end{definition}

\subsection{Computational model}
\label{sec:comp:model}
\noindent Towards defining space-bounded simulation, we define the model of quantum computation albeit using a different terminology. The new terminology is necessary to carefully model the interaction between the classical memory and the quantum memory since we are explicitly measuring the amount of quantum memory required for any computation. 
\par First, we start with a weaker computational model. The following definition considers quantum circuits that take as input a unitary and a state and output the computation of the unitary on the state. 

\begin{definition}[$M$-qubit programmable quantum computer]
We say that $Q$ is an $M$-qubit programmable quantum computer, if $Q$ receives as an input an $M$-qubit quantum state $\ket{\psi}$, the (classical) description of an $M$-qubit unitary $U$ and it outputs the state $U\ket{\psi}$. We depict such a quantum device in \Cref{fig:qdevice}.
\end{definition}

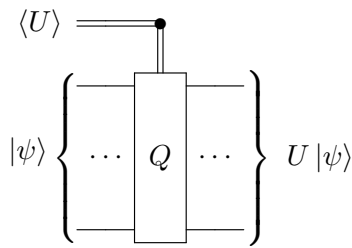
\begin{figure}[!htb]
$$
\Qcircuit @C=1em @R=1.6em {
 \lstick{\langle U \rangle} & \cw & \control \cw  & &  \\
\lstick{} & \qw & \multigate{2}{Q}  \cwx & \qw & \qw &  \\
\lstick{} & \cdots & \nghost{Q} &  \cdots & & \rstick{U\ket{\psi}} \\
\lstick{} & \qw & \ghost{Q} & \qw  & \qw  \inputgroupv{2}{4}{.8em}{2.3em}{\ket{\psi}} \gategroup{2}{5}{4}{5}{1em}{\}}\\
}
$$
\caption{$M$-qubit programmable quantum computer}
\label{fig:qdevice}
\end{figure}

\noindent We now generalize the above model in the following way. After the computation of the unitary, we allow for some classical preprocessing. This is enabled by measuring a subset of the qubits of the state obtained after the unitary computation and the resulting measurement outcomes can be used to choose the function to be applied on the classical memory. 
\par This entire computation is considered to be a block of computation. Later, we will consider the most general model where the computation is defined to a sequence of blocks. 

\begin{definition}[$M$-qubit of unitary quantum computation with classical post-processing]
\label{def:block}
An $M$-qubit of unitary quantum computation with classical post-processing is defined by a tuple $(c,\langle U \rangle, \ell, \ket{\psi}, f)$, where $c \in \{0,1\}^{L}$ is a classical register, $\langle U \rangle$ is the description of an $M$-qubit unitary, $\ell \in [M+1]$ is the number of qubits that will be measured, $\ket{\psi}$ is the input state of the quantum computation and $f$ is a classical function used for post-processing.

The computation proceeds as follows:
\begin{enumerate}
    \item Using an $M$-qubit programmable quantum computer $Q$, we run the unitary $U$ on the state $\ket{\psi}$,
    \item We measure the first $\ell$ qubits of the output, having outcome $o$ and the state on the last $M-\ell$ qubits is $\ket{\phi_o}$,
    \item We apply a classical function $f$ on $(c,\langle U \rangle, \ell, o)$, resulting in $(c', \langle U'\rangle, \ell')$,
    \item The first $\ell$ qubits are reset to $\ket{0}$,
    \item The output of the computation is $(c', \langle U'\rangle, \ell', \ket{\psi'})$, where $\ket{\psi'} = \ket{0^\ell}\ket{\phi_o}$.
\end{enumerate}
\noindent We depict such a computation in \Cref{fig:block-computation}.
\end{definition}

\begin{figure}[!htb]
$$
\Qcircuit @C=1em @R=1.6em {
 \lstick{c} & \cw    &  \cw  &  \cw & \cw & \cw & \cmultigate{2}{f} & \cw & \cw &\cw & \cw & \rstick{c'} & & &  \\
 \lstick{\langle U \rangle} & \cw & \control \cw  & \cw & \cw & \cw & \cghost{f}   & \cw &\cw & \cw &  \cw & \rstick{\langle U' \rangle}  \\
 \lstick{\ell} & \cw & \cwx \cw  &  \cw  &  \cw & \control \cw & \cghost{f}   & \cw &\cw & \cw &  \cw & \rstick{\ell'}  \\
\lstick{} & \qw & \multigate{3}{Q}  \cwx & \qw & \qw & \meter \cwx   & \control \cw \cwx   & & & \ket{0}& & \qw   \\
\lstick{} & \qw & \ghost{Q} & \qw  & \qw & \meter & \control \cw \cwx & & & \ket{0} & & \qw & & \ket{\psi'} \\
\lstick{} & \cdots & \nghost{Q} &  \cdots & &  &  &  & & & \cdots \\
\lstick{} & \qw & \ghost{Q} & \qw  & \qw  & \qw  & \qw  & \qw  & \qw  & \qw  & \qw & \qw 
\inputgroupv{4}{7}{.8em}{2.3em}{\ket{\psi}} \gategroup{4}{12}{7}{12}{1em}{\}}
& & & & & B_i
\\
}
$$
\caption{$M$-qubit of unitary quantum computation with classical post-processing}
\label{fig:block-computation}
\end{figure}
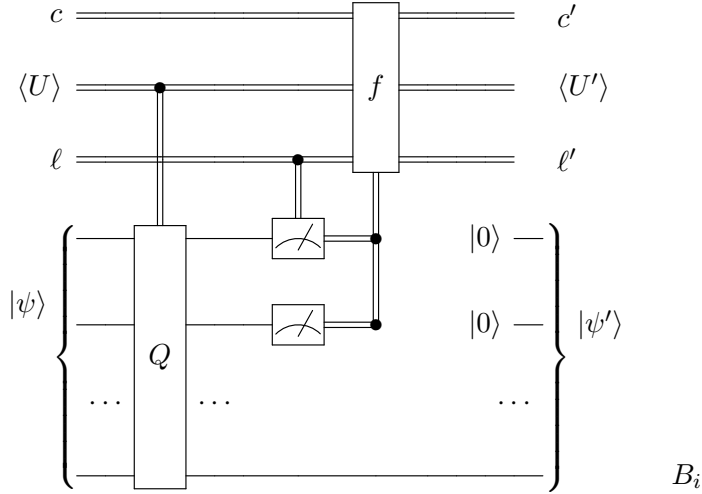

\paragraph{Our Model.} We are now ready to define the model that we consider in this work. As mentioned earlier, the model below comprises of a sequence of blocks of computation, where each block is an $M$-qubit of unitary computation with classical post-processing. 

\begin{definition}[$M$-qubit quantum channel]
\label{def:mqubitchannels}
An $M$-qubit of quantum computation with classical post-processing is defined by a tuple $(c_0,\langle U_0 \rangle, \ell_0, \ket{\psi_0}, f)$, where $c \in \{0,1\}^{L}$ is a classical register, $\langle U_0 \rangle$ is the description of an $M$-qubit unitary, $\ell_0 \in [M+1]$ is the number of qubits that will be measured, $\ket{\psi_0}$ is the input state of the quantum computation and for $f$ is a classical function used for post-processing.

The computation proceeds as follows:
We run an M-qubit of unitary quantum computation with classical post-processing (\Cref{def:block}) defined by $(c_i,\langle U_i \rangle, \ell_i, \ket{\psi_i}, f)$ and we let the output of the computation be $(c_{i+1}, \langle U_{i+1} \rangle, \ell_{i+1}, \ket{\psi_{i+1}})$. If $\langle U_{i+1} \rangle = \perp$, this process stops and output $c', \ell'$ and $\ket{\psi'}$.

\par We depict such a computation in~\Cref{fig:full-model}. 
\end{definition}

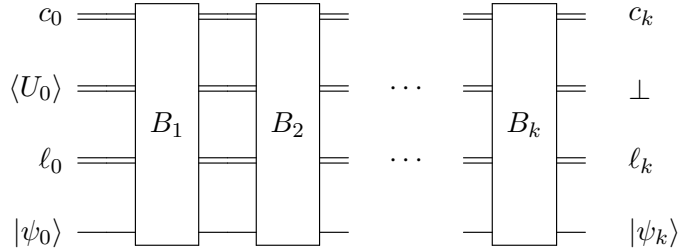
\begin{figure}[!htb]
$$
\Qcircuit @C=1em @R=1.6em {
 \lstick{c_0} & \cw    &  \cghost{B_1}  & \cw &  \cghost{B_2} & \cw & 
 &   & & 
 & \cghost{B_k} & \cw & \rstick{c_{k}} \\
 \lstick{\langle U_{0} \rangle} & \cw & \cghost{B_2}  & \cw &  \cghost{B_2} & \cw &  &  \cdots & & 
 & \cghost{B_k} & \cw & \rstick{\perp} \\
 \lstick{\ell_0} & \cw & \cghost{B_2}  & \cw &  \cghost{B_2} & \cw &  &  \cdots & & 
 & \cghost{B_k} & \cw & \rstick{\ell_{k}} \\
\lstick{\ket{\psi_0}} & \qw & \multigate{-3}{B_1} & \qw &  \multigate{-3}{B_2} & \qw &  &   & & 
 & \multigate{-3}{B_k} & \qw & \rstick{\ket{\psi_k}}
\\
}
$$
\caption{$M$-qubit quantum computation with classical post-processing. Each block $B_i$ corresponds to the circuit in \Cref{fig:block-computation} with the corresponding choices of input and classical post-processing $f_i$.}
\label{fig:full-model}
\end{figure}

\noindent We consider $M$-qubit channels that are efficiently implementable. 

\begin{definition}[$M$-qubit Quantum Polynomial Time Algorithm]
\label{def:qpt:space}
A (non-uniform) quantum polynomial algorithm $C$ is said to be an $M$-qubit (resp. non-uniform) strict quantum polynomial-time algorithm if it is an $M$-qubit quantum channel and there is a polynomial $p$ such that each of the blocks runs in time at most $p(|x|)$ and after $p(|x|)$ rounds, the procedure stops (i.e., there exists some $k \leq p(|x|)$ such that $\langle U_k \rangle = \perp$).

A (non-uniform) quantum polynomial algorithm $C$ is said to be an $M$-qubit (resp. non-uniform) expected quantum polynomial-time algorithm if it is an $M$-qubit quantum channel and there is a polynomial $p$ such that each of the blocks runs in time at most $p(|x|)$ and the expected number of rounds of the procedure is $p(|x|)$ (i.e., $\mathbb{E}[ k \; |   \langle U_k \rangle = \perp] \leq p(|x|)$, where the expectation is over the measurement outcomes of the procedure).

\end{definition}

\paragraph{Oracle Access.} In order to discuss space-bounded black-box zero-knowledge, we need to define how the to give oracle access to an $M$-qubit quantum channel and how the memory of the overall channel is counted.

An $M_A$-qubit quantum channel $A$ has oracle access to an $M_B$-qubit quantum channel $B$, for $M_A\geq M_B$, then $A$ prepares a  classical register ${\bf B_C}$ and $M_B$-qubit quantum register ${\bf B_Q}$, and then runs the channel $B$ on input registers ${\bf B_C}$ and ${B_Q}$. $A$ then has access to the output of $B$, consisting again of a classical register and an $M_B$-qubit quantum register.

\subsection{Space-Bounded Quantum Zero-Knowledge}
\label{sec:def-space-bounded-zk}

\noindent Equipped with the computational model described in~\Cref{sec:comp:model}, we state the definition of space-bounded quantum zero-knowledge. Roughly speaking, the definition states that the number of qubits consumed by the simulator should be a fixed polynomial in the number of qubits consumed by the verifier. Ideally, we would like the polynomial to be of low degree; however, we place no such restriction on the definition below. 

\begin{definition}[$(s, p)$-space-bounded QZK] Let ${\cal X} \in \{\text{fully black box},\text{black box},\text{non black box}\}$. An ${\cal X}$ QZK proof (resp., argument) system $(P,V)$ for a language $\lang$ is said to be $(s,\functionspace)$-space-bounded ${\cal X}$ QZK proof (resp., argument) system if the following property is satisfied:
\begin{itemize}

    \item Suppose $V^*$ is an $\ell_{V^*}$-qubit QPT (\Cref{def:qpt:space}), where $\ell_{V^*}=s(|x|)$. Then, ${\sf Sim}$ is an $\ell_{{\sf Sim}}$-qubit QPT, where $\ell_{{\sf Sim}} \leq f(s(|x|))$.  
\end{itemize}
\end{definition}

\noindent In the case of the black box QZK (\Cref{def:standard-bb}), the amount of quantum memory utilized by $\Sim$ is compared against the amount of quantum memory utilized by $V^*$ before its purification. Moreover, by definition, the quantum memory of $\Sim$ contains the memory of the purified verifier $V^*$. For any polynomial $\functionspace(\cdot)$, for any $s$, it is possible to come up with an $s$-qubit channel\footnote{For example, consider an $s$-qubit channel contains $\functionspace(s)$ bits of classical memory. After canonical purification, the overall amount of quantum memory would be more than $\functionspace(s)$ qubits. Note that we don't rule out other types of purification which critically use the structure of the $s$-qubit channel to not increase the quantum memory significantly. However, this is not in the spirit of black box QZK and we consider such variants to be closer to non black box QZK.} (\Cref{def:mqubitchannels}) such that purifying this channel would require strictly more than $\functionspace(s)$ qubits. This means that any black box simulator against a verifier represented as this $s$-qubit channel would utilize more than $\functionspace(s)$ qubits. Hence, $(s,\functionspace)$-space bounded black box QZK, for any polynomial $\functionspace$, is impossible to achieve. On the other hand, for superpolynomial $f(\cdot)$, existing works demonstrate that $(s,\functionspace)$-space bounded QZK exists .
\par From the above discussion, it follows that the only two meaningful notions of space bounded QZK to consider are fully black box QZK and non black box QZK.

\section{Zero-Knowledge against Logarithmic Quantum Space Verifiers}
\label{sec:positive}

\noindent We present the construction of an argument system $\Pi$ for the NP relation $R(\lang)$ which is space-bounded fully black-box zero-knowledge against  $O(\log(\secparam))$-qubit quantum channels (\Cref{def:mqubitchannels}). 
\par We use the following tools in our construction. 
\begin{itemize}
    \item Statistically-binding and quantum-concealing commitment protocol (see Section~\ref{sec:prelims:commit}), denoted by $(\comm,\receiver,\open)$.
    \item Four round quantum witness-indistinguishable proof system $\protwi$ (Definition~\ref{def:qwi}). The relation associated with $\protwi$, denoted by $\relwi$, is defined as follows: 
    $$\relwi = \Bigg\{ \left(\left( x,\bfc^*,\bfc^{**},\left\{\alpha_{i,b} \right\}_{i \in [\secparam],b\in \{0,1\}}\right);\ \left( w,\bfr^*,r,i^* \right) \right)\ :\ (x,w) \in \rel(\lang)\ \ \text{or}$$
    $$\open(\bfc^*)=\left( \bfr^*,r \right) \text{ and }\open(\bfc^{**})=\left( (i^*,\alpha_{i^*,0} \oplus \alpha_{i^*,1}),\bfr^* \right) \Bigg\}$$
\end{itemize}

\noindent We present the construction of $\Pi$ in~\Cref{fig:pcoinczk}.

\begin{figure}[!htb]
   \begin{center} 
   \begin{tabular}{|p{16cm}|}
    \hline \\
    {\bf Input of $P$}: Instance $x \in \lang$ along with witness $w$.  \\
    {\bf Input of $V$}: Instance $x \in \lang$. \\
    \ \\
    \noindent {\bf Trapdoor Phase}: 
    \begin{enumerate}
        \item $P \leftrightarrow V$: $P$ and $V$ engage in an execution of the commitment protocol, where $P$ plays the role of the committer and $V$ plays the role of the receiver. $P$ commits to $\bfr^*$, where $\bfr^* \xleftarrow{\$} \{0,1\}^{\secparam}$. Denote the commitment to be $\bfc^*$. Let the randomness used by $P$ in the commitment protocol be $r$. 
       
        \item $V \leftrightarrow P$: $P$ and $V$ run $2\secparam$ parallel executions of the commitment protocol, with $V$ playing the role of the committer and $P$ playing the role of the receiver. We index the executions using the notation $(i,b)$, where $i \in [\secparam]$ and $b\in \{0,1\}$. In the $(i,b)^{th}$ execution, $V$ commits to $\alpha_{i,b}$, where $\alpha_{i,b} \xleftarrow{\$}  \{0,1\}^{\secparam}$. Denote the commitments by $\left( \{\bfc_{i,b}\}_{i \in [\secparam],b\in \{0,1\}} \right)$.

        \item $P \rightarrow V$: $P$ sends $\bfb_1,\ldots,\bfb_{\secparam}$, where $\bfb_i \xleftarrow{\$} \{0,1\}^{\secparam}$. 
        
        \item $V \rightarrow P$: $V$ sends  the openings of all the commitments $(\{\bfc_{i,\bfb_i}\}_{i \in [\secparam]})$. Denote the opening of the $(i,\bfb_i)^{th}$ commitment to be $o_{i,\bfb_i}=\open(\bfc_{i,\bfb_i})$. If any of the openings are invalid, $P$ rejects. Otherwise, $P$ parses $o_{i,\bfb_i}$ as $(\alpha_{i,\bfb_i},r_{i,\bfb_i})$. 
        
        \item $P \leftrightarrow V$: $P$ and $V$ engage in another execution of the commitment protocol, where $P$ plays the role of the committer and $V$ plays the role of the receiver. $P$ commits to $0$, using the randomness $\bfr^*$. Denote the commitment to be $\bfc^{**}$. 
        
        \item $V \rightarrow P$: $V$ sends the rest of the openings of $\left( \{\bfc_{i,b}\}_{i \in [\secparam],b\in \{0,1\}} \right)$. That is, it sends $\{o_{i,1-\bfb_i}\}_{i \in [\secparam]}$, where $o_{i,1-\bfb_i}=\open(\bfc_{i,1-\bfb_i})$. If any of the openings are invalid, $P$ rejects. Otherwise, $P$ parses $o_{i,1-\bfb_i}$ as $(\alpha_{i,1-\bfb_i},r_{i,1-\bfb_i})$.   
 
        \end{enumerate}

    \ \\
    \noindent {\bf WI Phase}: $P$ and $V$ engage in $\protwi$ with the common input being the following: $$\bfx_{\sf wi}= \left( x,\bfc^*,\bfc^{**},\left\{\alpha_{i,b} \right\}_{i \in [\secparam],b\in \{0,1\}}\right)$$ Additionally, $P$ uses the witness  $(w,\bot,\bot,\bot)$. \\
    \ \\
    \hline
   \end{tabular}
    \caption{Description of the protocol $\Pi$.}
    \label{fig:pcoinczk}
    \end{center}
\end{figure}

\paragraph{Completeness.}\ The completeness of the protocol $\Pi$ in~\Cref{fig:pcoinczk} follows from the completeness of $\protwi$.

\subsection{Proof of Soundness}
\begin{proposition}
The protocol $\Pi$ in~\Cref{fig:pcoinczk} satisfies soundness.
\end{proposition}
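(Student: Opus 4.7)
The plan is to proceed by contradiction, combining three ingredients: the unconditional soundness of $\protwi$, the statistical binding of the commitment scheme, and its quantum-concealing property. Suppose a QPT cheating prover $P^*$ makes $V$ accept on some $x \notin \lang$ with non-negligible probability $\epsilon$.

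First, I reduce ``$V$ accepts'' to a statement about the unique value $v^{**}$ to which $\bfc^{**}$ binds. By the soundness of $\protwi$, whenever $V$ accepts, the WI statement lies in $\lang(\relwi)$, except with negligible probability. Since $x \notin \lang$, this forces the trapdoor branch of $\relwi$ to hold: $\bfc^{**}$ opens to $(i^*, \alpha_{i^*,0} \oplus \alpha_{i^*,1})$ for some $i^*$, using as randomness the value $\bfr^*$ committed in $\bfc^*$. By statistical binding, $\bfr^*$ and $v^{**}$ are uniquely determined by $\bfc^*$ and $\bfc^{**}$ respectively (except with negligible probability). Averaging over $i^* \in [\secparam]$, I fix a specific index $i^*$ with $\prob[v^{**} = (i^*, \alpha_{i^*,0} \oplus \alpha_{i^*,1})] \geq (\epsilon - \negl(\secparam))/\secparam$.

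The key structural observation is that $\bfc^{**}$ is sent in step 5, before $V$ reveals $\alpha_{i^*,1-\bfb_{i^*}}$ in step 6. Since $P^*$ already knows $\alpha_{i^*,\bfb_{i^*}}$ from step 4, the event above is equivalent to $P^*$ correctly predicting $\alpha_{i^*,1-\bfb_{i^*}}$ at step 5. The purpose of having $P^*$ commit to $\bfr^*$ in step 1 (before seeing any $\bfc_{i,b}$), and requiring $\bfc^{**}$ to use this $\bfr^*$, is precisely to rule out a mauling attack: $\bfc^{**}$ cannot be obtained by mauling the verifier's commitments, because an arbitrary combination of those commitments will not generally be openable with the pre-committed randomness $\bfr^*$. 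Hence $v^{**}$ must be a value $P^*$ chose explicitly, and matching $\alpha_{i^*,0} \oplus \alpha_{i^*,1}$ amounts to predicting the hidden half $\alpha_{i^*,1-\bfb_{i^*}}$.

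Finally, I turn this prediction into a reduction to quantum concealing of $\bfc_{i^*,1-\bfb_{i^*}}$. The reduction $\adversary$ guesses $i \in [\secparam]$ and $b \in \{0,1\}$, plays the verifier against $P^*$, commits honestly to random values at all positions $(i', b') \neq (i, b)$, and embeds the concealing challenge between two freshly sampled messages $\alpha^{(0)}, \alpha^{(1)}$ at position $(i, b)$. If $\bfb_i \neq 1-b$, $\adversary$ aborts; with probability $1/2$ this guess is correct, and in that case the challenge position is scheduled to be opened only in step 6, so $\adversary$ can faithfully simulate steps 2--5. The main obstacle I expect is step 6 and the subsequent WI phase, where $\adversary$ lacks the opening of the challenge commitment; I plan to continue the protocol using one of the two hiding-challenge candidates as the claimed opened value, and to use the WI verifier together with the statistical binding of $\bfc^{**}$ as a decision oracle that signals which of the two challenge values was actually committed, yielding a non-negligible advantage in the quantum-concealing game and hence the desired contradiction.
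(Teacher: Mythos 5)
Your overall architecture matches the paper's: invoke the (unconditional) soundness of $\protwi$ to reduce to bounding the probability that $\bfc^{**}$ commits to some $\alpha_{i^*,0}\oplus\alpha_{i^*,1}$, exploit that $\bfc^{**}$ is sent at step~5 before step~6's openings, and then reduce a successful prediction to a break of quantum concealing. The paper does this via a $2\secparam$-step hybrid argument over all commitment positions (followed by an information-theoretic union bound in the final hybrid) rather than guessing a single position $(i,b)$ at a $1/(2\secparam)$ loss; that difference is cosmetic.

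However, your last paragraph has a concrete gap. You propose to handle ``$\adversary$ lacks the opening of the challenge commitment'' by ``continu[ing] the protocol using one of the two hiding-challenge candidates as the claimed opened value'' and then using the WI verifier as a decision oracle. This will not work: in the concealing game, $\adversary$ receives only the challenge commitment, not its randomness, so it cannot produce a valid opening of $\bfc_{i^*,1-\bfb_{i^*}}$ to either candidate. A made-up opening will be rejected by $P^*$ (who verifies openings per the protocol, and the scheme is statistically binding), so the run aborts at step~6 and the WI phase is never reached regardless of the hidden bit -- giving $\adversary$ zero advantage, not a decision oracle. You also assert without justification that WI acceptance would reveal the hidden challenge bit; even if the run reached WI, that inference is not established.

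The paper's resolution is simpler and avoids step~6 entirely: stop the simulation right after $\bfc^{**}$ is received in step~5. The remaining issue -- how to learn what $\bfc^{**}$ commits to without continuing the protocol -- is handled by hardcoding $\bfr^*$ (the value $P^*$ committed to in $\bfc^*$, and the randomness the protocol forces $P^*$ to reuse in $\bfc^{**}$) as non-uniform advice for the reduction, and then using the commitment scheme's \emph{message-recovery} property (given the committer's randomness, the committed message can be reconstructed from the commitment string). This is exactly what makes the ``commit to $\bfr^*$ up front'' device useful: it is not merely an informal anti-mauling heuristic, it is the mechanism by which the reduction extracts $P^*$'s committed guess. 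You should replace the WI-as-oracle plan with this extraction step.
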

\begin{proof}
Let $P^*$ be a (non-uniform) malicious prover. Let $x \in \{0,1\}^*$ be such that $x \notin \lang$. We claim that  
$\prob\left[ \accept \leftarrow \langle P^*(x),V(x) \rangle \right] \leq \negl(\secparam)$.
\par From the soundness of $\protwi$, the following holds:
\begin{eqnarray}
\label{eqn:soundness}
\prob\left[ \accept \leftarrow \langle P^*(x),V(x) \rangle \right] \leq \prob\left[ \bfx_{\sf wi} \in \lang_{\sf wi} \right] + \negl(\secparam),
\end{eqnarray}
where $x_{\sf wi}$ is a random variable that is assigned the WI instance sampled in the execution of $P^*(x)$. 
\par Therefore, to prove soundness, it suffices upper bound the value $\eps := \prob[\bfx_{\sf wi} \in \lang_{\sf wi}]$.

\begin{lemma}
\label{clm:wisoundness}
$\eps \leq \negl(\secparam)$.
\end{lemma}
\begin{proof}
Let $\bfx_{\sf wi}= \left( x,\bfc^*,\bfc^{**},\left\{\alpha_{i,b} \right\}_{i \in [\secparam],b\in \{0,1\}}\right)$ be the $\protwi$ instance computed during the execution of $P^*(x)$ and $V(x)$. Define the following event, parameterized by a set of values: \\

\noindent Event ${\bf E}\left[ \left\{ \alpha_{i,b} \right\}_{i \in [\secparam],b \in\{0,1\}} \right]$: Output 1 if there exists an ${\bf i} \in [\secparam]$ such that the commitment ${\bf c}^{**}$ sent by $P^*$ is a commitment of $(\alpha_{{\bf i},0} \oplus \alpha_{{\bf i},1})$. \\

\noindent Consider the following hybrid argument. \\

\noindent \underline{$\hybrid_0$}: This corresponds to the execution of $P^*(x)$ and $V(x)$.
\par Let $\{\alpha_{i,b}\}_{i \in [\secparam],b \in \{0,1\}}$ be the values committed  by $V$ in Step 2. Since $x \notin \lang$, the following holds: 
$$\prob\left[ {\bf E}\left[ \left\{ \alpha_{i,b} \right\}_{i \in [\secparam],b \in\{0,1\}} \right] = 1 \text{ in }\hybrid_0 \right] = \prob \left[ \bfx_{\sf wi} \in \lang_{\sf wi} \right] = \eps$$

\noindent \underline{$\hybrid_1$}: We modify the description of $V(x)$ as follows. Sample $\alpha_i \xleftarrow{\$} \{0,1\}^{\secparam},\alpha_{i,b} \xleftarrow{\$} \{0,1\}^{\secparam}$ such that $\alpha_i = \alpha_{i,0} \oplus \alpha_{i,1}$. 
\par The two hybrids $\hybrid_0$ and $\hybrid_1$ are identical. Thus, we have the following:
$$\prob\left[ {\bf E}\left[ \left\{ \alpha_{i,b} \right\}_{i \in [\secparam],b \in\{0,1\}} \right] = 1 \text{ in }\hybrid_1 \right] = \eps$$

\noindent \underline{$\hybrid_{2.(i^*,b^*)}$ for $i^*\in [2\secparam],b^* \in \{0,1\}$}: The verifier behaves as in $\hybrid_1$ except for the following: for every $i \in [\secparam]$, it additionally samples $\beta_{i,0} \xleftarrow{\$} \{0,1\}^{\secparam},\beta_{i,1} \xleftarrow{\$} \{0,1\}^{\secparam}$. Furthermore, for every $i \leq i^*$, $b \leq b^*$, the $(i,b)^{th}$ committed value by $V$ in Step 2 is $\beta_{i,b}$. We emphasize that for every $i \in [\secparam]$, it still samples $\alpha_{i,0} \xleftarrow{\$}\{0,1\}^{\secparam}, \alpha_{i,1} \xleftarrow{\$} \{0,1\}^{\secparam}$.
\par We prove the following two claims.  

\begin{claim}
\label{clm:soundness:first}
Assuming the quantum-concealing property of $(\comm,\receiver,\open)$, there exists a negligible function $\nu(\secparam)$ such that the following holds:
\begin{enumerate}
    \item For every $i^* \in [\secparam]$,
    $$\left| \prob\left[ {\bf E}\left[ \left\{ \alpha_{i,b} \right\}_{i \in [\secparam],b \in\{0,1\}} \right] \text{ in } \hybrid_{2.(i^*.0)}  \right] -  \prob\left[ {\bf E}\left[ \left\{ \alpha_{i,b} \right\}_{i \in [\secparam],b \in\{0,1\}} \right] \text{ in } \hybrid_{2.(i^*.1)} \right] \right| \leq \nu(\secparam)$$
    \item For every $i^* \in [\secparam-1]$,
    $$\left| \prob\left[ {\bf E}\left[ \left\{ \alpha_{i,b} \right\}_{i \in [\secparam],b \in\{0,1\}} \right] \text{ in } \hybrid_{2.(i^*-1.1)}  \right] -  \prob\left[ {\bf E}\left[ \left\{ \alpha_{i,b} \right\}_{i \in [\secparam],b \in\{0,1\}} \right] \text{ in } \hybrid_{2.(i^*.0)} \right] \right| \leq \nu(\secparam),$$
    \item
    $$\left| \prob\left[ {\bf E}\left[ \left\{ \alpha_{i,b} \right\}_{i \in [\secparam],b \in\{0,1\}} \right] \text{ in } \hybrid_{1}  \right] -  \prob\left[ {\bf E}\left[ \left\{ \alpha_{i,b} \right\}_{i \in [\secparam],b \in\{0,1\}} \right] \text{ in } \hybrid_{2.(1.0)} \right] \right| \leq \nu(\secparam)$$
\end{enumerate}
\end{claim}
\begin{proof}
We will prove just the first bullet and the other items follow analogously. 
\par Let $\delta_{i.{\bf b}}=\prob\left[ {\bf E}\left[ \left\{ \alpha_{i,b} \right\}_{i \in [\secparam],b \in\{0,1\}} \right] \text{ in } \hybrid_{2.(i.{\bf b})}  \right]$. Suppose the first bullet does not hold. This means that there exits $i^* \in [\secparam]$ and a non-negligible function $\nu(\secparam)$ such that $|\delta_{i^*.0} - \delta_{i^*.1}| \geq \nu(\secparam)$. 
\par We construct a non-uniform reduction ${\cal B}$ as follows. It receives as non-uniform advice, $(\bfr^*,\bfc^*,\sigma_{\sf adv})$ generated as follows: 
\begin{itemize}[noitemsep]
\item Run $P^*(x)$ to generate $\bfc^*$.
\item Let $\bfr^*$ be the value committed in $\bfc^*$. 
\item Let $\sigma_{\sf adv}$ be the state of $P^*(x)$ after the execution of step 1 of $\Pi$ in~\Cref{fig:pcoinczk}. 
\end{itemize}
\noindent ${\cal B}$ proceeds as follows: \begin{itemize}
    \item For every $i\in \{0,1\}^{\secparam}$, it samples $\alpha_{i,0},\alpha_{i,1},\beta_{i,0},\beta_{i,1} \xleftarrow{\$} \{0,1\}^{\secparam}$,  
    \item It runs $P^*(x)$ starting from step 2 of $\Pi$ in~\Cref{fig:pcoinczk}. $P^*$ is initialized with the state $\sigma_{{\sf adv}}$. 
    \item For all the $(i,0)^{th}$ and $(i,1)^{th}$ executions of step 2 of $\Pi$, ${\cal B}$ plays the role of $V(x)$. That is, in the $(i,b)^{th}$ execution, for $i < i^*$, it commits to $\beta_{i,b}$. In the $(i,b)^{th}$ execution for $i > i^*$, it commits to $\alpha_{i,b}$. Similarly, for $i=i^*$, in the  $(i,0)^{th}$ execution, it commits to $\beta_{i,b}$. For the $(i^*,1)^{th}$ execution, it acts as an intermediary between the external challenger, which plays the role of sender in $(\comm,\receiver,\open)$, and the prover $P^*(x)$. The challenger uses the challenge message pair  $(\alpha_{i^*,1},\beta_{i^*,1})$ in the quantum-concealing experiment when interacting with ${\cal B}$, who plays the role of the malicious receiver.  
    
    \item ${\cal B}$ continues the execution of $P^*(x)$ by emulating the verifier $V(x)$ until step 5 of $\Pi$. 
    
    \item Let $\bfc^{**}$ be the commitment sent by $P^*(x)$ in step 5. 
    
    \item ${\cal B}$ recovers the message $\gamma$ from $\bfc^{**}$ using ${\bfr^*}$. If the recovery fails, it aborts. 
    
    \item If there exists $i \in [\secparam]$ such that $\gamma = \alpha_{i,0} \oplus \alpha_{i,1}$, ${\cal B}$ outputs 1. Otherwise, it outputs 0. 
    
\end{itemize}
\noindent If the challenger uses the message $\alpha_{i^*,1}$ in the execution of the commitment protocol then the probability that ${\cal B}$ outputs 1 is precisely $\delta_{i^*,0}$. Similarly, if the challenger uses $\beta_{i^*,1}$ then the probability that ${\cal B}$ outputs 1 is precisely $\delta_{i^*,1}$. 
\par Since $|\delta_{i^*,0} - \delta_{i^*,1}| \geq \nu(\secparam)$, where $\nu(\secparam)$ is non-negligible, we have that ${\cal B}$ violates the quantum-concealing property of $(\comm,\receiver,\open)$, a contradiction. Thus, $\nu(\secparam)$ is negligible which completes the proof of the claim. 
\end{proof}

\noindent As a corollary and from the fact that $\prob\left[ {\bf E}\left[ \left\{ \alpha_{i,b} \right\}_{i \in [\secparam],b \in\{0,1\}} \right] = 1 \text{ in }\hybrid_1 \right] = \eps$, we have the following. 

\begin{corollary}
\label{cor:soundness}
$$\left| \prob\left[ {\bf E}\left[ \left\{ \alpha_{i,b} \right\}_{i \in [\secparam],b \in\{0,1\}} \right] \text{ in } \hybrid_{2.(\secparam.1)} \right] \right| \geq \eps - (2\secparam) \cdot  \nu(\secparam)$$ 
\end{corollary}
\begin{proof}
Follows by applying triangle inequality on~\Cref{clm:soundness:first}.
\end{proof}

\begin{claim}
\label{clm:soundness:second}
$\prob\left[ {\bf E}\left[ \left\{ \alpha_{i,b} \right\}_{i \in [\secparam],b \in\{0,1\}} \right] = 1 \text{ in }\hybrid_{2.(\secparam.1)} \right] \leq \frac{\secparam}{2^{\secparam}}$. 
\end{claim}
\begin{proof}
In $\hybrid_{2.(\secparam.1)}$, $\{\alpha_{i,b}\}_{i \in [\secparam],b \in \{0,1\}}$ is information-theoretically hidden from $P^*(x)$. Thus, for any given $i\in[\secparam]$, the probability that $\alpha_{i,0} \oplus \alpha_{i,1}$ is committed by $P^*$ in bullet 5 of $\Pi$ is $\frac{1}{2^{\secparam}}$. By union bound, the proof of the claim follows. 
\end{proof}

\noindent From~\Cref{cor:soundness} and~\Cref{clm:soundness:second}, $\eps$ is a negligible function in $\secparam$.

\end{proof}

\noindent From~\Cref{clm:wisoundness} and Equation~\ref{eqn:soundness}, it follows that $\prob\left[ \accept \leftarrow \langle P^*(x),V(x) \rangle \right] \leq \negl(\secparam)$. 

\end{proof}

\subsection{Proof of Quantum Zero-Knowledge}
\begin{proposition}
Let $M = O(\log \lambda)$ and $\functionspace(M) = 2 M$.
The protocol described in \Cref{fig:pcoinczk} is $(M, \functionspace)$-space bounded QZK.
\end{proposition}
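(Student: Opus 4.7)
The plan is to exhibit a QPT simulator $\Sim^{\channel(V^*)}$ that uses at most $2M$ qubits of quantum memory and whose output is computationally indistinguishable from the real view $\view_{V^*}\bigl(\langle P(x,w), V^*(x, \rho) \rangle\bigr)$. Following the overview, the idea is to adapt the Goldreich--Kahan extraction to the quantum setting by replacing the unavailable post-step-2 verifier state with the $M$-qubit maximally mixed state $I/2^M$ during rewinding; the key identity $\Tr(\Lambda \cdot I/2^M) = \Tr(\Lambda)/2^M \geq \Tr(\Lambda\rho)/2^M$ for any POVM element $\Lambda$ makes $I/2^M$ a $2^{-M}$-approximation of any $M$-qubit $\rho$ with respect to any observable acceptance event. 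Concretely, $\Sim$ first runs a \emph{main thread}, playing the honest prover strategy through the oracle channels $(\Phi_1,\ldots,\Phi_k)$ up to the end of step~4, i.e.\ until $V^*$ reveals openings of $\{\bfc_{i,\bfb_i}\}_{i\in[\secparam]}$ for a uniformly random challenge $\bfb=(\bfb_1,\ldots,\bfb_\secparam)\in\{0,1\}^\secparam$. If $V^*$ aborts, $\Sim$ outputs the aborted view and halts. Otherwise, $\Sim$ stores the $M$-qubit post-step-4 residual verifier state $\sigma$ in a secondary quantum register and enters a \emph{lookahead loop}: it initializes a fresh $M$-qubit register to $I/2^M$, and with the copyable classical component of the post-step-2 context, it invokes the step-3 and step-4 channels with a fresh random $\bfb'$; this is repeated until $V^*$ opens successfully. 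Since $\bfb\neq \bfb'$ with probability $1-2^{-\secparam}$, there is some coordinate $i^*$ with $\bfb_{i^*}\neq \bfb'_{i^*}$, yielding openings of both $\bfc_{i^*,0}$ and $\bfc_{i^*,1}$, hence the trapdoor $\alpha_{i^*,0}\oplus\alpha_{i^*,1}$. Finally $\Sim$ discards all lookahead outputs, reinstates $\sigma$ as the verifier's quantum state, commits in step~5 to $(i^*,\alpha_{i^*,0}\oplus\alpha_{i^*,1})$ using the randomness $\bfr^*$ chosen honestly in step~1, receives the remaining openings in step~6, and completes $\protwi$ with the trapdoor witness $(\bot,\bfr^*,r,i^*)$.

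Space and time are then immediate. Peak quantum memory is $2M=\functionspace(M)$: the saved residual state $\sigma$ occupies $M$ qubits and is held simultaneously with the $M$-qubit register passed to $\channel(V^*)$ during lookaheads; transcripts, random tapes, and the classical part of the verifier's state all live in unbounded classical memory. For running time, the POVM inequality above implies that each lookahead iteration succeeds with probability at least $p/2^M$, where $p$ is the main-thread non-abort probability, giving expected $O(2^M/p)$ iterations conditional on a successful main thread and hence total expected running time $p\cdot O(2^M/p) + (1-p)\cdot O(1) = O(2^M) = \poly(\secparam)$ because $M=O(\log\secparam)$.

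For output indistinguishability I will use a short hybrid chain. $H_0$ is the real view. $H_1$ is identical except the honest prover additionally performs the extraction procedure above as an internal computation whose extracted trapdoor is then discarded; because the channels are accessed as an oracle and the lookahead operates on a register physically disjoint from the main-thread registers, the verifier-visible transcript and final state in $H_1$ are identically distributed to those in $H_0$. $H_2$ replaces the step-5 committed value $0$ by $(i^*,\alpha_{i^*,0}\oplus\alpha_{i^*,1})$, with indistinguishability following from the quantum-concealing property of $(\comm,\receiver,\open)$ via a reduction that internally runs the extraction and forwards the hiding challenger's commitment in step~5. $H_3$ further replaces $w$ by the trapdoor witness in $\protwi$, with indistinguishability from quantum witness-indistinguishability. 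The distribution $H_3$ is exactly $\Sim$'s output.

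The main obstacle is the quantum rewinding analysis underlying $H_0\to H_1$: one must argue that extracting on the maximally mixed state yields the trapdoor in expected polynomial time without disturbing the main thread. This is handled by (i) the register-separation observation, which ensures the saved state $\sigma$ is literally untouched by the lookahead, combined with (ii) the $1/2^M$ loss in the lookahead success probability being absorbed into a polynomial because $M=O(\log\secparam)$. A secondary technicality is that the reductions for $H_1\to H_2$ and $H_2\to H_3$ must themselves run the extraction and therefore are only expected-polynomial-time; this is resolved by the standard argument that expected-polynomial-time reductions against a game with negligible advantage still yield negligible advantage, truncating the reduction at a suitably chosen polynomial when necessary.
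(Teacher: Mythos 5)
Your simulator design matches the paper's: main thread up to the end of step~4, save the $M$-qubit residual state, a lookahead loop that re-runs steps~3--4 on a fresh maximally-mixed $M$-qubit register until openings for two distinct challenge vectors are collected, then resume on the saved state and finish with the trapdoor commitment and trapdoor WI witness. The peak-memory and expected-running-time accounting are also the same as the paper's (the paper does a slightly more careful case split on $p=0$, $p'=1$ and bounds the non-termination probability by $e^{-\secparam}$ rather than asserting exact equality of $H_0$ and $H_1$, but the substance is the same).

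There is, however, a genuine gap in your hybrid chain for the step $H_1 \to H_2$. In both the real view and in $\Sim$'s output, the commitment $\bfc^{**}$ is computed with randomness $\bfr^*$, while $\bfr^*$ is itself the message committed in $\bfc^*$. This coupling is the point of the AL20-style non-malleability fix, and it blocks the reduction you describe: a reduction to the quantum-concealing property that ``forwards the hiding challenger's commitment in step~5'' necessarily lets the challenger pick the randomness of $\bfc^{**}$, so the reduction can no longer ensure $\bfc^{**}$ is opened by $\bfr^*$, nor can it know that randomness when producing $\bfc^*=\comm(\bfr^*;r)$. As a result, the experiment the reduction simulates is neither $H_1$ nor $H_2$, and the claimed indistinguishability does not follow. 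The paper resolves this with two extra hybrids: first switch $\bfc^*$ to commit to $0$ (using hiding of $\bfc^*$, which is sound since $\bfr^*$ is known to that reduction and only appears as the randomness of $\bfc^{**}$), so that $\bfr^*$ is now used only as the randomness of $\bfc^{**}$; then switch the committed value of $\bfc^{**}$ (now the hiding challenger's randomness can play the role of $\bfr^*$, since it is never used elsewhere); then switch $\bfc^*$ back to committing $\bfr^*$. Your chain needs these intermediate steps to be correct.
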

\begin{proof}
Suppose $V^*$ be a malicious QPT (non-uniform) verifier with $M$-qubit quantum space. We describe a QPT simulator $\Sim$ such that the output distribution of $\Sim$ is computationally indistinguishable from the output distribution of the real world. 

\paragraph{Description of Simulator.} Let ${\bf R}$ and ${\bf X}$ be two $M$-qubit registers. The register ${\bf X}$ is $V^*$'s private register whereas ${\bf R}$ is an auxiliary register maintained by the simulator. The verifier $V^*$'s advice state is stored in the register ${\bf X}$. We describe the simulator $\Sim$  in~\Cref{fig:sim}.

\begin{figure}[!htb]
   \begin{center} 
   \begin{tabular}{|p{16cm}|}
   \hline
\begin{enumerate}
    \item By taking the role of the honest prover, run the first four steps in~\Cref{fig:pcoinczk}. That is, until the simulator receives $\{(\alpha_{i,\bfb_i},r_{i,\bfb_i})\}_{i \in [\secparam]}$. Abort if $V^*$ aborts at any point. We emphasize that $V^*$ is executed on input the register ${\bf X}$. Denote $\rho_{{\sf res}}$ to be the quantum residual state after this step. Denote the classical state to be ${\sf st}$.
    \label{step:first-run}
    \begin{enumerate}
        \item $\Sim$ and $V^*$ engage in an execution of the commitment protocol, where $\Sim$ commits to $\bfr^*$, which is sampled as $\bfr^* \xleftarrow{\$} \{0,1\}^{\secparam}$. Denote the commitment to be $\bfc^*$. 
       
        \item $\Sim$ and $V^*$ run $2\secparam$ number of executions of the commitment protocol. Denote the commitments in all the $2\secparam$ executions by $(\{\bfc'_{i,b}\}_{i \in [\secparam],b\in \{0,1\}})$.

        \item $\Sim$ sends $\bfb_1,\ldots,\bfb_{\secparam}$, where $\bfb_i \xleftarrow{\$} \{0,1\}^{\secparam}$, to $V^*$. 
        
        \item $\Sim$ receives the openings of all the commitments $(\{\bfc_{i,\bfb_i}\}_{i \in [\secparam]})$. Denote the opening of the $(i,\bfb_i)^{th}$ commitment to be $o_{i,\bfb_i}=\open(\bfc_{i,\bfb_i})$. If any of the openings are invalid, reject. Otherwise, parse $o_{i,\bfb_i}$ as $(\alpha_{i,\bfb_i},r_{i,\bfb_i})$.  
    \end{enumerate}
    \item Repeat the following forever: 
    \begin{enumerate}
        \item Initialize the register ${\bf R}$ with the maximally mixed state $\frac{I}{2^M}$.
        \item Run the third and fourth steps of~\Cref{fig:pcoinczk}, where $V^*$ is executed on the register ${\bf R}$. That is, send $\bfb'_1,\ldots,\bfb'_{\secparam} \xleftarrow{\$} \{0,1\}^{\secparam}$. Receive the openings from $V^*$, denoted by $\{(\alpha_{i,\bfb'_{i}},r_{i,\bfb'_i})\}_{i \in [\secparam]}$. Abort the loop if $V^*$ did not abort and $\exists i \in [\secparam]$ such that $\bfb_i \neq \bfb'_i$. Otherwise, continue.   \label{step:iteration}
    \end{enumerate}
    \item Discard the register ${\bf R}$. From here on, continue (from Step~\ref{step:first-run}) the execution of $V^*$ on the register ${\bf X}$. Specifically, run $V^*$ on $(\rho_{\sf res},{\sf st})$.  
    \item The simulator commits to $(i^*,\alpha_{i^*,0} \oplus \alpha_{i^*,1})$, where $i^* \in [\secparam]$ is such that $\bfb_{i^*} \neq \bfb'_{i^*}$. Denote the commitment to be $\bfc^{**}$ and let $\bfr^*$ be the randomness used in $\bfc^{**}$. 
    \item It runs the last step of the trapdoor phase by emulating the execution of the honest prover.
    \item Finally, $\Sim$ runs the WI phase with $V^*$. Let the instance be $\bfx_{\sf wi}= \left( x,\bfc^*,\bfc^{**},\left\{\alpha_{i,b} \right\}_{i \in [\secparam],b\in \{0,1\}}\right)$, where $\bfc^*$ be the commitment sent by $\Sim$ in the beginning of the protocol. $\Sim$ emulates the honest prover except the witness used is $(\bot,\bfr^*,r,i^*)$, where $r$ is the randomness used in the commitment of $\bfc^*$.   
\end{enumerate} 
~\\
    \hline
   \end{tabular}
    \caption{(Expected) QPT Simulator $\Sim$.}
    \label{fig:sim}
    \end{center}
\end{figure}

\par We need to show that $\Sim$ indeed runs in expected QPT. We also need to demonstrate that the output distribution of $\Sim$ is computationally indistinguishable from the real world. 
\par We first prove a simple claim. Define $\Lambda$ be a POVM that on input $\sigma$, defined on ${\bf X}$, implements steps 3 and 4 of $\Pi$ in~\Cref{fig:pcoinczk} and outputs 1 if and only if $V^*$ does {\bf not} abort. We assume that the classical state is hardcoded in the description of $\Lambda$. Define another POVM $\Gamma$ that on input $\sigma$, defined on ${\bf R}$, implements 2.2 of~\Cref{fig:sim} and outputs 1 if and only if $V^*$ did {\bf not} abort and $\exists i \in [\secparam]$ such that $\bfb_i \neq \bfb'_i$. 

\begin{claim}
\label{clm:auxclm:sim}
Suppose $p = \Tr(\Lambda(\rho_{\sf res}))$, where $\rho_{\sf res}$ is as defined in \Cref{fig:sim}. Then, $ \Tr(\Gamma(\frac{I}{2^M})) \geq \frac{p}{2^M} \cdot \left( 1 - 2^{-\secparam}\right)$.
\end{claim}
\begin{proof}
Firstly, note that $\Tr(\Lambda(\sigma)) \cdot \left(1 - {2^{-\secparam}} \right) = \Tr(\Gamma(\sigma))$, for every density matrix $\sigma$. This follows from the fact that the probability that $\exists i \in [\secparam]$ such that $\bfb_i \neq \bfb'_i$ is precisely $(1 - 2^{-\secparam})$. 
\par Now, we estimate $\Tr(\Lambda(\frac{I}{2^M}))$. We have:
\begin{eqnarray*}
\Tr\left( \Lambda \left( \frac{I}{2^M} \right) \right) & = & \frac{1}{2^M} \Tr\left( \Lambda \left( I \right) \right) \\
& = &  \frac{1}{2^M} \Tr\left( \Lambda \left( \rho_{\sf res} + I - \rho_{\sf res} \right) \right) \\
& \geq & \frac{1}{2^M} \Tr\left( \Lambda \left( \rho_{\sf res} \right) \right)\ \ (\because\ I - \rho_{\sf res} \succeq 0) \\
& = & \frac{p}{2^M}
\end{eqnarray*}
Thus, $\Tr \left( \Gamma(\frac{I}{2^M}) \right) = \Tr \left( \Lambda(\frac{I}{2^M}) \right) \cdot \left( 1 - 2^{-\secparam} \right) \geq \frac{p}{2^M} \cdot (1 - 2^{-\secparam})$. 
\end{proof}

\paragraph{Runtime Analysis.}\ Let $q(\secparam)$ be an upper bound on the runtime of all the steps (1 through 6) in~\Cref{fig:sim}. Let $\rho$ be the auxiliary state of $V^*$. We use the notation $\Lambda$ and $\Gamma$ as defined earlier. Let $p=\Tr(\Lambda(\rho_{res}))$ and  $p'=\Tr(\Gamma(\frac{I}{2^M}))$. From~\ref{clm:auxclm:sim}, $p' \geq \frac{p}{2^M}(1 - 2^{-\secparam})$. There are three cases. \\

\noindent {\em Case 1: $p=0$.} In this case, $\Sim$ aborts after step 1 of~\Cref{fig:sim}. Thus, $\Sim$ runs in polynomial time. \\

\noindent {\em Case 2: $p'=1$.} In this case, either $\Sim$ aborts in step 1 or it runs the loop in step 2 only once. In either of the two cases, $\Sim$ runs in polynomial time. \\
 
\noindent {\em Case 3: $p > 0$ and $p' < 1$.}\ \\
\begin{eqnarray*}
\text{Expected Runtime}
 & = & q(\secparam) \left( (1-p) \cdot 1 +  p \sum_{i=1}^{\infty} i \cdot (1-p')^{i-1} \cdot p'  \right) + 4q(\secparam) \\
  & = & q(\secparam) \left( (1-p) \cdot 1 +  \frac{p \cdot p'}{(1-p')} \sum_{i=1}^{\infty} i \cdot (1-p')^{i}  \right) + 4q(\secparam) \\
  & = & q(\secparam) \left( (1-p) \cdot 1 +  \frac{p \cdot p'}{(1-p')} \cdot \frac{(1 - p')}{(p')^2} \right) + 4q(\secparam)\ \left(\because \sum_{i=1}^{\infty} i \cdot c^i = \frac{c}{(1-c)^2} \text{ for } c > 0 \right) \\
    & = & q(\secparam) \left( (1-p) \cdot 1 +  \frac{p}{p'} \right) + 4q(\secparam) \\
& \leq & q(\secparam)\left( (1-p) + \frac{p}{\frac{p(1-2^{-\secparam})}{2^M}} \right) + 4q(\secparam) \\
& \leq &  q(\secparam)\left( (1-p) + 2^{M+1} \right) + 4q(\secparam)\\
& = & \poly(\secparam)
\end{eqnarray*}
The last equality follows from the fact that $M$ is logarithmic in $\secparam$. 

\paragraph{Indistinguishability of Real and Ideal worlds.} Let $(x,w) \in \rel(\lang)$. Let $\rho$ be the auxiliary state of $V^*$. 

\begin{lemma}
 $\left\{ \view_{V^*}\left(\langle P(x,w),V^*(x,\rho) \right) \right\} \approx_{\quant} \left\{ {\sf Sim}(x,\rho) \right\}$. 
\end{lemma}
\begin{proof}
Consider the following hybrid argument. \\

\noindent \underline{$\hybrid_0$}: The output distribution of this hybrid is $\view_{V^*}\left(\langle P(x,w),V^*(x,\rho) \right)$. \\

\noindent \underline{$\hybrid_1$}: We define a hybrid simulator that essentially behaves like an honest prover except it keeps rewinding until it finds two accepting transcripts. We emphasize that, just like the honest prover, the hybrid simulator still uses the witness $w$ in the WI phase. Formally, we consider a hybrid simulator $\hybrid_1.\Sim(x,w)$ defined as follows: 
\begin{enumerate}
    \item Execute steps 1 through 3 of $\Sim$ in~\Cref{fig:sim}. 
    \begin{enumerate}
        \item Run the following steps:
        \begin{enumerate}
        \item Engage in an execution of the commitment protocol with $V^*$, by committing to $\bfr^*$, which is sampled as $\bfr^* \xleftarrow{\$} \{0,1\}^{\secparam}$. Denote the commitment to be $\bfc^*$. 
       
        \item Execute $2\secparam$ number of executions of the commitment protocol with $V^*$, with $V^*$ playing the role of the committer. Denote the commitments in all the $2\secparam$ executions by $(\{\bfc'_{i,b}\}_{i \in [\secparam],b\in \{0,1\}})$.

        \item Send $\bfb_1,\ldots,\bfb_{\secparam}$, where $\bfb_i \xleftarrow{\$} \{0,1\}^{\secparam}$, to $V^*$. 
        
        \item Receive the openings of all the commitments $(\{\bfc_{i,\bfb_i}\}_{i \in [\secparam]})$. Denote the opening of the $(i,\bfb_i)^{th}$ commitment to be $o_{i,\bfb_i}=\open(\bfc_{i,\bfb_i})$. If any of the openings are invalid, reject. Otherwise, parse $o_{i,\bfb_i}$ as $(\alpha_{i,\bfb_i},r_{i,\bfb_i})$.  
    \end{enumerate}
    \item Repeat the following forever: 
    \begin{enumerate}
        \item Initialize the register ${\bf R}$ with the maximally mixed state $\frac{I}{2^M}$.
        \item Run the third and fourth steps of~\Cref{fig:pcoinczk}, where $V^*$ is executed on the register ${\bf R}$. That is, send $\bfb'_1,\ldots,\bfb'_{\secparam} \xleftarrow{\$} \{0,1\}^{\secparam}$. Receive the openings from $V^*$, denoted by $\{(\alpha_{i,\bfb'_{i}},r_{i,\bfb'_i})\}_{i \in [\secparam]}$. Abort the loop if $V^*$ did not abort and $\exists i \in [\secparam]$ such that $\bfb_i \neq \bfb'_i$. Otherwise, continue
    \end{enumerate}
    \item Discard the register ${\bf R}$. From here on, continue (from Step~\ref{step:first-run}) the execution of $V^*$ on the register ${\bf X}$.  
    \end{enumerate}
    \item Execute steps 5 and 6 of $\Pi$ in~\Cref{fig:pcoinczk}. Also, execute the WI phase. 
    \begin{enumerate}
    \item Engage in another execution of the commitment protocol by committing to $0$, using the randomness $\bfr^*$. Denote the commitment to be $\bfc^{**}$. 
        
        \item $V^*$ output the rest of the openings of $\left( \{\bfc_{i,b}\}_{i \in [\secparam],b\in \{0,1\}} \right)$. That is, it sends $\{o_{i,1-\bfb_i}\}_{i \in [\secparam]}$, where $o_{i,\bfb_i}=\open(\bfc_{i,1-\bfb_i})$. If any of the openings are invalid, $P$ rejects. Otherwise, $P$ parses $o_{i,1-\bfb_i}$ as $(\alpha_{i,1-\bfb_i},r_{i,1-\bfb_i})$. 
        \item Engage in $\protwi$ with $V^*$ on the following common input: $$\bfx_{\sf wi}= \left( x,\bfc^*,\bfc^{**},\left\{\alpha_{i,b} \right\}_{i \in [\secparam],b\in \{0,1\}}\right)$$ Additionally, use the witness  $(w,\bot,\bot,\bot)$ in $\protwi$.
    \end{enumerate}
\end{enumerate}
The output of $\hybrid_1.\Sim(x,w)$ is the view of $V^*$ in $\hybrid_1$.  
\par Consider the following claim. 

\begin{claim}
The trace distance between $\hybrid_0$ and $\hybrid_1.\Sim(x,w)$ is at most $\negl(\secparam)$.
\end{claim}
\begin{proof}
Note that if $\hybrid_1.\Sim(x,w)$ terminates then the density matrices output in $\hybrid_0$ and $\hybrid_1.\Sim(x,w)$ are identically distributed. It suffices to upper bound the probability that $\hybrid_1.\Sim(x,w)$ does not terminate. Let $p=\Tr(\Lambda(\rho))$, where $\rho$ is the advice state of $V^*$. Let $p'=\Tr(\Gamma(\rho))$. From~\Cref{clm:auxclm:sim}, $p' \geq \frac{p(1 - 2^{-\secparam})}{2^M}$. There are two cases: 
\begin{itemize}
    \item {\em Case 1. $p=0$ or $p'=1$}: As we argued in the proof of~\Cref{clm:auxclm:sim}, $\Sim$ terminates in this case.  
    \item {\em Case 2. $p \neq 0$ and $p' \neq 1$}: In this case, we claim that the simulator terminates within $\secparam \cdot (p')^{-1}$ iterations with overwhelming probability. We show this below. 
    \begin{eqnarray*}
 \prob\left[\Sim\ {\text{ does not terminate within }}\secparam \cdot (p')^{-1}{\text{ iterations}}\right] 
   \leq (1 - p')^{\secparam \cdot (p')^{-1}} 
   = e^{-\secparam}
    \end{eqnarray*}
    
\end{itemize}
\end{proof}

\noindent \underline{$\hybrid_2$}: We define another hybrid simulator $\hybrid_2.\Sim$ which behaves like $\hybrid_1.\Sim$, except in the following steps: 
\begin{itemize}
    \item[1.1.i] Engage in an execution of the commitment protocol with $V^*$, by committing to $0$. Denote the commitment to be $\bfc^*$.\\
    {\em Note: in bullet 2.1  of $\hybrid_1$, $\bfc^{**}$ is still a commitment computed using the randomness $\bfr^*$, where $\bfr^* \xleftarrow{\$} \{0,1\}^{\secparam}$.}
\end{itemize}
The rest of the steps is the same as $\hybrid_1.\Sim$. The output of $\hybrid_2.\Sim$ is the view of $V^*$ in $\hybrid_2$. 
\par The output distributions of $\hybrid_1.\Sim$ and $\hybrid_2.\Sim$ are computationally indistinguishable from the quantum-concealing property of $(\comm,\open,\receiver)$.\\

\noindent \underline{$\hybrid_3$}: We define a hybrid simulator $\hybrid_3.\Sim$ which behaves like $\hybrid_2.\Sim$, except in the following steps: 
\begin{itemize}
    \item[2.1] Engage in an execution of the commitment protocol with $V^*$ by committing to $(i^*,\alpha_{i^*,0} \oplus \alpha_{i^*,1})$, where $i^* \in [\secparam]$ is such that $\bfb_{i^*} \neq \bfb'_{i^*}$. Denote the commitment to be $\bfc^{**}$.
\end{itemize}
\noindent The rest of the steps is the same as $\hybrid_2.\Sim$. The output of $\hybrid_3.\Sim$ is the view of $V^*$ in $\hybrid_3$. 
\par The output distributions of $\hybrid_2.\Sim$ and $\hybrid_3.\Sim$ are computationally indistinguishable from the quantum-concealing property of $(\comm,\open,\receiver)$.\\

\noindent \underline{$\hybrid_4$}: We define a hybrid simulator $\hybrid_4.\Sim$ which behaves like $\hybrid_3.\Sim$, except in the following steps: 
\begin{itemize}
    \item[1.1.i] Engage in an execution of the commitment protocol with $V^*$, by committing to $\bfr^*$, which is sampled as $\bfr^* \xleftarrow{\$} \{0,1\}^{\secparam}$. Denote the commitment to be $\bfc^*$. 
\end{itemize}
\noindent The rest of the steps is the same as $\hybrid_3.\Sim$. The output of $\hybrid_4.\Sim$ is the view of $V^*$ in $\hybrid_4$. 
\par The output distributions of $\hybrid_3.\Sim$ and $\hybrid_4.\Sim$ are computationally indistinguishable from the quantum-concealing property of $(\comm,\open,\receiver)$.\\

\noindent \underline{$\hybrid_5$}: The output of this hybrid is the output of $\Sim(x,\rho)$. 
\par The only difference between $\hybrid_4.\Sim$ and $\Sim$ is in the WI phase. The computational indistinguishability of $\hybrid_4$ and $\hybrid_5$ follows from the witness indistinguishability property of $\protwi$. 
\end{proof}

\end{proof}

\section{$\omega(\log(\secparam))$-Quantum Space Verifiers: Impossibility Result}
\label{sec:impossibility} 

\noindent In the previous section, we showed the existence of quantum zero-knowledge protocols secure against logarithmic quantum space verifiers. We present a complementary result. We show that there does not exist any quantum zero-knowledge protocol with fully black box simulation for languages outside BQP against super-logarithmic quantum space verifiers. 

\begin{lemma}\label{lem:negative-result}
Let $s(\secparam)=\omega(\log(\secparam))$ and $\functionspace$ be any function such that $\functionspace(x) > 2x$. Assume the existence of post-quantum one-way functions. Suppose there exists a fully black-box $(s,\functionspace)$-space bounded quantum zero-knowledge protocol for a language ${\cal L}$. Then, ${\cal L}$ is in {\sf BQP}. 
\end{lemma}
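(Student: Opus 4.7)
The plan is to execute the strategy laid out in Section 1.3.2. I will construct a contrived verifier $V'$ whose $i$-th round channel $\Phi_i$ keeps its internal quantum memory as a random subspace state $\ket{S_{i-1}}$ with $S_{i-1}$ a uniformly random $\frac{M}{2}$-dimensional subspace of $\mathbb{F}_2^M$, where $M=s(\secparam)=\omega(\log\secparam)$; on input a state $\rho$ together with the current classical state, $\Phi_i$ first runs $\Test^{U_{S_{i-1}}}(\rho)$ (see~\Cref{lem:uncstates:test}), aborts on failure, and otherwise runs the honest verifier's $i$-th round while outputting a freshly prepared $\ket{S_i}$ for the next channel. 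To prevent the simulator from tampering with the bookkeeping, each $\Phi_i$ hardwires a symmetric-key $\mathsf{sk}$ and a signing key of a post-quantum signature scheme (both existing under post-quantum one-way functions): the classical part of $V'$'s state is encrypted under $\mathsf{sk}$ and signed, and $\Phi_i$ verifies the signature before processing. The descriptions of all $S_j$'s are embedded only inside the $\Phi_i$'s.

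The core step is to show that any successful fully black-box simulator $\Sim^{\channel(V')}$ must be essentially straight-line, i.e., it makes exactly one non-abort query to each $\Phi_i$, in order $i=1,\ldots,k$. I will argue this via a hybrid case analysis. First, an out-of-order non-abort query to $\Phi_{i+1}$ before any non-abort query to $\Phi_i$ would require the simulator to possess $\ket{S_i}$, which only leaves $V'$ through $\Phi_i$; a reduction simulating $\Phi_i$ by itself and receiving $\ket{S_i}$ from an external challenger with only oracle access to $U_{S_i}$ then contradicts~\Cref{lem:money}. Second, two non-abort queries to the same $\Phi_i$ would let the simulator project onto $\kb{S_{i-1}}^{\otimes 2}$, again contradicting~\Cref{lem:money}; the reduction handles the classical-ciphertext/signature parts of $\Phi_i$ using its own freshly sampled keys, and uses its oracle access to $U_{S_{i-1}}$ (which, by the space analysis before~\Cref{lem:uncstates:test}, suffices to implement the subspace test) in place of the hardwired description of $S_{i-1}$, thereby telling ``successful'' from ``unsuccessful'' queries without knowing $S_{i-1}$. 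Third, the same arguments show that if the simulator ever skips a successful query to some $\Phi_i$, it cannot produce a view in which $\Phi_k$ fires successfully, which is distinguishable from the real interaction on YES instances by completeness. The main obstacle in this block is precisely the reduction of case~(ii): one has to proceed by induction on $i$, conditioning on the event that at most one successful query was made to each of $\Phi_1,\ldots,\Phi_{i-1}$, and absorbing the simulation of the encryption/signature layer into the reduction; the fact that $M=\omega(\log\secparam)$ is what makes $2^{-M/2}$ negligible and lets~\Cref{lem:money} bite.

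With straight-line simulation in hand, I build a QPT decider for $\lang$. On input $x$, the QPT algorithm internally runs $\Sim(x,\rho_0)$ for a canonical auxiliary state $\rho_0$ of $V'$; whenever $\Sim$ issues its (unique) non-abort query to $\Phi_i$, the algorithm externally interacts with the honest verifier $V(x)$ of the original protocol, forwarding the message embedded in $\Sim$'s query and using $V(x)$'s reply to produce the answer $\Sim$ expects (freshly sampling $S_i$ and $\ket{S_i}$ itself, and maintaining the encryption/signature layer with its own keys, exactly as $V'$ would). Since $\Sim$'s output is computationally indistinguishable from $\view_{V'}\bigl(\langle P(x,w),V'(x,\rho_0)\rangle\bigr)$, and the latter inherits completeness from the honest protocol, $V(x)$ accepts with probability $1-\negl(\secparam)$ whenever $x\in\lang$. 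On $x\notin\lang$, computational soundness of the underlying argument bounds the acceptance probability by $\negl(\secparam)$. A standard majority amplification then places $\lang$ in $\mathsf{BQP}$, which is the desired conclusion; the extra quantum memory used to prepare the $\ket{S_i}$'s and run $\Sim$ is $\poly(\secparam)$, well within QPT.
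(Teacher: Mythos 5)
Your proposal follows essentially the same route as the paper's proof: the same contrived verifier $V'$ built from subspace states plus an encrypt-and-sign layer, the same straight-line-simulation lemma proved by cloning reductions to \Cref{lem:money} (including the replace-and-forward trick and the induction over rounds to control the conditioning), and the same final step of turning the straight-line simulator into a malicious prover $\tilde{P}$ that interacts with the honest $V$ to decide $\lang$ in QPT. The only imprecision is in your ``missing queries'' case: the contradiction there is not really a completeness argument but another cloning reduction (the ZK guarantee forces $\Sim$'s output register $\mathbf{Y}$ to project onto $\ket{S_k}$, which $\Sim$ never received, yielding two copies); the paper makes this explicit, but your overall structure is the same.
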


\newcommand{\dec}{\mathsf{Dec}}
\newcommand{\gen}{\mathsf{Gen}}
\begin{proof}
Consider a language ${\cal L}$. Suppose there exists a $k$-round 
fully black-box $(s,\functionspace)$-space bounded quantum zero-knowledge protocol ${\cal L}$ for some $s(\secparam)=\omega(\log(\secparam))$ and for some function $\functionspace$.
\par We describe a malicious verifier $V'$ below. In addition to subspace states (\Cref{sec:querylb}), we use the following additional tools:
\begin{itemize}
    \item Symmetric-key encryption with post-quantum CPA (chosen plaintext attack) security, denoted by $(\setup,\enc,\dec)$. 
    \item Post-quantum secure digital signatures satisfying existential unforgeability property, denoted by $(\gen,\sign,\verify)$. 
\end{itemize}
\noindent Both of these primitives can be built from post-quantum secure one-way functions~\cite{Goldreich2001,NY89}. 

\paragraph{Description of $V'$.} Consider the following QPT verifier $V'$ such that the quantum memory of $V'$ is $M=\omega(\secparam) $ qubits. Hardwired in its code are $k$ random $\frac{M}{2}$-dimensional subspaces $S_0$,...,$S_k$, where $S_i \subseteq \mathbb{F}_2^{M}$, a secret-key $sk_{\text{enc}}$ for a symmetric-key encryption scheme, sampled as $sk_{\text{enc}} \leftarrow \setup(1^{\secparam})$, and a secret-key and verification keys $(sk_{\text{sign}},vk_{\text{sign}})$ for a digital signature scheme, sampled as $(sk_{\text{sign}},vk_{\text{sign}}) \leftarrow \gen(1^{\secparam})$. 
  $V'$
  receives as input $M$-qubit subspace state $\ket{S_0}$, which is initialized in the register ${\bf Y}$. %
 \\ 
  
\noindent   In the first round of the protocol, $V'$ gets as input two registers ${\bf X}$, ${\bf Y}$ and does the following:  
\begin{enumerate}
    \item Measure the register ${\bf X}$ to obtain $\alpha_1$. 
    \item Check if the state held in ${\bf Y}$ is $\ketbra{S_0}{S_0}$. That is, perform the projective measurement $\{\underbrace{\kb{S_0}}_{\text{outcome}=0}, \underbrace{I-\kb{S_0}}_{\text{outcome}=1}\}$ on the register ${\bf Y}$. This is possible since $V'$ has the description of $S_0$ hardwired in its code.
    \item If the measured outcome is 1, abort. 
    \item Run the first round of $V$ on $\alpha_1$ to obtain $\beta_1$. Abort if $V$ aborts. 
    \item Re-initialize the register ${\bf X}$ with $\ketbra{\beta_1}{\beta_1}$.
    \item Re-initialize the register ${\bf Y}$ with a  new $M$-qubit subspace state $\ketbra{S_{1}}{S_{1}}$ in the memory.
    \item Using the hardcoded $sk_{\text{enc}}$, initialize the register ${\bf Z}$ with the $c_1 = \enc(sk_{\text{enc}},\gamma_1)$, where $\gamma_1$ is the state of $V$ after the answer.
    \item Using the hardcoded $sk_{\text{sign}}$, initialize ${\bf T}$ with the signature of $c_1$, i.e., $\sign(sk_{\text{sign}},c_1)$. 
    \item Output the registers ${\bf X}$,${\bf Y}$,${\bf Z}$ and ${\bf T}$. 
\end{enumerate}

  \noindent In the $i^{th}$ round of the protocol, $V'$ gets as input four registers ${\bf X}$, ${\bf Y}$, ${\bf Z}$ and ${\bf T}$ and does the following:  
\begin{enumerate}
    \item Measure the register ${\bf X}$ to obtain $\alpha_i$. 
    \item Check if the state held in ${\bf Y}$ is $\ketbra{S_{i-1}}{S_{i-1}}$. That is, perform the projective measurement $\{\underbrace{\kb{S_{i-1}}}_{\text{outcome}=0}, \underbrace{I-\kb{S_{i-1}}}_{\text{outcome}=1}\}$ on the register ${\bf Y}$. This is possible since $V'$ has the description of $S_i$ hardwired in its code.
    \item If the measured outcome is 1, abort. 
    \item Using the hardcoded $vk_{\text{sign}}$, check if ${\bf T}$ contains the signature of the value in register ${\bf Z}$. If the verification does not pass, abort.
    \item Using the hardcoded $sk_{\text{enc}}$, decrypt ${\bf Z}$ leading to value $\gamma_{i-1}$
    \item Run the $i^{th}$ round of $V$ on $\alpha_i$ with internal state $\gamma_{i-1}$ a to obtain $\beta_i$. Abort if $V$ aborts. 
    \item Re-initialize register ${\bf X}$ with $\ketbra{\beta_i}{\beta_i}$.
    \item Re-initialize the register ${\bf Y}$ with a  new $M$-qubit subspace state $\ketbra{S_{i}}{S_{i}}$ in the memory.
        \item Using the hardcoded $sk_{\text{enc}}$, re-initialize the register ${\bf Z}$ with the $c_i = \enc(sk_{\text{enc}},\gamma_i)$, where $\gamma_i$ is the state of $V$ after the answer.
    \item Using the hardcoded $sk_{\text{sign}}$, initialize ${\bf T}$ with the signature of $c_i$, i.e., , $\sign(sk_{\text{sign}},c_i)$. 
    \item Output the registers ${\bf X}$,${\bf Y}$,${\bf Z}$ and ${\bf T}$. 
\end{enumerate}

\noindent Note that the quantum memory of $V'$ takes $M$ qubits.
\par Let $\channel(V') = (\Phi_1,\cdots,\Phi_k)$  be the quantum channels corresponding to the operations of $V'$ as defined in~\Cref{sec:fully-black-box}. Since $\Pi$ satisfies fully black-box  $(M,\functionspace)$-space bounded QZK, there exists a QPT simulator
 $\Sim^{\channel(V')}$, taking space at most $p
 (M)$ qubits, simulating the interaction of $V'$ with the honest prover $P$. In particular,  $\Sim^{\channel(V')}$ does not abort with probability $1-\negl(\lambda)$, in which case it outputs $\ket{S_{k}}$.

\paragraph{Helpful definitions.} We state some useful definitions below. 

\begin{definition}[State-successful queries]
A query $\rho_{{\bf X} \otimes {\bf Y} \otimes {\bf Z} \otimes {\bf T}}$ made by $\Sim$ to $\Phi_i$, for some $i \in [k]$, is said to be 
{\em state-successful} if $\Phi_i(\rho_{{\bf X} \otimes {\bf Y} \otimes {\bf Z} \otimes {\bf T}})$ does not abort on step $3$. 
\end{definition}

\noindent In particular, a state-successful query made by $\Sim$ could lead to abort if the other registers ${\bf X},{\bf Z}$ and ${\bf T}$ are not of a specific form. 

\begin{definition}[(Non-)abort queries]
\label{def:nonabort:queries}
A query $\rho_{{\bf X} \otimes {\bf Y} \otimes {\bf Z} \otimes {\bf T}}$ made by $\Sim$ to $\Phi_i$, for some $i \in [k]$, is said to be 
{\em non-abort} if $\Phi_i(\rho_{{\bf X} \otimes {\bf Y} \otimes {\bf Z} \otimes {\bf T}})$  does not abort. Similarly, we say that it is an abort query if  $\Phi_i(\rho_{{\bf X} \otimes {\bf Y} \otimes  {\bf Z} \otimes {\bf T}})$  aborts.
\end{definition}

\noindent Note that a non-abort query is also a state-successful query. 
 
\begin{definition}[Successful simulation]
We say that $\Sim$ is {\em successful}, if it does not abort.
\end{definition} 

\paragraph{Helpful Claim.} We now state an important claim regarding the properties of queries made by $\Sim$ to $\Phi_1,...,\Phi_k$.  
  
\begin{claim}
\label{lem:structure-simulator}
Except with $\negl(\lambda)$ probability, the following events do not occur:
\begin{enumerate}
    \item for some $i$, $\Sim$ has a state-successful query to $\Phi_{i}$ without having a non-abort query to $\Phi_{i-1}$;
    \item for some $i \leq j$, $\Sim$ has a state-successful query to $\Phi_i$ after having a state-successful query to $\Phi_j$.
    \item for some $i$, $\Sim$ is successful without a non-abort query to $\Phi_i$.
    \item for some $i$, $\Sim$ has a non-abort query to $\Phi_i$ where the input value on register ${\bf Z}$ is different from the output value on register ${\bf Z}$ of a non-abort query to $\Phi_{i-1}$.
\end{enumerate}
\end{claim}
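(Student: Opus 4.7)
The four items will be established in the order $(1,2)$, $4$, $3$. The main tools are the subspace-state unclonability bound of Lemma~\ref{lem:money}, the existential unforgeability of $(\gen,\sign,\verify)$, the CPA-security of $(\setup,\enc,\dec)$, and (for item 3 only) the fully black-box QZK property applied to a test built from the membership oracle $U_{S_k}$. The hypothesis $M=\omega(\log\secparam)$ gives subspace dimension $M/2=\omega(\log\secparam)$, exactly what Lemma~\ref{lem:money} needs.

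\textbf{Items 1 and 2.} I would prove the combined hypothesis $H(i)$ by induction on $i$: each of $\Phi_1,\dots,\Phi_i$ receives at most one state-successful query, and any state-successful query to $\Phi_{i'}$ with $2\le i'\le i$ is preceded by a non-abort query to $\Phi_{i'-1}$. The base case $H(1)$ holds because $\Sim$'s only source of information about the uniformly random $S_0$ is its single copy of $\ket{S_0}$, so producing the two copies required by two state-successful queries to $\Phi_1$ would contradict Lemma~\ref{lem:money} applied to $S_0$. For the inductive step $H(i-1)\Rightarrow H(i)$, I would build a reduction $\cR$ against the unclonability of $\ket{S_{i-1}}$: $\cR$ receives one copy of $\ket{S_{i-1}}$ together with oracle access to $U_{S_{i-1}}$, samples all other subspaces and the encryption/signature keys itself, and simulates each call $\Phi_j$ of $\Sim$ honestly, with two exceptions. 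On any call to $\Phi_i$, $\cR$ uses $U_{S_{i-1}}$ to implement the ${\bf Y}$-membership test via the procedure of Section~\ref{sec:querylb}. On the (by $H(i-1)$ at most one) state-successful call to $\Phi_{i-1}$, $\cR$ substitutes its input copy $\ket{S_{i-1}}$ into the outgoing ${\bf Y}$ register. A violation of item 1 or item 2 at step $i$ produces either two state-successful $\Phi_i$ calls or a state-successful $\Phi_i$ call without the substitution having happened; coherently retaining the ${\bf Y}$-registers at the two state-successful tests yields a two-register state that projects into $\kb{S_{i-1}}^{\otimes 2}$ with non-negligible probability, contradicting Lemma~\ref{lem:money}.

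\textbf{Item 4.} A direct reduction to EUF-CMA security of $(\gen,\sign,\verify)$. Suppose $\Sim$ submits, with non-negligible probability, a non-abort query to $\Phi_i$ whose $({\bf Z},{\bf T})$ registers hold classical values $(c,\sigma)$ with $\verify(vk_{\text{sign}},c,\sigma)=1$ and $c$ different from the (by items 1 and 2 unique) classical ${\bf Z}$-output of a preceding non-abort $\Phi_{i-1}$ call. A reduction $\cR'$ against the signature scheme receives $vk_{\text{sign}}$ and a signing oracle, samples all subspaces and the encryption key itself, simulates $\Sim$, queries the signing oracle once per non-abort $\Phi_j$ to produce the attached signature, and outputs $(c,\sigma)$ on detection of the offending query; since $c$ was never queried to the signing oracle, this is an existential forgery. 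The CPA-security of $(\setup,\enc,\dec)$ is used incidentally so that $\cR'$ may encrypt placeholder plaintexts in lieu of the true internal states $\gamma_j$, which $\cR'$ does not see, without being noticed by $\Sim$.

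\textbf{Item 3 and main obstacle.} Assume for contradiction that with non-negligible probability $\Sim$ is successful yet fails to make a non-abort query to some $\Phi_i$. Choosing the smallest such $i$ and applying item 1 forward in $i$ shows that no non-abort query to $\Phi_k$ occurs either. Let $\cD$ be the distinguisher that takes the description of $S_k$ as non-uniform advice and runs ${\sf Test}^{U_{S_k}}$ on the final ${\bf Y}$-register of its input: on the real view it accepts with probability $1-\negl(\secparam)$, while on $\Sim$'s view it accepts only with negligible probability, since by the same Lemma~\ref{lem:money} argument applied to $S_k$ the only source of $\ket{S_k}$ would be a non-abort $\Phi_k$ query, which does not occur. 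This contradicts fully black-box QZK. The delicate step is the inductive reduction in items 1 and 2: $\Sim$ can interleave an arbitrary polynomial number of abort attempts with the (at most one) state-successful queries and may manipulate the ${\bf Y}$ register coherently across them, so the reduction must implement the $\ket{S_{i-1}}$-membership test strictly through $U_{S_{i-1}}$, never through a description of $S_{i-1}$, which is exactly the oracle regime covered by Lemma~\ref{lem:money}.
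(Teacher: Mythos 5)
Your proposal is correct and reaches the same conclusion via the same core tools (the Aaronson--Christiano unclonability bound, the existential unforgeability of the signature scheme, and the QZK guarantee), but it packages items 1 and 2 differently than the paper does. The paper proves item 1 directly for each $i$ via a non-inductive reduction $\calA_i$ (which aborts on any non-abort $\Phi_{i-1}$ query so that its retained copy of $\ket{A}$ plus the $\Phi_i$-tested state gives $\kb{A}^{\otimes 2}$), and then proves item 2 separately by induction on $i$ for $B_i$ (two state-successful $\Phi_i$ queries), finally handling $i<j$ by reducing $C_{i,j}\wedge\overline{J}$ to $B_i$. You instead prove a single inductive invariant $H(i)$ that bundles both properties and feeds $H(i-1)$ into the reduction by substituting the challenge copy at the unique non-abort $\Phi_{i-1}$ call. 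This is a genuinely different and slightly cleaner decomposition; it gains a little in that the $i<j$ case of item 2 falls out of $H(k)$ for free (the state-successful $\Phi_i$ query forced before $\Phi_j$ by the in-order property, plus the later one, would contradict uniqueness), whereas the paper has to argue that case separately. Your item 3 is the contrapositive of the paper's argument (the paper uses QZK to derive $\Pr[F]\geq 1-\negl$ and then contradicts unclonability; you use unclonability to bound the acceptance probability of $\cD$ on $\Sim$'s output and then contradict QZK), which is logically equivalent.

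Two small issues worth flagging. First, in the inductive step you say $\cR$ performs the substitution ``on the (at most one) \emph{state-successful} call to $\Phi_{i-1}$''; the substitution should be attached to the (at most one) \emph{non-abort} call, since $\Phi_{i-1}$ only emits an outgoing $\ket{S_{i-1}}$ on a non-abort call --- a state-successful-but-aborting call has no outgoing ${\bf Y}$ register to overwrite. This does not affect the argument but the exposition should be precise. Second, and more substantively, your claim that CPA security is needed in item 4 so that $\cR'$ ``may encrypt placeholder plaintexts in lieu of the true internal states $\gamma_j$, which $\cR'$ does not see'' rests on a misconception. The forgery reduction $\cR'$ samples $sk_{\text{enc}}$ and all subspaces itself and has the description of the honest verifier $V$, so it computes $\gamma_j$ directly and encrypts the real plaintexts; only $sk_{\text{sign}}$ is replaced by the signing oracle. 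The paper's reduction for item 4 correspondingly uses no CPA argument. (CPA security of the encryption scheme \emph{is} needed, but only later, in the hybrid transition $\hybrid_1\rightarrow\hybrid_2$ of the main claim's proof, where the simulated prover genuinely does not know $V$'s internal state.)
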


\paragraph{Main Claim.} We defer the proof of \Cref{lem:structure-simulator} to \Cref{sec:proof-structure-simulator} and we finish now the proof of \Cref{lem:negative-result} by showing that ${\cal L}$ is in BQP. 

\begin{claim}
${\cal L}$ is in BQP. 
\end{claim}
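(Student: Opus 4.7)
The plan is to package $\Sim$ and the machinery of $V'$ into a BQP decider $D$ for ${\cal L}$. On input $x$, $D$ samples the subspaces $S_0,\ldots,S_k$, the encryption key $sk_{\text{enc}}$, the signature keys $(sk_{\text{sign}},vk_{\text{sign}})$, and a random tape $r_V$ for the honest verifier $V$; it prepares $\ket{S_0}$ in register ${\bf Y}$ and runs $\Sim^{\channel(V')}(x,\ket{S_0})$, answering every oracle call to $\Phi_i$ exactly as $V'$ would, with the inner $V$ simulated step-by-step on randomness $r_V$. When $\Sim$ halts, $D$ outputs the acceptance bit of the internally simulated $V$. Since $\Sim$ runs in polynomial time on $\functionspace(M)$ qubits, the register ${\bf Y}$ costs $M$ extra qubits, and all of $D$'s bookkeeping (keys, $V$'s state, subspace descriptions, $r_V$) is classical and polynomially bounded, $D$ is a BQP machine.

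For $x\in{\cal L}$ with witness $w$, completeness and ZK combine directly. In a real interaction $\langle P(x,w),V'(x,\ket{S_0})\rangle$ the inner $V$ accepts with probability $1-\negl(\secparam)$ by completeness of $\Pi$, and this acceptance bit is a polynomial-time predicate of $\view_{V'}$. The fully black-box ZK property yields $\view_{V'}\bigl(\langle P(x,w),V'(x,\ket{S_0})\rangle\bigr)\approx_{\quant}\Sim^{\channel(V')}(x,\ket{S_0})$, so the same predicate evaluated on $\Sim$'s output still returns $1$ with probability $1-\negl(\secparam)$; hence $D$ accepts.

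For $x\notin{\cal L}$ I would transform $D$ into a QPT cheating prover $\tilde P$ against an external honest $V$ and invoke computational soundness. $\tilde P$ samples subspaces and keys exactly as $D$ does, runs $\Sim$ internally, and emulates $\Phi_i$ locally using its own keys; crucially, $\tilde P$ does not know $V$'s classical internal state, so it seals the register ${\bf Z}$ with an encryption of a dummy value instead of the true $\gamma_{i-1}$. For the first non-abort query $\Sim$ makes to $\Phi_i$, $\tilde P$ measures $\alpha_i$ off ${\bf X}$, sends it as the $i$-th prover message to the external $V$, and splices $V$'s reply $\beta_i$ back into the response for $\Sim$; all abort queries and any later queries to the same $\Phi_i$ are handled locally. \Cref{lem:structure-simulator} is what makes $\tilde P$ well-defined: items 1--3 guarantee that, with overwhelming probability, the non-abort queries arrive strictly in order $\Phi_1,\ldots,\Phi_k$ and at most once per index, so a single sequential execution of $V$ suffices; item 4 guarantees that the ${\bf Z}$ register entering $\Phi_i$ equals the one that left $\Phi_{i-1}$, so no ``forked'' interaction with $V$ is ever needed. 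By CPA security of the symmetric encryption, $\Sim$'s view of $\Phi_i$'s outputs is computationally indistinguishable whether ${\bf Z}$ hides the real $\gamma_{i-1}$ (as in $D$) or a dummy (as in $\tilde P$), so the distributions of $\alpha_i$'s reaching $V$ in the two experiments differ only by $\negl(\secparam)$. Thus $\prob[D \text{ accepts } x] \leq \prob[V\text{ accepts in }\langle\tilde P,V\rangle] + \negl(\secparam) \leq \negl(\secparam)$ by computational soundness.

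The main obstacle is precisely this externalization in the NO case: a priori a quantum simulator could entangle queries across different $\Phi_i$'s, issue them out of order, repeat successful queries to a single round, or splice an inconsistent ${\bf Z}$ between rounds, none of which would admit a reduction to soundness. \Cref{lem:structure-simulator} is engineered via subspace-state unclonability (together with CPA security and signature unforgeability) to rule out exactly these pathologies, and its four items together are what license the straight-line, one-pass reduction to $\langle\tilde P,V\rangle$. Combining the two cases gives $D$ with bounded two-sided error, so ${\cal L}\in\textsf{BQP}$.
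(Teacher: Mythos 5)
Your high-level plan — package $\Sim$ plus $V'$ into a decider, then use \Cref{lem:structure-simulator} together with a cheating prover $\tilde{P}$ and soundness — is the same skeleton as the paper's, but you've arranged the two directions so that a real gap opens up in the YES case. You define $D$'s output to be the accept bit of the \emph{internally simulated} $V$, i.e.\ a function of the trace of oracle calls $\Sim$ makes to $\Phi_1,\ldots,\Phi_k$. But the fully-black-box ZK guarantee $\view_{V'}\approx_{\quant}\Sim^{\channel(V')}$ is only about $\Sim$'s \emph{output state}, not about what happened inside the oracle calls. A priori $\Sim$ could make queries whose embedded $V$ never reaches the accept decision (or makes several, with conflicting outcomes) and still emit an output state on which your decryption predicate reads ``accept,'' or the converse. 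So ``the predicate on $\Sim$'s output returns $1$ w.h.p.'' does not yield ``$D$ accepts'' under your stated definition of $D$. Closing this requires exactly the items 1--3 of \Cref{lem:structure-simulator} that you only invoke in the NO direction: ZK forces $\Sim$ to be successful with probability $1-\negl(\lambda)$ (non-abort is a property of the output that a distinguisher with the keys can test), and then \Cref{lem:structure-simulator} items 1--3 force a single, in-order, non-abort query to $\Phi_k$, which by construction of $\Phi_k$ means the internal $V$ accepted. That is the step you skipped.

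The paper sidesteps the issue by choosing a different decider: it runs $\langle\tilde{P}',V\rangle$ with an \emph{external} honest $V$ and outputs $V$'s decision. The NO case then reduces to soundness with no further work, and all the weight (ZK, \Cref{lem:structure-simulator}, the IND-CPA hybrid that swaps $\enc(sk_{\text{enc}},\gamma_i)$ for $\enc(sk_{\text{enc}},0)$) is concentrated in the YES case via the $\hybrid_0\to\hybrid_3$ chain. Your $D$ pushes that bridging into the NO case instead; that direction of your argument is essentially correct (it is the paper's hybrids run in reverse), but the YES case as written conflates the two acceptance notions. Two smaller omissions: you should also spell out that $\Sim$ may only run in \emph{expected} polynomial time, so $D$ needs the paper's truncation to a strict polynomial bound with a Markov argument; and your $\tilde P$ must decide when to contact the external $V$ using only the subspace-state and signature checks it can perform with its own sampled keys (the ``non-abort'' status is not observable to $\tilde P$ until after $V$ responds), which is again where \Cref{lem:structure-simulator} items 1--2 do the work.
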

\begin{proof}
We start by showing that there exists a prover $\tilde{P}$ that runs in expected polynomial time and makes the verifier $V$ accept any instance in ${\cal L}$ with probability $1-\negl(\lambda)$. 

Intuitively, $\tilde{P}$ will execute $\Sim$ and simulate the queries to $V'$ by exchanging messages with $V$. Henceforth, we will assume that the registers ${\bf X},{\bf Z}$ and ${\bf T}$ are measured in the computational basis just before making queries to $\Phi_i$: this does not affect the analysis since they are anyway measured in the channel $\Phi_i$. \\

\noindent In more detail, $\tilde{P}$ proceeds as follows: $\tilde{P}$ chooses random subspaces $S_i$ for $0 \leq i \leq k$, samples a secret-key $sk_{\text{enc}}$ corresponding to a symmetric-key encryption scheme and samples a secret-key-verification key pair $(sk_{\text{sign}},vk_{\text{sign}})$ corresponding to a digital signature scheme. 

$\tilde{P}$ then runs $\Sim$ on input $\ket{S_0}$. For every query of $\Sim$ to $\Phi_i$, $i > 1$, $\tilde{P}$ answers the query with abort. For a query made to $\Phi_1$, on input 
$\kb{\alpha_1}_{{\bf X}} \otimes {\rho_1}_{{\bf Y}} $, $\tilde{P}$ first checks if $\rho_1$ corresponds to $\ket{S_0}$ by measuring $\rho_1$ with respect to $\{\ketbra{S_0}{S_0},I-\ketbra{S_0}{S_0}\}$.
If the check fails, $\tilde{P}$ responds to the query with an abort and continues the execution of $\tilde{P}$ with the same oracle responses as before. Otherwise, $\tilde{P}$ will do the following:
\begin{enumerate}
    \item Send the message $\alpha_1$ to the verifier,
    \item Receive the message $\beta_1$ from the verifier,
    \item Compute $c_1 = \enc(sk_{\text{enc}},0)$ and $t_1 = \sign(sk_{\text{sign}},c_1)$
    \item Answer the query from $\Sim$ with $\kb{\beta_1}_{{\bf X}} \otimes \kb{S_1}_{{\bf Y}} \otimes \kb{c_1}_{{\bf Z}} \otimes \kb{t_1}_{{\bf T}}$,
    \item Set $r = 2$ and continues executing $\Sim$, simulating the queries to $\Phi_1,...,\Phi_k$ as described below.
\end{enumerate}

For every query of $\Sim$ to $\Phi_i$, for $i \ne r$, $\tilde{P}$ answers the query with abort. For a query made to $\Phi_r$, on input 
$\kb{\alpha_r}_{{\bf X}} \otimes {\rho_r}_{{\bf Y}} \otimes \kb{c'_{r}}_{{\bf Z}} \otimes \kb{t'_{r}}_{{\bf T}} $, $\tilde{P}$ first checks if $\rho_{r}$ corresponds to $\ket{S_{r-1}}$ by measuring $\rho_r$ with respect to $\{\ketbra{S_{r-1}}{S_{r-1}},I-\ketbra{S_{r-1}}{S_{r-1}}\}$. We now consider the following three cases: First, if the check fails, $\tilde{P}$ answers the query with abort and continues the execution of $\Sim$. Secondly, if the check succeeds and $c_{r-1} \ne c'_{r}$ then $\tilde{P}$ aborts. Finally, if the check succeeds and the signature $t'_r$ is a correct signature for $c'_r$ then $\tilde{P}$ will do the following: 
\begin{enumerate}
    \item Send the message $\alpha_r$ to the verifier,
    \item Receive the message $\beta_r$ from the verifier,
    \item Compute $c_r = \enc(sk_{\text{enc}},0)$ and $t_r = \sign(sk_{\text{sign}},c_r)$ (using fresh randomness),
    \item Answer the query from $\Sim$ with $\kb{\beta_r}_{{\bf X}} \otimes \kb{S_r}_{{\bf Y}} \otimes \kb{c_r}_{{\bf Z}} \otimes \kb{t_r}_{{\bf T}}$,
    \item Increment $r$ by 1, i.e., set $r = r+1$,  and continue executing $\Sim$ by answering the queries with the updated value of $r$.
\end{enumerate}

\noindent We now prove that $\tilde{P}$ makes $V$ accept any instance $x \in \lang$ with $1-\negl(\lambda)$ probability. For that we consider the following hybrids: \\

\noindent \underline{$\hybrid_0$}: The execution of $\Sim$ with access to $\Phi_1,...,\Phi_k$ that come from $V'$ defined above. \\

\noindent \underline{$\hybrid_1$}: The execution of a simulator $\Sim'$ which has the same behaviour as $\Sim$, except $\Sim'$ aborts if any of the events of the items of \Cref{lem:structure-simulator} happen. \\

From \Cref{lem:structure-simulator}, the events happen with probability at most $\negl(\lambda)$ and therefore the output of $\hybrid_0$ and $\hybrid_1$ are statistically close.\\

\noindent \underline{$\hybrid_2$}: In this hybrid, we consider the interaction of $\Sim'$ with access to $\Phi'_1,...,\Phi'_k$, which are controlled by a stateful party $P'$. At the beginning, for all $j > 1$, $P'$ answers $\Phi'_j$ with an abort. On the query to $\Phi'_1$, $P'$ will store the value of the output registers ${\bf Z}$ and ${\bf T}$, which we denote by $c_1$ and $t_1$, and replace them by $d_1 = \enc(sk_{\text{enc}},0)$ and its signature $\sign(sk_{\text{sign}},d_1)$. $P'$ will set $r = 2$ and starts answering the queries as below.

On a query to $\Phi_j$, for $j \ne r$, $P'$ answers with an abort. On a query to $\Phi_r$,  $P'$ works as follows. $P'$ will check if the content of input register ${\bf Z}$ is equal to $d_{r-1}$. If it is not, $P'$ answers with abort. Otherwise, $P'$ re-initializes the contents of the input registers  ${\bf Z}$ and ${\bf T}$ with $c_{r-1}$ and $t_{r-1}$. Then $P'$ execute the original channel of $\Phi_r$. $P'$ will store the value of the output registers ${\bf Z}$ and ${\bf T}$, which we denote by $c_r$ and $t_r$, and replace them by $d_r = \enc(sk_{\text{enc}},0)$ and its signature $\sign(sk_{\text{sign}},d_r)$. $P'$ increments $r$ and answers the queries as described above with the updated value of $r$ \\

Given the IND-CPA security of the encryption scheme, the output of $\hybrid_1$ and $\hybrid_2$ are computationally close. \\

\noindent \underline{$\hybrid_3$}: We consider the interaction of $\Sim'$ with the prover $\tilde{P}$ of the protocol. \\

The output of the simulator in $\hybrid_2$ and $\hybrid_3$ are the same, since $\Sim'$ does not have access to any value that depends on the internal state of $V$ in $\hybrid_2$, and $\tilde{P}$ simulates all operations of $V'$.

\medskip

We show now that the successful output of the simulator in $\hybrid_3$ implies that $V$ accepts in the protocol.  The key observation is that, as described above, if the simulator does not abort, any state-successful query to $\Phi_k$ is also non-abort. But by construction, this is only possible if $V$ accepts in the protocol. Therefore, since $\Sim$ does not abort with probability at least $1-\negl(\lambda)$, $V$ accepts with the same probability when interacting with $\tilde{P}$ on a yes-instance. 

We also notice that for any no-instance $x$, by the soundness of the protocol, we have that such a $\tilde{P}$ makes $V$ accept with probability at most $\negl(\lambda)$.

Finally, we need to discuss the runtime of $\tilde{P}$. So far, we have that there is a polynomial $q$ such that the expected runtime of $\tilde{P}$ if $q(\lambda)$. However, we can consider a prover $\tilde{P}'$ that terminates the execution of $\tilde{P}$ after $q(\lambda)^2$ steps and in this case, the probability that $\tilde{P}'$ convinces the honest verifier is still a constant.. In this case, both $\tilde{P}'$ and $V$ runs in polynomial-time, which can be simulated by a BQP algorithm, which finishes the proof.
\end{proof}
\end{proof}

\subsection{Proof of \Cref{lem:structure-simulator}}
\label{sec:proof-structure-simulator}

We prove each item of \Cref{lem:structure-simulator} in a separate lemma below.

\begin{lemma}\label{lem:skip_query}
Except with $ \negl(\lambda)$ probability, for no $i$, $\Sim$ has a state-successful query to $\Phi_{i}$ without having a non-abort query to $\Phi_{i-1}$;
\end{lemma}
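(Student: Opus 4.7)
The plan is to reduce to the unclonability of subspace states (\Cref{lem:money}). Note that the statement is vacuous for $i=1$, since $\Sim$ receives $\ket{S_0}$ as the initial private state of $V'$ and no $\Phi_0$ exists, so I focus on $i \geq 2$. The key observation is that $\ket{S_{i-1}}$ is freshly produced on register ${\bf Y}$ only in step 8 of $\Phi_{i-1}$, which executes precisely on non-abort queries; absent any such query, information about $S_{i-1}$ reaches $\Sim$ only via the state test embedded in $\Phi_i$, and that test can be implemented with a single query to $U_{S_{i-1}}$ as described in \Cref{sec:querylb}.

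I will assume toward contradiction that the bad event occurs with non-negligible probability $\delta$ and, by pigeonhole over $i^* \in \{2,\dots,k\}$, fix $i^*$ for which it occurs with probability at least $\delta/k$. I then build a reduction $\mathcal{B}$ that, given a single copy of $\ket{S_{i^*-1}}$ together with oracle access to $U_{S_{i^*-1}}$ (as in the hypothesis of \Cref{lem:money}), outputs a $2M$-qubit state projecting onto $\kb{S_{i^*-1}}^{\otimes 2}$ with non-negligible probability. The reduction samples on its own all other subspaces $\{S_j\}_{j \neq i^*-1}$, the symmetric key $sk_{\text{enc}}$, and the signing/verification keys, picks $t \in [Q]$ uniformly at random with $Q = \poly(\lambda)$ an upper bound on $\Sim$'s query count, and runs $\Sim$ on input $\ket{S_0}$. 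For $j \notin \{i^*-1, i^*\}$, $\mathcal{B}$ simulates $\Phi_j$ exactly. For $\Phi_{i^*-1}$, it simulates steps 1--7 faithfully (since it knows $S_{i^*-2}$ and all keys) but aborts the entire reduction the moment step 8 would execute, i.e., on any non-abort query. For $\Phi_{i^*}$, it implements the state test of steps 2--3 using the oracle $U_{S_{i^*-1}}$; if this is the $t$-th query and the test passes, $\mathcal{B}$ swaps the contents of ${\bf Y}$ out to a fresh register (keeping it as its second candidate copy) and then continues simulating the remaining steps using $S_{i^*}$ and the keys.

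By \Cref{lem:uncstates:test}, the contents of ${\bf Y}$ immediately after a successful state test equal $\ket{S_{i^*-1}}$. Conditioning on the bad event (probability $\geq \delta/k$) and on guessing $t$ correctly (probability $1/Q$), $\mathcal{B}$ never aborts and outputs the challenge copy tensored with the freshly extracted $\ket{S_{i^*-1}}$, succeeding with probability at least $\delta/(kQ)$, which is non-negligible. Since $\mathcal{B}$ issues only $\poly(\lambda)$ queries to $U_{S_{i^*-1}}$, while \Cref{lem:money} demands $\Omega\bigl(\sqrt{\delta/(kQ)}\,2^{M/4}\bigr)$, a super-polynomial quantity because $M = \omega(\log \lambda)$, this yields the desired contradiction. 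The main technical point to get right will be justifying that $\mathcal{B}$ never actually needs to produce $\ket{S_{i^*-1}}$ while simulating $\Phi_{i^*-1}$: any query failing the state test for $S_{i^*-2}$, the signature check, or the inner verifier's step causes $\Phi_{i^*-1}$ to abort before step 8 and so requires no emission of the subspace state, while the remaining (non-abort) queries are precisely those excluded by the conditioned bad event.
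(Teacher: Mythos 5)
Your reduction to Lemma~\ref{lem:money} is the same core argument as the paper's: abort the reduction on any non-abort answer to $\Phi_{i^*-1}$ (since that is the only place $\ket{S_{i^*-1}}$ would ever need to be emitted), use the oracle $U_{S_{i^*-1}}$ to implement the state test inside $\Phi_{i^*}$, and output the challenge copy together with the state that survived the test (justified by Lemma~\ref{lem:uncstates:test}). Two small deviations are worth noting. First, the random guess $t \in [Q]$ is unnecessary: the paper's adversary simply halts and outputs at the \emph{first} state-successful query to $\Phi_{i^*}$, which already yields success probability at least $\Pr[J_{i^*}]$ with no $1/Q$ loss; your version still works, but it also quietly presupposes a strict polynomial bound $Q$ on the number of queries, which fails for an expected-polynomial-time simulator. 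Second, you do not address the expected-polynomial-time case at all; the paper handles it by truncating $\Sim$ after a fixed polynomial number of steps (by Markov this changes $\Pr[J_{i^*}]$ only by an inverse-polynomial amount) and then running the strict-polytime argument. With those two patches -- drop the random $t$ and add the truncation step -- your proof is correct and matches the paper's.
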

\begin{proof}
We start by proving this claim assuming that $\Sim$ runs in strict polynomial time, and then we generalize to the case where $\Sim$ runs in expected polynomial time.

Let $J$ be the event where $\Sim$ has a state-successful query to $\Phi_{i}$, for some $i$, without having a non-abort query to $\Phi_{i-1}$.\footnote{As defined in the proof of \Cref{lem:negative-result}, a query is state-successful if it passes the subspace state check of $\Phi_i$.}

Let us define $J_{i}$ the event where $\Sim$ has a successful query to $\Phi_{i}$ without having a non-abort query to $\Phi_{i-1}$. Notice that, $J = \bigvee_{0 \leq i \leq k} J_{i}$, and by union bound
\begin{align}\label{eq:bound_query_skip_total}
\Pr[J] \leq \sum_{0 \leq i \leq k} \Pr[J_{i}],
\end{align}
and we follow by showing that for any $i$
\begin{align}\label{eq:bound_query_skip}
     \Pr[J_{i}] = \negl(\lambda),
\end{align}
which implies \Cref{eq:bound_query_skip_total} and the proofs this lemma.

  We prove \Cref{eq:bound_query_skip} by contradiction, so let us assume that there exists some polynomial $p$ such that $\Pr[J_{i}] > 1/p(\lambda)$. We
  show that we can construct an adversary $\calA_{i}$ that is able to clone a random subspace state $\ket{A}$ with polynomially many queries to the verification oracle, contradicting \Cref{lem:money}.

  $\calA_{i}$ works as follows: $\calA_i$ receives some state $\ket{A}$ and
  oracle access to $U_A$ (as described in~\Cref{sec:querylb}). $\calA_i$  keeps $\ket{A}$ aside and simulates $\Sim$ with oracle access to the channels
  $\Phi'_1,...,\Phi'_{k}$ that we define now. For $j \not\in \{i-1,i\}$, we have  $\Phi'_j = \Phi_j$.
  Whenever $\Sim$ queries $\Phi'_{i-1}$, $\calA_{i}$ runs $\Phi_{i-1}$ and if
  the answer is non-abort, then $\calA_i$ aborts.  For $\Phi'_{i}$,
  $\calA_{i}$ uses the oracle $U_A$ to verify the subspace state and if it passes, then
  $\calA_{i}$ outputs the result of the query and the original state $\ket{A}$. Otherwise
  $\calA_i$ answers the query to $\Sim$ with an abort. If $\Sim$
  terminates before a state-successful query to $\Phi_i'$, $\calA_{i}$ aborts. 

  We argue now that $\Pr[J_{i}] > 1/p(\lambda)$ implies that   $\calA_i$ is able to output two
  copies of $\ket{A}$ with probability at least $1/p(\lambda)$ with polynomially-many oracle calls to $U_A$. Notice that whenever
  $\calA_i$ does not abort, it outputs $\ket{A}^{\otimes 2}$ as desired: one
  of the copies is guaranteed to be $\ket{A}$ since it was received by the
  challenger in the quantum money scheme whereas the second copy passed
  verification test using $U_A$, and thus by \Cref{lem:uncstates:test} it must be on state $\ket{A}$. In this
  case, we need to prove the probability that $\calA_i$ does not abort. 

  Let us define the following two events:
  \begin{enumerate}
    \item[$E_1$]: $\Sim$ terminates before a successful query to $\Phi_i$;
    \item[$E_2$]: $\Sim$ queries $\Phi_{i-1}$ with non-abort answer before querying
      $\Phi_{i}$   
  \end{enumerate}

  Let also $E$ be the event on which $\calA_i$ aborts. By construction we have
  that $E = E_1 \vee E_2$.  Moreover, we have that $E_j \subseteq
  \overline{J_i}$ for $j \in \{1,2\}$, and therefore
  $$\Pr[E] \leq \Pr[\overline{J_i}] = 1 - \Pr[J_i] \leq 1 - 1/p(\lambda).$$

  It follows then that the probability that $\calA_i$ does not abort (and
  outputs $\ket{A}^{\otimes 2}$) is at least $1/p(\lambda)$ with polynomially-many queries to $U_A$, which is a contradiction to \Cref{lem:money} and this finishes the proof of \Cref{eq:bound_query_skip}.

\medskip

  Finally, to finish the proof, we need to consider the case the expected runtime of $\Sim$ $q(\lambda)$, for some polynomial $q$. Let us suppose that there exists some non-zero polynomial $p$ such that $\Pr[J_i] \geq 1/p(\lambda)$.

In this case, we can consider a simulator $\Sim'$ that executes $\Sim$ but aborts if the runtime of $\Sim$ is greater than $ q(\lambda)^ p(\lambda)$. Notice that $\Sim'$ aborts with probability at most $\frac{1}{q(\lambda)p(\lambda)}$ and we have that $\Pr[J_{i}] > 1/p(\lambda)$ holds for $\Sim'$, such that $\Pr[J_i] \geq 1/p(\lambda) - \frac{1}{q(\lambda)p(\lambda)} = \frac{1}{\poly(\lambda)}$, which is a contradiction.
\end{proof}

\begin{lemma}\label{lem:two-queries}
Except with $\negl(\lambda)$ probability, for no $i \leq j$, $\Sim$ has a state-successful query to $\Phi_i$ after having a state-successful query to $\Phi_j$.
\end{lemma}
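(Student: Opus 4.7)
The plan is to reduce the statement, via \Cref{lem:skip_query}, to the case $i = j$, and then to handle the case $i=j$ by reduction to the unclonability of subspace states (\Cref{lem:money}), in the style of the proof of \Cref{lem:skip_query}.

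First I would observe that the $i < j$ sub-case reduces to the $i = j$ sub-case. If $\Sim$ has a state-successful query to $\Phi_j$ at some time $t_1$ followed by a state-successful query to $\Phi_i$ at time $t_2 > t_1$ with $i < j$, then iterating \Cref{lem:skip_query} backwards from the query at time $t_1$ yields a chain of non-abort (hence state-successful) queries to $\Phi_{j-1},\Phi_{j-2},\ldots,\Phi_i$ all occurring before $t_1$. The resulting earlier state-successful query to $\Phi_i$, together with the state-successful query at $t_2$, gives two state-successful queries to the same channel. Thus, up to the $\negl(\lambda)$ failure probability of \Cref{lem:skip_query}, it suffices to show that for every $i \in [k]$, the probability that $\Sim$ makes two state-successful queries to $\Phi_i$ is $\negl(\lambda)$.

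For the latter statement I would proceed by induction on $i$, with the inductive hypothesis that for every $\ell < i$, with overwhelming probability $\Sim$ makes at most one state-successful (equivalently, at most one non-abort) query to $\Phi_\ell$. Assuming for contradiction that, for some polynomial $p$, the event ``two state-successful queries to $\Phi_i$'' occurs with probability at least $1/p(\lambda)$, I would construct an adversary $\cA_i$ contradicting \Cref{lem:money}. $\cA_i$ receives the challenge $\ket{A}$ (to be identified with $\ket{S_{i-1}}$) together with oracle access to $U_A$, samples all other subspaces $\{S_\ell\}_{\ell \ne i-1}$ and the encryption and signature keys honestly, and simulates $\Sim$ with synthetic channels $\Phi'_1,\ldots,\Phi'_k$. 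For $\ell \notin \{i-1,i\}$, $\Phi'_\ell = \Phi_\ell$, which $\cA_i$ can execute since it knows $S_{\ell-1}$ and $S_\ell$. For $i > 1$, on the first non-abort query to $\Phi_{i-1}$ (the only such query, up to negligible probability, by the inductive hypothesis), $\cA_i$ runs $\Phi_{i-1}$ but substitutes the challenge $\ket{A}$ for the freshly generated $\ket{S_{i-1}}$ on register ${\bf Y}$; any further non-abort query to $\Phi_{i-1}$ is aborted. For $i = 1$, the challenge $\ket{A}$ is given directly as the advice state with which the simulator is initialised. For $\Phi_i$, $\cA_i$ implements the subspace check via $\mathsf{Test}^{U_A}$; on a state-successful query, \Cref{lem:uncstates:test} ensures that the post-test state on ${\bf Y}$ is exactly $\ket{A}$, which $\cA_i$ swaps out into a private register before re-initialising ${\bf Y}$ with $\ket{S_i}$ and completing the remaining operations of $\Phi_i$ (decrypt, verify signature, run $V$'s next round, re-encrypt, re-sign) using its own keys.

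Because $\Phi'_i$ realises the same projection onto $\kb{S_{i-1}}$ on the simulator's registers as $\Phi_i$ (only the implementation of the measurement differs), and because the modification to $\Phi_{i-1}$ alters $\Sim$'s view only with negligible probability by the inductive hypothesis, two state-successful queries to $\Phi_i$ occur in $\cA_i$'s simulation with probability at least $1/p(\lambda) - \negl(\lambda)$. Whenever this event occurs, $\cA_i$ finishes with a $2M$-qubit state whose projection onto $\kb{A}^{\otimes 2}$ equals $1$, after only polynomially many queries to $U_A$, contradicting \Cref{lem:money} since $\frac{M}{2} = \omega(\log \lambda)$. The main obstacle I anticipate is that $\cA_i$ holds only a single challenge copy of $\ket{S_{i-1}}$ while $\Phi_{i-1}$ could \emph{a priori} be queried non-abortively many times; this is precisely what the induction on $i$ is designed to resolve, since the inductive hypothesis caps the number of such non-abort queries at one up to negligible error. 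Finally, the expected-polynomial-time regime is handled by truncating $\Sim$ at a sufficiently large polynomial number of steps and repeating the strict polynomial-time argument, exactly as in the final paragraph of the proof of \Cref{lem:skip_query}.
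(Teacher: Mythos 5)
Your proof is correct and takes essentially the same route as the paper: induction on $i$ with a reduction to the unclonability of subspace states (\Cref{lem:money}) for the $i=j$ case, followed by using \Cref{lem:skip_query} to collapse $i<j$ to $i=j$, and the paper's truncation trick for expected polynomial time. Your reduction of the $i<j$ sub-case (iterating \Cref{lem:skip_query} backwards from the $\Phi_j$ query to obtain an earlier state-successful $\Phi_i$ query, hence two state-successful $\Phi_i$ queries) is stated a bit more transparently than the paper's corresponding paragraph, which reduces to $B_j$, but the two are morally the same and both inherit negligibility from the $i=j$ case.
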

\begin{proof}
As in \Cref{lem:skip_query}, we prove the lemma for strict polynomial-time $\Sim$ and the extension to expected polynomial-time $\Sim$ follows as before.

We first prove it for $i = j$ and we then generalize it for $i \leq j$. 
Let $B_{i}$ be the event where $\Sim$ has two state-successful queries to $\Phi_i$. 
Our goal now is to show that for every polynomial $p$,
\begin{align}
\label{eq:money-case2}
    \Pr[B_{i}] < i \cdot \frac{1}{p(\lambda)},
\end{align}
which implies \Cref{lem:two-queries} by a union bound on $\Pr\left[\bigvee_{1 \leq i  \leq k} B_i\right]$. We prove now 
\Cref{eq:money-case2} by induction on $i$.

Let us start with the base case of \Cref{eq:money-case2}, where $i = 1$.
Our proof follows by contradiction. Therefore, let us suppose that  there exists some polynomial $p$  such that $\Pr[B_{1}] \geq 1/p(\lambda)$.  We show then that we can construct an adversary $\calA_{1}$ that is able to clone a random subspace state $\ket{A}$ with polynomially many queries to the verification oracle, contradicting \Cref{lem:money}.

$\calA_{1}$ receives some state $\ket{A}$ and oracle access to $U_A$, and it proceeds as follows: 
$\calA_{1}$ sets $q = 0$ and simulates
$\Sim$ with side information $\ket{A}$ and oracle access to $\Phi'_1,...,\Phi'_{k}$, where $\Phi'_j = \Phi_j$ for $j > 1$.  For $\Phi'_1$, $\calA_1$ uses
the oracle $U_A$ to verify the quantum state. If the test does not pass, $\calA_1$ answers the query with abort. If the test passes and $q = 0$, $\calA_1$ stores the output of $U_A$ on some register ${\bf M}$, sets $q = 1$ and 
continues the simulation of $\Sim$. If the test passes and $q = 1$, $\calA_1$ outputs the state returned from the second query to $U_A$, along with the state stored in register ${\bf M}$.
If $\Sim$ terminates, then $\calA_1$ aborts.

  We argue now that if $\Pr[B_1] \geq 1/p(\lambda)$, then $\calA_i$ is able to output two
  copies of $\ket{A}$ with probability at least $1/p(\lambda)$ with polynomially many queries to $U_A$. Notice that whenever
  $\calA_i$ does not abort, it outputs $\ket{A}^{\otimes 2}$ as desired, since both 
  both of the states passed the
  verification test using $U_A$ (by \Cref{lem:uncstates:test}). In this
  case, we need to bound the probability that $\calA_i$ does not abort. 

  Let $F$ be the event when $\calA_1$ aborts, which happens if $\Sim$ terminates without two state-successful queries to $\Phi'_i$.
  
  Notice that $F \subseteq
  \overline{B_1}$, and therefore
  $$\Pr[F] \leq \Pr[\overline{B_1}] = 1 - \Pr[B_1] \leq 1 - 1/p(\lambda).$$

  It follows then that the probability that $\calA_1$ does not abort (and
  outputs $\ket{A}^{\otimes 2}$) is at least $1/p(\lambda)$, which contradicts \Cref{lem:money} and therefore we have \Cref{eq:money-case2} for $i=1$.

  \medskip

  We assume now that \Cref{eq:money-case2} works for $i-1$, and we prove it for $i$. The proof is similar to the base case, but we present here for completeness. We prove it again by contradiction. Therefore, let us suppose that $    \Pr[B_{i}] \geq i \cdot p(\lambda)$, for some polynomial $p$. Notice that
  \begin{align*}
  i \cdot p(\lambda) &\leq \Pr[B_{i}] \\
  &= \Pr[B_{i} \wedge \overline{B_{i-1}}]
  + \Pr[B_{i}  | B_{i-1}]\Pr[B_{i-1}] \\
  &< \Pr[B_{i} \wedge \overline{B_{i-1}}]
  + (i-1) \cdot p(\lambda),
  \end{align*}
  where the inequality holds by the induction hypothesis. Therefore, our assumption implies that
\begin{align}\label{eq:money-case3}
    \Pr[B_{i} \wedge \overline{B_{i-1}}] \geq p(\lambda).
\end{align}

  We show then that if \Cref{eq:money-case3} holds, then we can construct an adversary $\calA_{i}$ that is able to clone a random subspace state $\ket{A}$ with polynomially many queries to the verification oracle, contradicting \Cref{lem:money}.

$\calA_{i}$ receives the state $\ket{A}$ and oracle access to $U_A$, and it proceeds as follows:
$\calA_{i}$ sets $q_{i-1} = q_i = 0$, keeps $\ket{A}$ on a register ${\bf N}$ aside and simulates
$\Sim$ on $\Phi'_1,...,\Phi'_{k}$, where $\Phi'_j = \Phi_j$ for $j \not\in
\{i-1,i\}$. For $\Phi'_{i-1}$, if $q_{i-1} = 1$, $\calA_{i}$ aborts. Otherwise,
$\calA_{i}$ simulates $\Phi_{i-1}$, and if it does not abort, $\calA_{i}$ sets
$q_{i-1} = 1$ and stores the quantum state $\ket{S_{i-1}}$ returned by $\Phi_{i-1}$ and replaces it by the given copy of $\ket{A}$. For $\Phi'_{i}$, $\calA_{i}$ uses
the oracle to verify the quantum state. If it does not pass the test,
$\calA_i$ answers the query to $\Phi'_{i}$ with abort.
If the test passes, $\calA_i$ has a different behaviour depending on the value
of $q_i$
\begin{itemize}
  \item If the test passes and $q_i = 0$, then $\calA_{i}$ puts the state returned by the oracle $U_A$ on a register ${\bf M}$, replaces it by $\ket{S_{i-1}}$, simulates $\Phi_i$ and sets $q_i = 1$.
  \item If the test passes and $q_i = 1$, $\calA_i$ outputs the state returned by the oracle $U_A$ along with the state in register ${\bf M}$.
\end{itemize}
If $\Sim$ terminates, then $\calA_i$ aborts.

  We argue now that \Cref{eq:money-case3} implies that $\calA_i$ is able to output two
  copies of $\ket{A}$ with probability at least $1/p(\lambda)$ with polynomially many queries. Notice that whenever
  $\calA_i$ does not abort, it outputs $\ket{A}^{\otimes 2}$ as desired, since
  both of the states passed the
  verification test using $U_A$ (by \Cref{lem:uncstates:test}). In this
  case, we need to bound the probability that $\calA_i$ does not abort. 

  Let us define the following events:
  \begin{enumerate}
    \item[$F_1$]: $\Sim$ aborts due to two non-abort queries to $\Phi'_{i-1}$;
    \item[$F_2$]: $\Sim$ terminates without two queries to $\Phi'_i$ that pass
      the verification of the subset state;
  \end{enumerate}

  Let also $F$ be the event on which $\calA_i$ aborts. By construction we have
  that $F = F_1 \vee F_2$.  Notice that $F_1 \subseteq B_{i-1}$ and that $F_2 \subseteq
  \overline{B_i}$. It follows from \Cref{eq:money-case3} that
  $$\Pr[F] \leq \Pr[B_{i-1} \vee \overline{B_i} ] = 1 - \Pr[\overline{B_{i-1}} \wedge B_i ] \leq 1 - p(\lambda).$$

  Therefore, the probability $\calA_i$ does not abort (and
  outputs $\ket{A}^{\otimes 2}$) is at least $p(\lambda)$, which contradicts \Cref{lem:money} and we have \Cref{eq:money-case2}. 
  
 \medskip
  
  To finish the proof, we have to prove the lemma for $i < j$. Let $C_{i,j}$ be the event where $\Sim$ has a state-successful query to $\Phi_i$ after having a state-successful query to $\Phi_j$. By defining $J$ to be the event where $\Sim$ has a state-successful query to $\Phi_{i}$, for some $i$, without having a non-abort query to $\Phi_{i-1}$, we have that
  $$ \Pr[C_{i,j}] = \Pr[C_{i,j}|J]\Pr[J] +
\Pr[C_{i,j} \wedge \overline{J}] \leq \negl(\lambda) + \Pr[B_{j}] = \negl(\lambda),
$$
where the first inequality follows from \Cref{lem:skip_query} and $B_j \subseteq C_{i,j} \wedge \overline{J}$, since $\overline{J}$ implies state-successful query to $\Phi_j$ before a state-successful query to $\Phi_i$ and $C_{i,j}$ implies a state-successful query to $\Phi_j$ after a state-successful query $\Phi_i$.
\end{proof}

\begin{lemma}
Except with $\negl(\lambda)$ probability, for no $i$, $\Sim$ is successful without a non-abort query to $\Phi_i$.
\end{lemma}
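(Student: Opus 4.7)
The plan is to reduce the lemma to the case $i=k$ and then derive a contradiction with \Cref{lem:money}. Define $D_i$ as the event ``$\Sim$ is successful and has no non-abort query to $\Phi_i$''. By \Cref{lem:skip_query}, the absence of a non-abort query to $\Phi_i$ implies (except with negligible probability) no state-successful query to $\Phi_{i+1}$, and hence no non-abort query to $\Phi_{i+1}$; iterating, $D_i \subseteq D_k$ up to a negligible event. A union bound over $i \in [k]$ therefore reduces the lemma to showing that $\Pr[D_k] = \negl(\lambda)$.

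For this, I would suppose for contradiction that $\Pr[D_k] \geq 1/p(\lambda)$ for some polynomial $p$, and construct an adversary $\calA$ that clones a uniformly random subspace state, contradicting \Cref{lem:money}. Given a challenge state $\ket{A}$ and oracle access to $U_A$, $\calA$ samples fresh random subspaces $S_0,\ldots,S_{k-1}$ and fresh keys $sk_{\text{enc}}, (sk_{\text{sign}}, vk_{\text{sign}})$, implicitly sets $S_k=A$, and stores the challenge copy in a private register ${\bf N}$. It then simulates $\Sim$ on input $(x,\ket{S_0})$ by implementing $\Phi_j$ honestly for $j<k$ (which is possible since $\calA$ knows $S_{j-1},S_j$ and all keys) and by answering every query to $\Phi_k$ with abort. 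When $\Sim$ terminates, if $\Sim$ is successful, $\calA$ applies ${\sf Test}^{U_A}$ from \Cref{sec:querylb} to the ${\bf Y}$-register of $\Sim$'s output, and on a pass it outputs ${\bf N}$ together with the post-measurement state. By \Cref{lem:uncstates:test}, both are then copies of $\ket{A}$.

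To lower-bound $\calA$'s success probability, a coupling argument gives $\Pr[\calA\text{ simulates }\Sim\text{ successfully}] \geq \Pr[D_k]$: under $D_k$ every query to the real $\Phi_k$ aborts, so replacing it by the always-abort channel is transparent to $\Sim$, and the joint distribution of transcript and output state is identical in both experiments. Next, consider the non-uniform QPT distinguisher that hardwires the description of $S_k$ from $V'$'s code and, given an input, applies ${\sf Test}^{U_{S_k}}$ to its ${\bf Y}$-register. In the real interaction $\view_{V'}(\langle P(x,w), V'(x,\ket{S_0})\rangle)$, the register ${\bf Y}$ is re-initialized to $\ket{S_k}$ at the end of round $k$, so the test passes with probability $1$; by \Cref{def:fully-bb} it therefore passes on the ${\bf Y}$-register of $\Sim^{\channel(V')}(x,\ket{S_0})$ with probability $1-\negl(\lambda)$. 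Combining the two bounds, $\calA$ outputs $\ket{A}^{\otimes 2}$ with probability at least $1/p(\lambda)-\negl(\lambda)$ using only a single query to $U_A$, which contradicts \Cref{lem:money} since $M=\omega(\log\lambda)$ yields $1/p(\lambda)=\omega(2^{-M/2})$.

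The main subtlety lies in making the ZK step line up with the coupling: the distinguisher used to transfer the ``${\bf Y}=\ket{S_k}$'' guarantee to the simulated world must be efficient, and it is so only because it may be non-uniform in $V'$ (and therefore in $S_k$), which is licensed by the definition of quantum computational indistinguishability. Once this is observed, the coupling and cloning arguments match up cleanly, and the expected-polynomial-time case for $\Sim$ is handled by the same truncation trick used in the proof of \Cref{lem:skip_query}.
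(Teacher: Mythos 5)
Your proof is correct and takes essentially the same approach as the paper: reduce to the $i=k$ case via \Cref{lem:skip_query}, then build a cloner $\calA$ for $\ket{S_k}$ that runs $\Sim$ with a modified $\Phi_k$-oracle and tests the output $\mathbf{Y}$-register, invoking the fully black-box ZK guarantee (with a distinguisher non-uniform in $S_k$) to argue that $\mathbf{Y}$ is close to $\ket{S_k}$ with overwhelming probability. The only cosmetic difference is that your $\calA$ unconditionally answers every $\Phi_k$ query with abort and relies on a coupling to the real execution under $D_k$, whereas the paper's $\calA$ actively simulates $\Phi_k$'s checks and aborts itself as soon as a would-be non-abort query is detected; both give $\Pr[\calA \text{ succeeds}] \geq \Pr[D_k] - \negl(\lambda)$.
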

\begin{proof}
As in \Cref{lem:skip_query}, we prove the lemma for strict polynomial-time $\Sim$ and the extension to expected polynomial-time $\Sim$ follows as before. Notice that we need  prove it for $i = k$, since it will imply, along with \Cref{lem:skip_query}, the statement for every $1 \leq i < k$.

Let $E$ be the event where $\Sim$ is successful and does not have a non-abort query to $\Phi_k$. We show that if there exists a polynomial $p$ such that
\begin{align}\label{eq:bound-success}
    \Pr[E] > \frac{1}{p(\lambda)},
\end{align}
then we can construct an adversary $\calA$ that is able to clone a random subspace state $\ket{A}$ with polynomially-many queries to the verification oracle $U_A$, contradicting \Cref{lem:money}.  

For that, we will use the zero-knowledge property of the protocol that implies that for every distinguisher, the output of $\Sim^{\channel(V')}$ is indistinguishable from the output of $V'$ while interacting with the honest prover. In particular, we will consider $V'$ where $S_{i+1} = A$ and distinguisher $\mathcal{D}$, using an oracle to $U_A$, checks if the output register ${\bf Y}$  of $\Sim$ is indeed $\ket{A}$ (which is always the case for $V'$). 

$\calA$ receives some state $\ket{A}$, and it proceeds as follows: 
$\calA$  puts $\ket{A}$ on register ${\bf M}$ aside, and simulates $\Sim$ with oracle access to the channels
  $\Phi'_1,...,\Phi'_{k}$, where $\Phi'_j = \Phi_j$ for $j \ne k$.
  Whenever $\Sim$ queries $\Phi'_{k}$, $\calA$ uses the oracle simulates $\Phi_k$ up to step $6$ (where $V'$ runs the original verifier $V$). If $\Phi_k$ did not abort up to this point, then $\calA$ aborts. Otherwise, $\calA$ answer this query with abort and continues the execution of $\Sim$. If $\Sim$
  aborts, then $\calA$ also aborts. Finally, if $\Sim$ terminates successfully, $\calA$ outputs the register ${\sf Y}$ of $\Sim$'s output, along with the state $\ket{A}$ stored in ${\bf M}$. 
  
  Let $F$ be the event where the output register ${\bf Y}$ of $\Sim$ projects into $\ket{A}$. In order to $\calA$ to output two copies of $\ket{A}$ we need that $E \wedge F$ hold.
  
  By the zero-knowledge property, we have that 
      $\Pr[F] > 1 - \negl(\lambda)$. This fact, along with \Cref{eq:bound-success}, gives us that
$$ \Pr[E \wedge F] \geq \Pr[E] + \Pr[F] -1 \geq \frac{1}{p(\lambda)} - \negl(\lambda) = \frac{1}{\poly(\lambda)}.$$

Such $\calA$ contradicts \Cref{lem:money}, and therefore \Cref{eq:bound-success} is false.
\end{proof}

\begin{lemma}
Except with $\negl(\lambda)$ probability, for no $i$, for some $i$, $\Sim$ has a non-abort query to $\Phi_i$ where the input value on register ${\bf Z}$ is different to the output value on register ${\bf Z}$ of a non-abort query to $\Phi_{i-1}$.
\end{lemma}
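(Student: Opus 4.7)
The plan is to reduce to the existential unforgeability (EUF-CMA) of the signature scheme $(\gen,\sign,\verify)$ that is hardwired inside $V'$. Suppose, for contradiction, that with non-negligible probability $\eps$ there is some round $i$ on which $\Sim$ makes a non-abort query to $\Phi_i$ whose input on register ${\bf Z}$ differs from the output on register ${\bf Z}$ of every non-abort query to $\Phi_{i-1}$. Invoking items $1$--$3$ of \Cref{lem:structure-simulator} (each of which fails with only negligible probability), we may assume that $\Sim$ makes at most one non-abort query to each channel $\Phi_j$ and that these non-abort queries occur in the order $j=1,2,\ldots$. Denote the unique output on registers $({\bf Z},{\bf T})$ of the non-abort query to $\Phi_j$ (if it exists) by $(c_j,t_j)$; the event above then becomes: the non-abort query to $\Phi_i$ has input $c' \ne c_{i-1}$ on register ${\bf Z}$. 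Recall, as in the proof of the main claim, that we may assume ${\bf Z},{\bf T}$ are measured in the computational basis before each query, so $c'$ and $t'$ are classical values.

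I would construct an EUF-CMA adversary $\mathcal{B}$ as follows. $\mathcal{B}$ receives $vk_{\text{sign}}$ and oracle access to $\sign(sk_{\text{sign}},\cdot)$; it samples fresh subspaces $S_0,\ldots,S_k$ and a key $sk_{\text{enc}}$ and internally simulates $\Sim$'s interaction with $\channel(V')=(\Phi_1,\ldots,\Phi_k)$, deferring to the signing oracle whenever $V'$ needs to produce a signature. It records in a list $\mathcal{L}$ every message submitted to the signing oracle, and it picks $i^*\in[k]$ uniformly at random ahead of time. At the non-abort query made by $\Sim$ to $\Phi_{i^*}$ with input $(c',t')$ on $({\bf Z},{\bf T})$, $\mathcal{B}$ halts and outputs $(c',t')$ as its forgery. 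Since the signature-verification step inside $\Phi_{i^*}$ guarantees that $t'$ verifies on $c'$ under $vk_{\text{sign}}$, this is a valid EUF-CMA forgery as long as $c'\notin\mathcal{L}$.

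The main obstacle---and the only technical subtlety---is that, a priori, $\Sim$ could re-use a ciphertext $c_j$ for some earlier round $j<i^*-1$; in that case $c'=c_j\in\mathcal{L}$ and the extracted pair is not a genuine forgery. I would handle this by binding the signed content to the round index: the design of $V'$ is entirely under our control, and the $O(\log k)$ extra bits are easily absorbed by $M=\omega(\log\lambda)$, so we augment the signed string to include the current round, i.e.\ the output of $\Phi_j$ carries the signature of $(j,c_j)$, and we modify $\Phi_{i^*}$ to verify the signature on $(i^*-1,c')$. With this round-binding, the only element of $\mathcal{L}$ tagged with round $i^*-1$ is $(i^*-1,c_{i^*-1})$, so $c'\ne c_{i^*-1}$ forces $(i^*-1,c')\notin\mathcal{L}$. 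Hence $\mathcal{B}$ produces a valid forgery with probability at least $\eps/k$, contradicting EUF-CMA of the signature scheme. Finally, the extension from strict to expected polynomial-time $\Sim$ is done by the same truncation argument used in \Cref{lem:skip_query}.
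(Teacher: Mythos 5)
Your proof takes the same high-level route as the paper's: reduce to EUF-CMA of the signature scheme by simulating $V'$ with a signing oracle and extracting a forgery from the offending query. However, you identify and close a genuine gap in the paper's argument. The paper's reduction claims that the event $D$ of the lemma is ``exactly'' the case when its forgery-extracting adversary succeeds, but this equivalence fails in precisely the scenario you flag: with probability bounded away from zero, $\Sim$ could feed a non-abort query to $\Phi_i$ whose ${\bf Z}$-input equals $c_j$ for some $j < i-1$. Such a $c_j$ already lies in the signing list $L$, so the paper's adversary does not output a forgery even though $D$ has occurred, and the inequality $\Pr[\text{forgery}]\ge\Pr[D]$ is not justified. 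Your remedy — binding the round index into the signed message, so that $\Phi_i$ verifies the signature on $(i-1,c')$ rather than on $c'$ alone — restores the implication cleanly: the only signed message tagged $i-1$ is $(i-1,c_{i-1})$, so $c'\ne c_{i-1}$ forces a genuine forgery. This is a sound modification of $V'$ (the other items of \Cref{lem:structure-simulator} and the argument around $\tilde{P}$ are unaffected, as they never inspect the signed payload). Two small remarks. First, guessing $i^*$ uniformly is unnecessary — as in the paper one can simply test at every oracle call whether the incoming $({\bf Z},{\bf T})$ contents constitute a forgery against the running list $\mathcal{L}$ and halt upon the first hit, avoiding the $1/k$ loss — but the guessing version is of course also correct. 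Second, your remark about ``the $O(\log k)$ extra bits being absorbed by $M$'' is slightly off-target: the round index is stored in the classical registers ${\bf Z},{\bf T}$, and the model places no bound on $V'$'s classical memory, so no quantum space is consumed at all. Neither point affects the correctness of your argument, which is in fact more careful than the paper's own treatment of this lemma.
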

\begin{proof}
As in \Cref{lem:skip_query}, we prove the lemma for strict polynomial-time $\Sim$ and the extension to expected polynomial-time $\Sim$ follows as before.

Let $D$ be the event where there exists some $i$ such that $\Sim$ has a non-abort query to $\Phi_i$ where the input value on register ${\bf Z}$ is different to the output value on register ${\bf Z}$ of a non-abort query to $\Phi_{i-1}$.

We show that if there is a polynomial $p$ such that
\begin{align}\label{eq:bound-forgery}
Pr[D] > \frac{1}{p(\lambda)}
\end{align}
then we can construct an adversary $\calA$ that breaks the digital signature scheme.

$\calA$ receives a verification key $vk$ and has oracle access to the $\sign_{sk}$. $\calA$ initializes an empty list $L$ and proceeds as follows: 
$\calA$  simulates
$\Sim$ on $\Phi'_1,...,\Phi'_{k}$, where the query to $\Phi'_i$ behaves as follows. If $i = 1$, $\calA$ simulates the behaviour of $\Phi_1$. Let $c$ be the content of the output register ${\bf Z}$. $\calA$ adds $c$ to $L$ and reinitialize the register ${\bf T}$ with $\sign_{sk}(c)$.

On query to $\Phi'_i$ for $i > 1$, let $m$ be the content of the input register ${\bf Z}$ and $s$ be the content of the input register ${\bf T}$. $\calA$ checks if $s$ is a valid signature of $m$ and if $m$ does {\em not} appear in $L$. If the check passes, then $\calA$ outputs $(m,s)$. Otherwise, $\calA$ continues the simulation of $\Phi_i$. 
Let $c$ be the content of the output register ${\bf Z}$. $\calA$ adds $c$ to $L$ and reinitialize the register ${\bf T}$ with $\sign_{sk}(c)$.

If $\Sim$ terminates, then $\calA$ aborts.

We notice that whenever $\calA$ does not abort, it outputs a pair of message and signature $(m,s)$ such that $m$ was not and input to a signing query, and therefore $\calA$ is able to forge a signature for $m$.

To finish the proof, we need to bound the abort probability of $\calA$. We notice that $D$ is exactly the case when  $\calA$ does not abort, and by \Cref{eq:bound-forgery}, this happens with probability at least $\frac{1}{p(n)}$. 
\end{proof}

\printbibliography

\end{document}